\newtheorem{Lemma}{Lemma}[section]
\newtheorem{Theorem}{Theorem}
\newtheorem{Proposition}[Lemma]{Proposition}
\newtheorem{Corollary}[Lemma]{Corollary}
\newtheorem{Remark}[Lemma]{Remark}
\makeatletter\@addtoreset{figure}{section}\makeatother
\makeatletter \@addtoreset{equation}{section} \makeatother
\newcommand{\R}{\mathbb{R}}
\def\Re{\mathop{\mathrm{Re}}}
\newcommand{\rmO}{\mathrm{O}}
\newcommand{\rmo}{\mathrm{o}}
\newcommand{\rme}{\mathrm{e}}
\newcommand{\rmi}{\mathrm{i}}
\newcommand{\eps}{\varepsilon}
\newcommand{\Su}{\mathcal{S}_U}
\newcommand{\Sd}{\mathcal{S}_D}
\newcommand*{\defeq}{\mathrel{\rlap{%
                     \raisebox{0.3ex}{$\m@th\cdot$}}%
                     \raisebox{-0.3ex}{$\m@th\cdot$}}%
                     =}
\newcommand{\lindex}{\ell}
\newsavebox{\@brx}
\newcommand{\llangle}[1][]{\savebox{\@brx}{\(\m@th{#1\langle}\)}%
  \mathopen{\copy\@brx\kern-0.5\wd\@brx\usebox{\@brx}}}
\newcommand{\rrangle}[1][]{\savebox{\@brx}{\(\m@th{#1\rangle}\)}%
  \mathclose{\copy\@brx\kern-0.5\wd\@brx\usebox{\@brx}}}
\definecolor{Green}{rgb}{0.,0.4,0.}
\renewcommand{\leq}{\leqslant}
\renewcommand{\geq}{\geqslant}
\newcommand{\scflux}{\tilde{\theta}}
\newcommand{\Rmnum}[1]{\uppercase\expandafter{\romannumeral #1\relax}}
\def\XXint#1#2#3{{\setbox0=\hbox{$#1{#2#3}{\int}$}
     \vcenter{\hbox{$#2#3$}}\kern-.5\wd0}}
\font\tenbi=cmmib10 scaled \magstep1 \font\sevenbi=cmmib10 at 11pt
\font\fivebi=cmmib10 at 6pt \textfont\bifam = \tenbi
\begin{document}

\begin{center}

{\fontsize{17}{17}\fontfamily{cmr}\fontseries{b}\selectfont{    
Advection and autocatalysis as organizing principles for banded vegetation patterns
}}\\[0.2in]
Richard Samuelson$\,^1$, Zachary Singer$\,^2$, Jasper Weinburd$\,^3$, Arnd Scheel$\,^3$\\
\textit{\footnotesize $\,^1$Trinity College,  300 Summit St, Hartford, CT 06106 , USA}\\
\textit{\footnotesize $\,^2$Department of Mathematical Sciences, Wean Hall 6113, Carnegie Mellon University, Pittsburgh, PA 15213  ,USA}\\
\textit{\footnotesize 
$\,^3$University of Minnesota, School of Mathematics,   206 Church St. S.E., Minneapolis, MN 55455, USA}\\

\date{\small \today} 
\end{center}

\begin{abstract}
\noindent 
We motivate and analyze a simple model for the formation of banded vegetation patterns. The model incorporates a minimal number of ingredients for vegetation growth in semi-arid landscapes. It allows for comprehensive analysis and sheds new light onto phenomena such as the migration of vegetation bands and the interplay between their upper and lower edges. The key ingredient is the formulation as a closed reaction-diffusion system, thus introducing a conservation law that both allows for  analysis and provides ready intuition and understanding through analogies with characteristic speeds of propagation and shock waves. 
\end{abstract}
%

\section{Banded vegetation patterns --- phenomena, questions, and a simple model}

The formation of banded vegetation patterns has been understood as a self-organizing mechanism that allows vegetation to cope with scarcity of resources by leveraging beneficial effects of high density soil occupation; see for instance the reviews \cite{borgogno,meron2015nonlinear} and references therein. Present in semi-arid and some arid climates, where one might expect sensitive dependence of vegetation patterns on climate variations,  these patterns are surprisingly robust. Recent analyses of satellite images and aerial photographs show very little variation in the patterns over time spans as long as 50 years.  Modeling efforts in this context are particularly difficult, not only because of the intrinsic complexity of ecological systems, but because of the scarcity of time-dependent data that could be used to validate models; see however the recent study \cite{silber3}. On the other hand, the inherent fragility of vegetation, the danger and irreversibility of desertification, and the difficulty of controlled experiments, make it highly desirable to predict dynamics theoretically, in particular the dependence of vegetation densities on parameters and the possibility of tipping points. 

Our interest here is in a class of macroscopic models in the literature that track vegetation densities, nutrients, and water, possibly accounting for different roles of surface water and subsurface water. A common ingredient to many models is a growth rate for vegetation densities that increases with the density, encoding a symbiotic effect of plant growth due to a variety of factors such as reduced soil erosion, water binding, and protection from sunlight. In the Klausmeier model, this autocatalytic effect is reflected in  kinetic growth rates $b^2w$, where $b$ is a vegetation density (biomass) and $w$ measures water densities \cite{klaus}. In most models, autocatalytic growth is supplemented with linear death rates $-b$, and a variety of source and transport terms, modeling rain fall, water evaporation, etc. Models for spatial transport vary from simple advective transport and diffusion of water paired with diffusive spread of vegetation, to modeling porous media flow and nonlocal dispersal of plant seeds; see for instance \cite{rietkerk,meron,rietkerk2}. From a mathematical point of view, the analysis of such models is often focused on Turing-type linear stability analysis, predicting the spontaneous formation of periodic structures with a wavenumber given through a linear maximal growth calculation. More refined methods then allow one to study transitions between vegetation bands, spots, and gaps in uniform vegetation; see for instance \cite{godwa1,silber,lejeune,meron2}. Here, one envisions that the small amplitude variations in vegetation density found from a weakly nonlinear analysis near certain thresholds predict well the dynamics and patterns far from this threshold, with possibly large variations of densities and steep gradients, as often observed in nature. A technically complementary analysis focuses on separation of spatial scales as a means of systematically building and understanding spatial patterns, exploiting for instance disparities in spatio-temporal scales for water transport versus biomass evolution; see for instance \cite{sewalt,siero}. In a different direction, the actual formation process of banded patterns may well have a crucial role in the selection of banded patterns: colonization through spreading rather than emergence after a spontaneous and uniform change in the environment can produce quite different resulting patterns; see for instance \cite{sherratt1,sherratt2}.

From this technical point of view, our effort here can be seen as providing a different building block for the analysis of patterns in such systems, seeking the simplest model that can yet reproduce many of the complex patterns observed. Starting with the understanding of such a bare-bones model, we hope that one can more systematically argue for the relevance of more complex processes for the phenomena observed. Our focus is on the formation of banded patterns in a uniformly sloped environment, in contrast to many of the above studies. We will comment only briefly in our discussion on the equivalent analysis in the absence of advection. 

\paragraph{Our model.}
To set up our model, we encode the state of the system through two variables, $b$ and $w$, that represent biomass bound to the soil and nutrients dissolved in water, hence subject to advection. Transport of biomass $b$ is diffusive with rate $d_b$, while nutrients  are simply advected with constant speed $c$ determined in particular by surface slope. Kinetics are as simple as possible, with a single rate function $r(b,w)$ modeling the conversion of nutrients in water $w$ to biomass $b$ bound to the soil, 
\[
r(b,w)=\alpha b^2w-\mu b,
\]
with positive rate constants $\alpha$ for the autocatalytic effect on growth and $\mu$ for the mortality. The resulting system of partial differential equations then is
\begin{align}
 b_t&=d_b \Delta b + \alpha b^2 w - \mu b,\notag\\
 w_t&=cw_x -\alpha b^2 w + \mu b,\label{e:rd}
\end{align}
where $b=b(t,x)$, $w=w(t,x)$, and subscripts denote partial derivatives.
Note that we posited constant speed of advection, corresponding to an idealized terrain with constant slope, where water is being advected towards negative $x$. 

The main difference to the Klausmeier model \cite{klaus} is the absence of source terms and the introduction of a conservation law. Specifically, Klausmeier's model adds a source term $A$ for rain fall and an evaporation term $-Bw$ into the $w$-equation, but does not take reinsertion of nutrients after decay, the term  $+\mu b$, into account. We model nutrients and biomass, which we presume are conserved, either as vegetation bound to the soil, or as nutrients advected with water. As a consequence, we obtain  the conservation law 
\begin{equation}\label{e:rdcl}
 \partial_t \int_\Omega (b+w)= \int_{\partial\Omega} \left(d_b\partial_\nu b  + c\,\mathrm{sign}\, (\nu\cdot e_x)\right),
\end{equation}
that is, the sum of nutrients and biomass changes only due to diffusive and advective fluxes.

Scaling time, space, and $(b,w)$, one can readily obtain $d_b=1$, $\alpha=1$, and $\mu =1$, arriving at
\begin{align}
 b_t&=\Delta b +  b^2 w -  b,\notag\\
 w_t&=cw_x -b^2 w + b.\label{e:rds}
\end{align}
We emphasize that we do not claim that water is conserved on time scales relevant for the evolution of vegetation patterns --- evaporation and rain fall clearly play significant rolls in the dynamics. We rather think of $w$ as the concentration of certain nutrients contained in water, and released back upon plant decay, with the autocatalytic plant growth as a key but clearly not the sole ingredient to the ecological dynamics. Somewhat more generally, the equations describe simple autocatalytic mass-action kinetics $2B+W\to 3B$ with rate $wb^2$, in a reactor where the reactant $W$ is supplied through advection in a liquid or gaseous phase and the product $B$ is insoluble, subject to (slower) diffusion. As opposed to general reaction-diffusion models, this model describes a closed reactor, where reactants are supplied through an explicitly modeled flow process. We comment briefly on related study of such closed reaction processes in biology, ecology, and physical chemistry in the discussion section \cite{gohmesuroscheel2,gohmesuroscheel,morita,kotzagiannidis,morita2,morikeshet,champ2,champneys}.

\paragraph{Main contributions.} Our main results exemplify two features of \eqref{e:rds}. First, the core  part of this paper contains a comprehensive analysis of \emph{traveling waves} to \eqref{e:rds}. Despite its rather simple structure with few parameters and the constraint of a conservation law, the model allows for interesting complexity. Simultaneously, it is amenable to an almost complete analytical description and therefore quite explicit predictions. Second, we add interpretation to the traveling-wave analysis and the more general dynamics of the equation by providing a partly rigorous, partly formal analogy to the dynamics of \emph{scalar viscous conservation laws}, relating patterns observed here to Riemann problems, shocks, and rarefaction waves. 

It turns out that, due to an additional scaling in the traveling-wave equation, one can roughly characterize traveling waves in terms of a total flux of biomass and nutrients, only.  This flux is typically equivalent to a prescribed uphill concentration $w_+$. Our main results can be informally summarized as follows. 

\begin{enumerate}
 \item \emph{Small disturbances} of vegetation zones move uphill with positive group velocity; disturbances of vege\-tation-free zones are advected downhill; see Figure \ref{f:1a}.
 \item \emph{Vegetation zones at low densities} are unstable against sideband instabilities, leading to spatially disorganized patterns; see Figure \ref{f:d} and Figure \ref{f:1r}.
\item \emph{Upper edge:} a uniform vegetation zone can spread uphill into a vegetation-free zone; see Figure \ref{f:1b}.
 \item \emph{Lower edge:} a vegetation-free zone can spread uphill into a vegetation zone; see Figure \ref{f:1b}.
 \item \emph{Passive edges:}  upper and lower edges move slowly, with the group velocity of the vegetation state, for  high $w_+<w_\mathrm{u}^*$ and sufficiently high $w_+>w_{\lindex}^*$, respectively.
 \item \emph{Single bands} can move uphill for sufficiently large $w_+>w_\mathrm{s}^*$ (with speeds significantly lower than upper edges); see Figure \ref{f:d}.
 \item \emph{Single gaps} can spread uphill for intermediate ranges of $w_+$.
 \item \emph{Periodic vegetation bands:}  exist in a parameter region slightly larger than single bands; see Figure \ref{f:d}.
\end{enumerate}

From the point of view of scalar viscous conservation laws, lower and upper edges are \emph{undercompressive shocks} that act as organizing centers in Riemann problems, possibly with a glancing mode. Many more complex structures can be understood as bound states between these undercompressive shocks and simpler Lax shocks. We also note that our results present the possibly simplest explanation of the somewhat counterintuitive uphill motion of vegetation bands, quantified recently in \cite{silber3}, against the direction of advective transport, by relating the transport to to a simple calculation of group velocities.

\begin{figure}
 \includegraphics[width=0.3\textwidth]{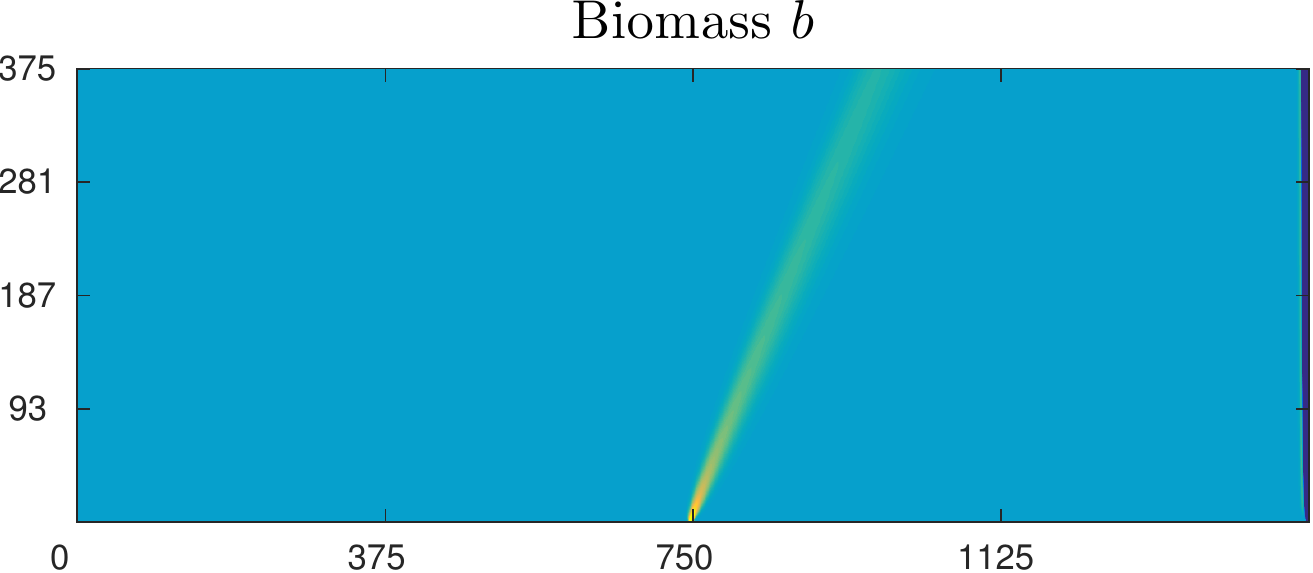}\hfill
\includegraphics[width=0.34\textwidth]{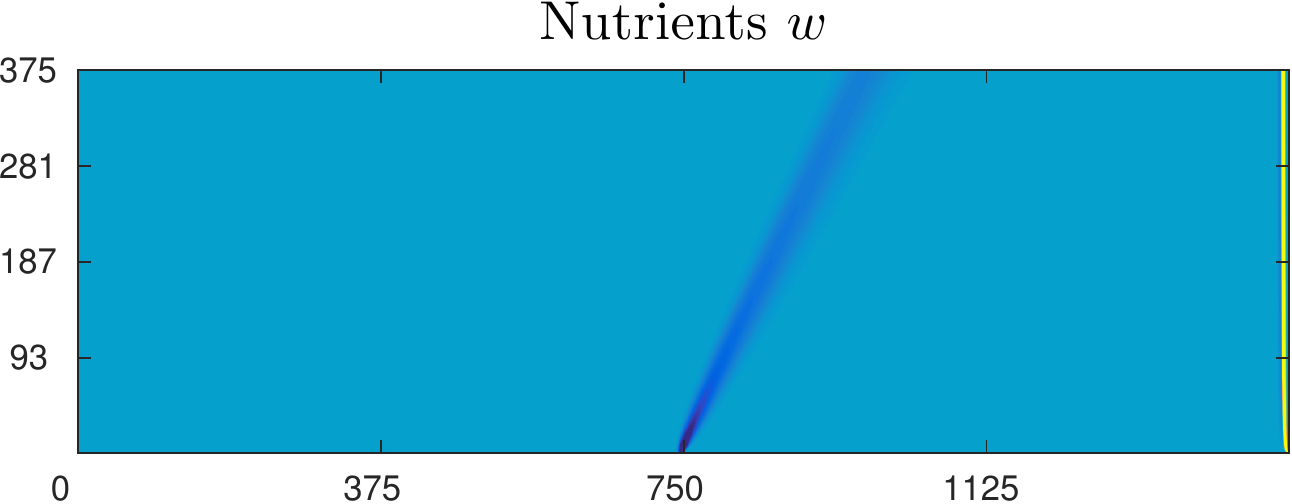}\hfill
 \includegraphics[width=0.3\textwidth]{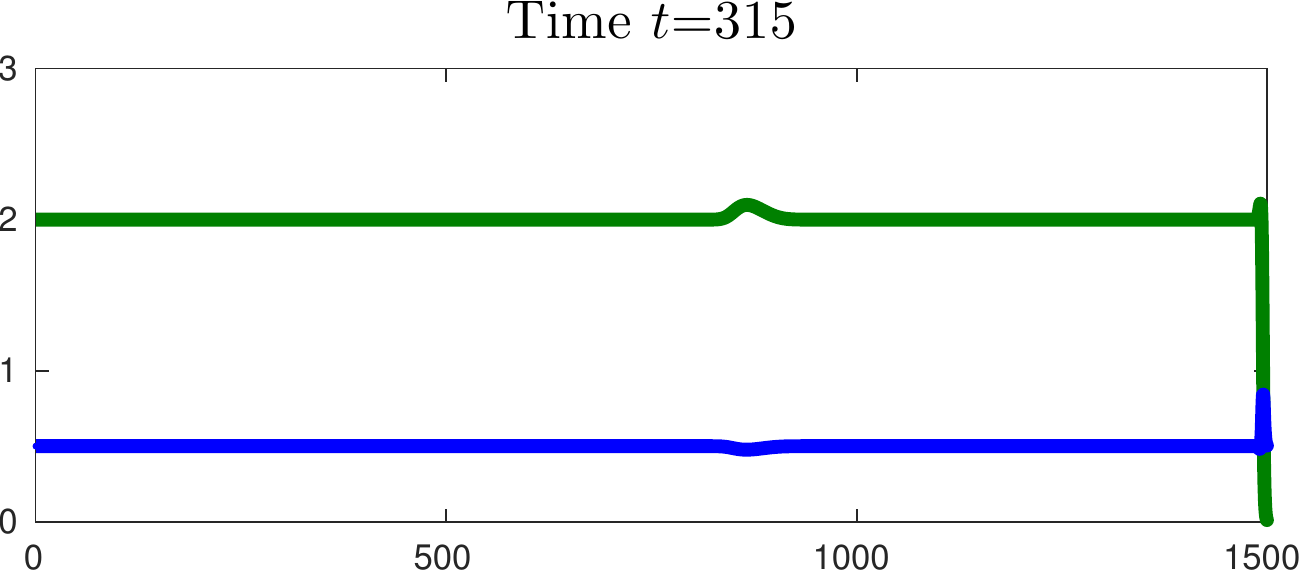}\\
 \includegraphics[width=0.3\textwidth]{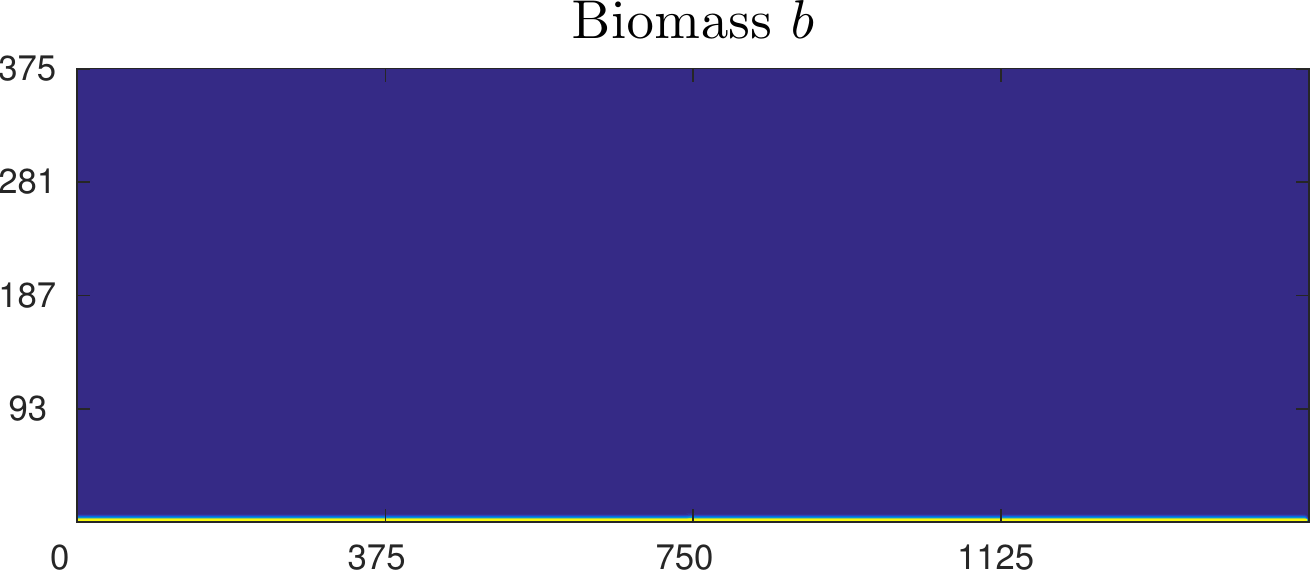}\hfill
\includegraphics[width=0.34\textwidth]{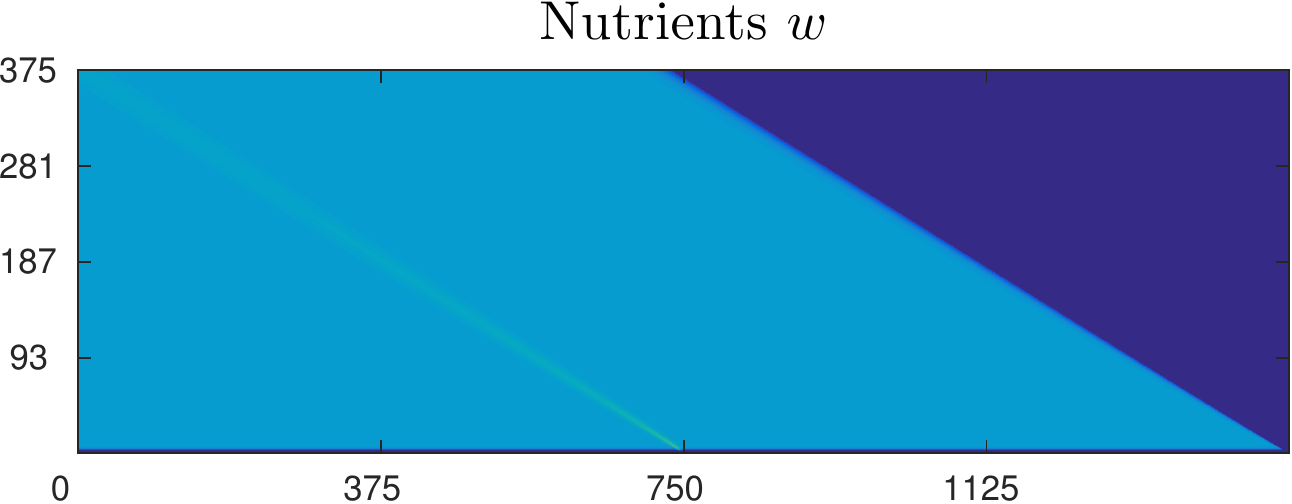}\hfill
 \includegraphics[width=0.3\textwidth]{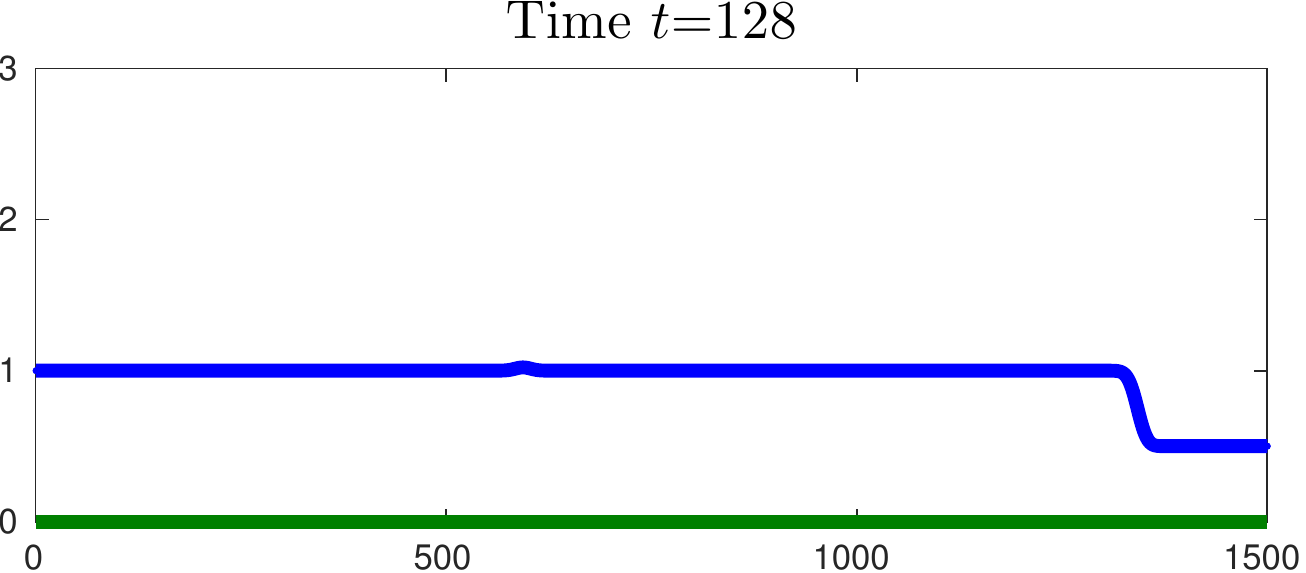}
 \caption{Upstream transport in vegetation zones (top)  and downstream transport in desertified zones (bottom), illustrating (i) and (ii). All numerical simulations carried out on a fixed grid with $dx=0.1$, using upwind first-order  discretization of the advection term, second-order finite differences for the diffusion, and \textsc{Matlab}'s \textsc{ode15s} for time integration. Throughout, densities from dark blue (minimum) to yellow (maximum) in space-time plots (time pointing upward, advection of $w$ pointing to the left); snapshots on the right, with biomass (green) and nutrients (blue).  }\label{f:1a}
\end{figure}

\begin{figure}[b]
 \includegraphics[width=0.3\textwidth]{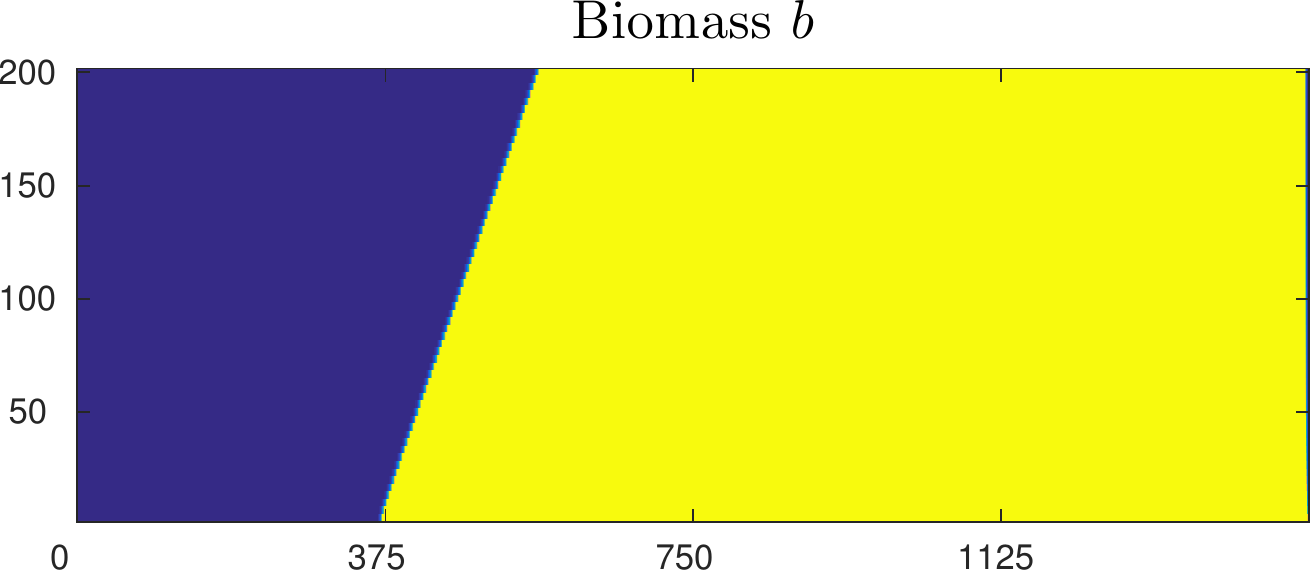}\hfill
\includegraphics[width=0.34\textwidth]{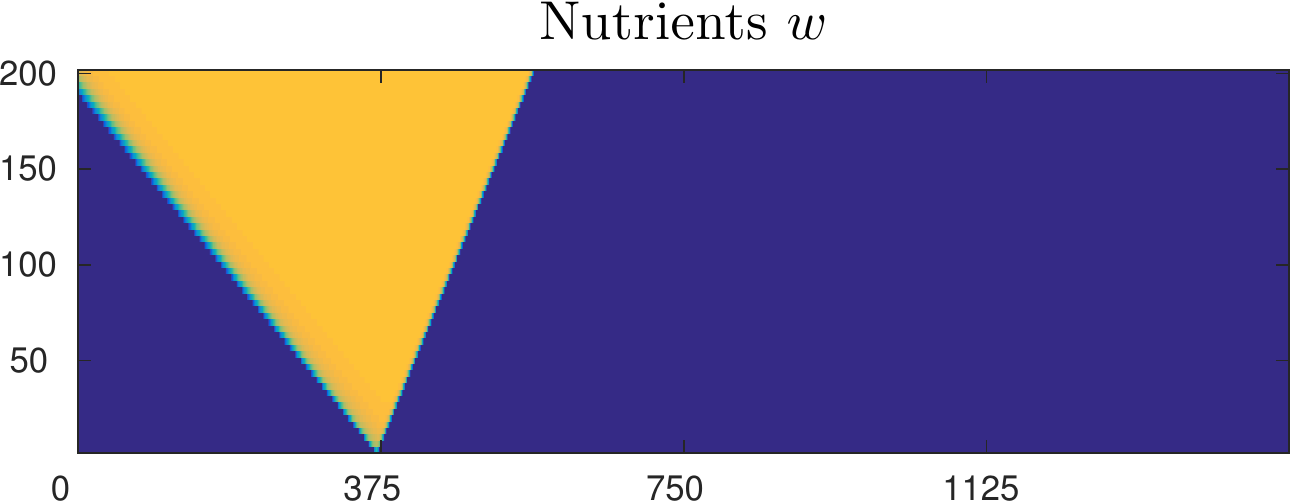}\hfill
 \includegraphics[width=0.3\textwidth]{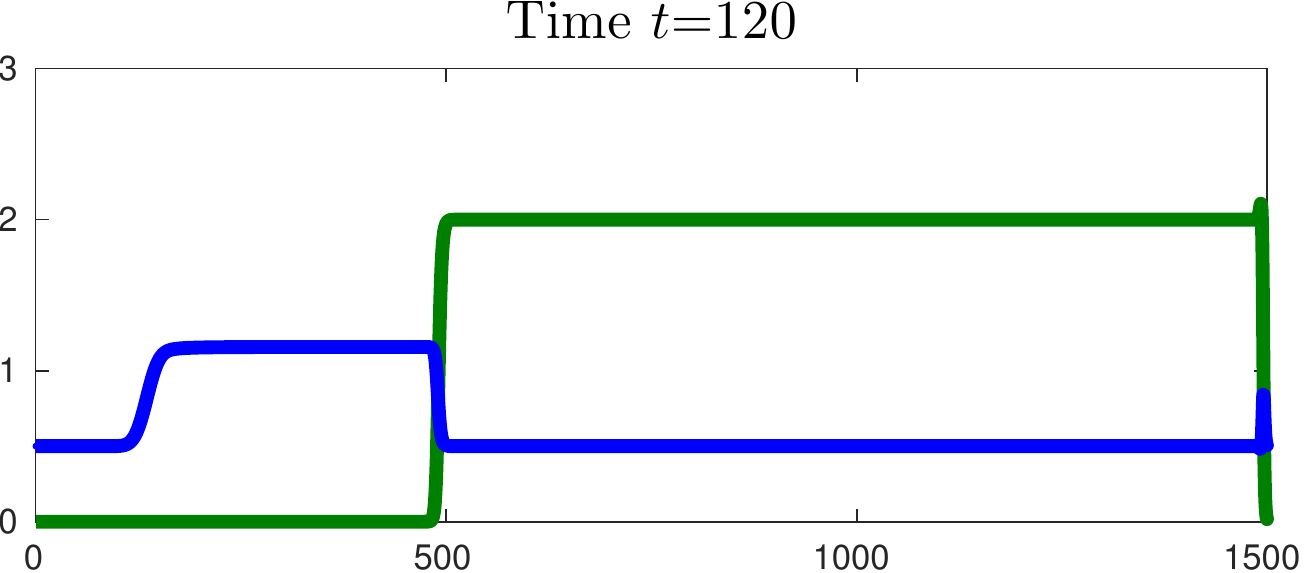}\\
 \includegraphics[width=0.3\textwidth]{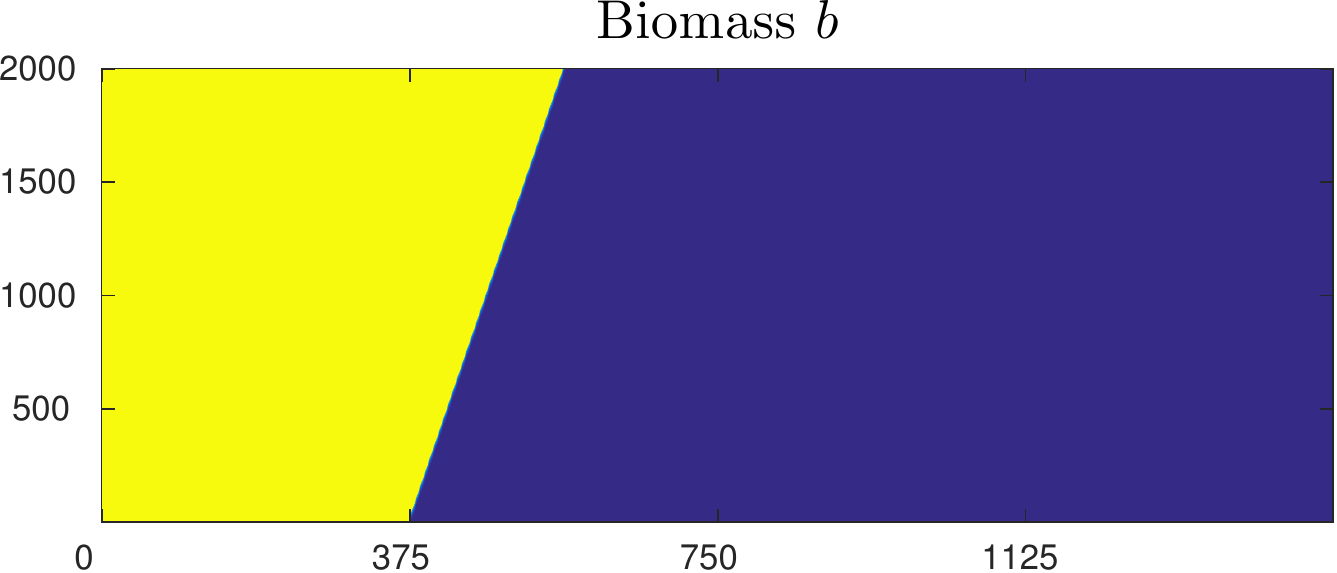}\hfill
\includegraphics[width=0.34\textwidth]{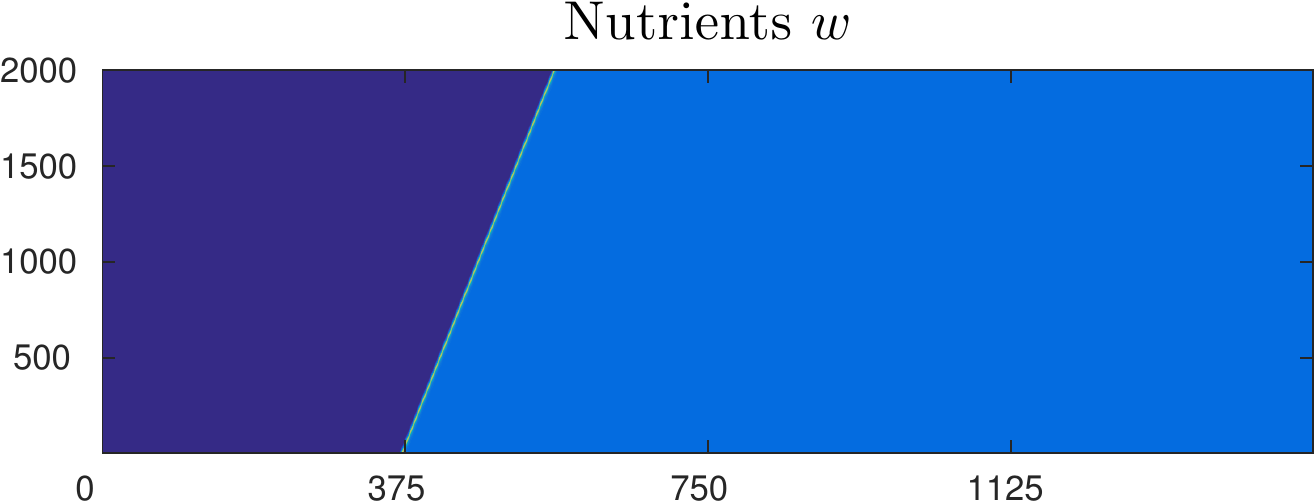}\hfill
 \includegraphics[width=0.3\textwidth]{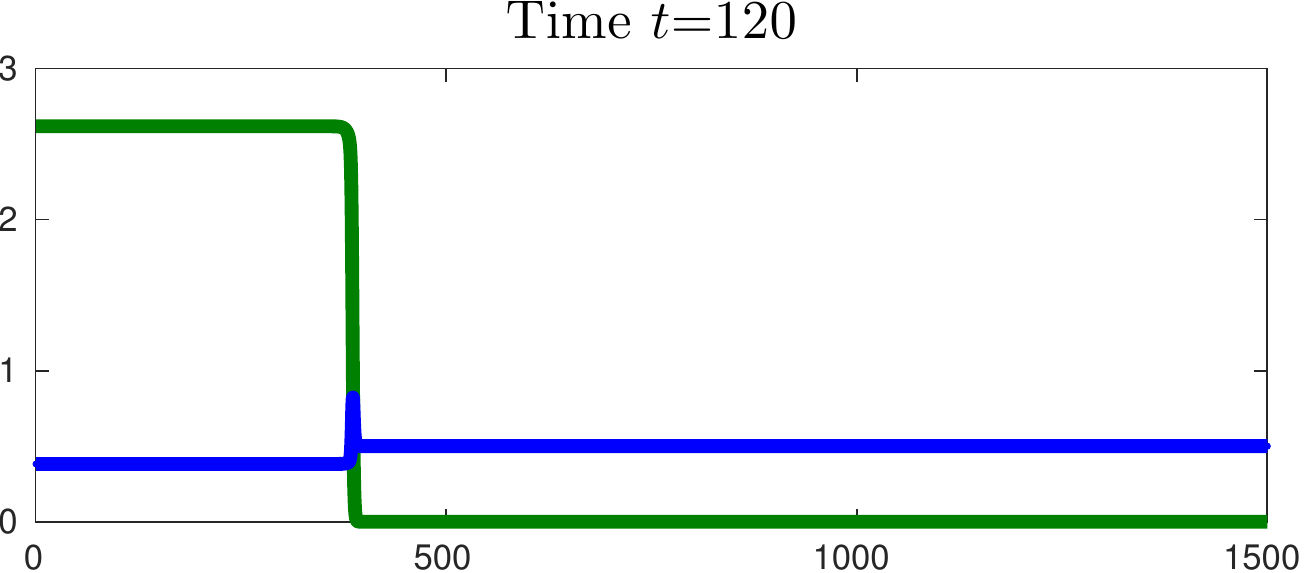}
 \caption{Upper (top)  and lower (bottom) edges of vegetation zones propagating uphill as described in (iii) and (iv).  }\label{f:1b}
\end{figure}

\begin{figure}
 \includegraphics[width=0.3\textwidth]{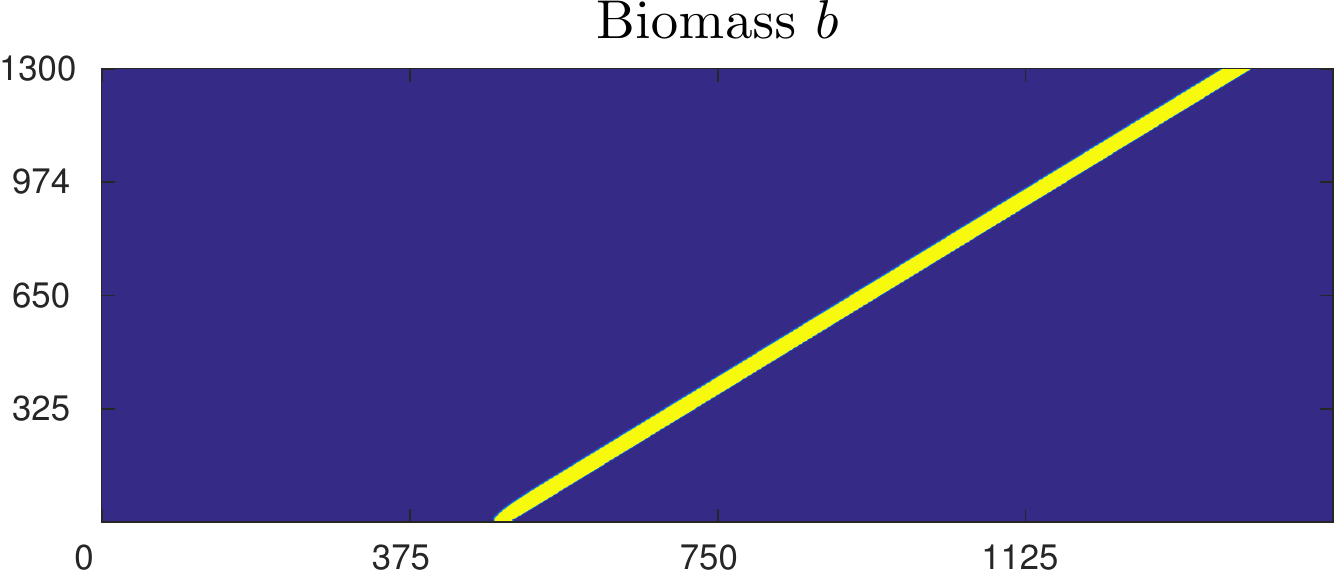}\hfill
\includegraphics[width=0.34\textwidth]{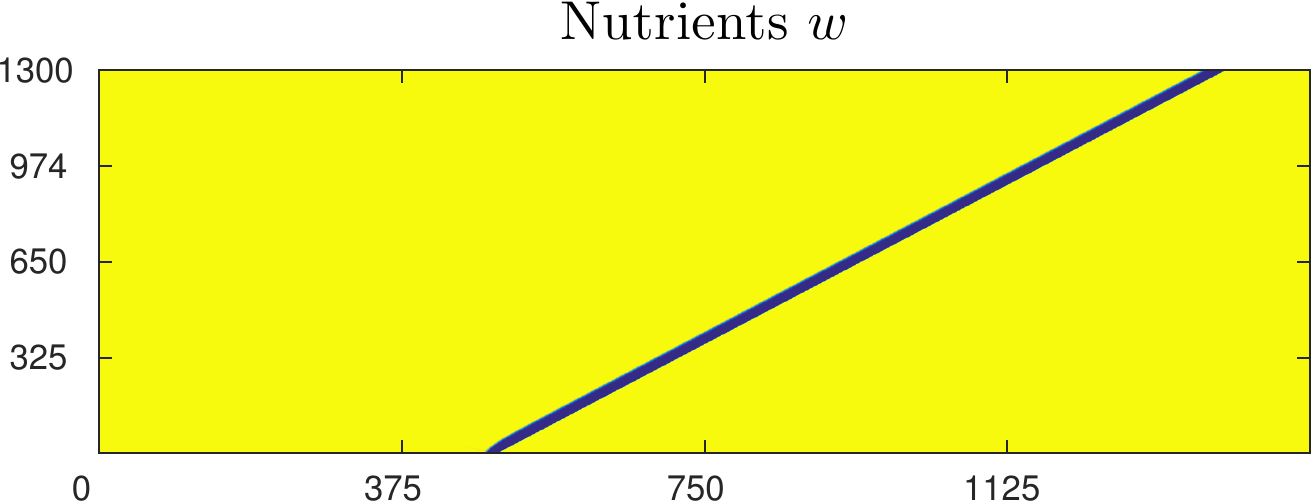}\hfill
 \includegraphics[width=0.3\textwidth]{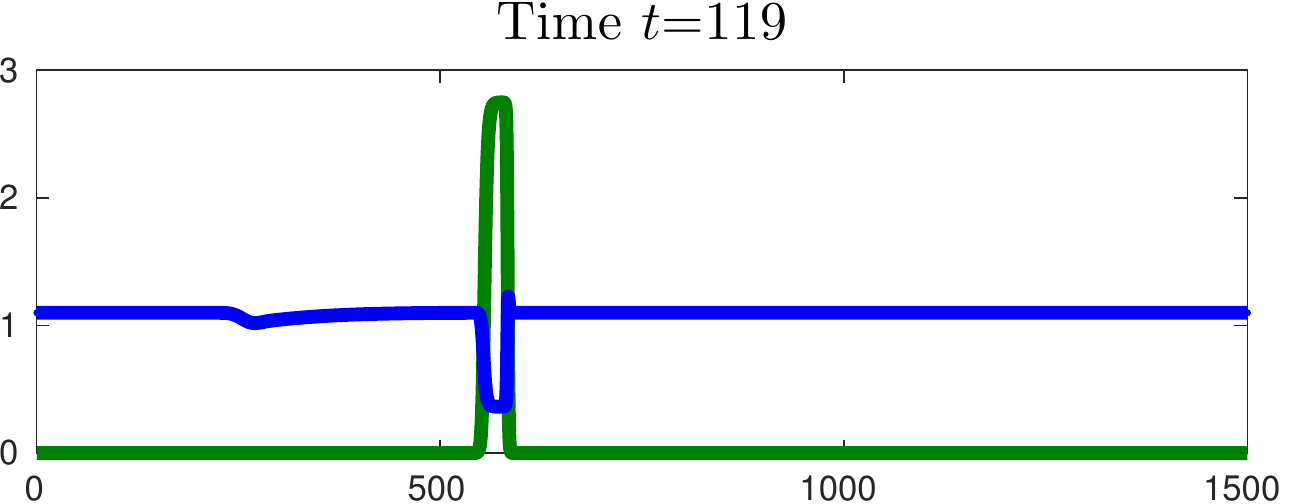}\\
 \includegraphics[width=0.3\textwidth]{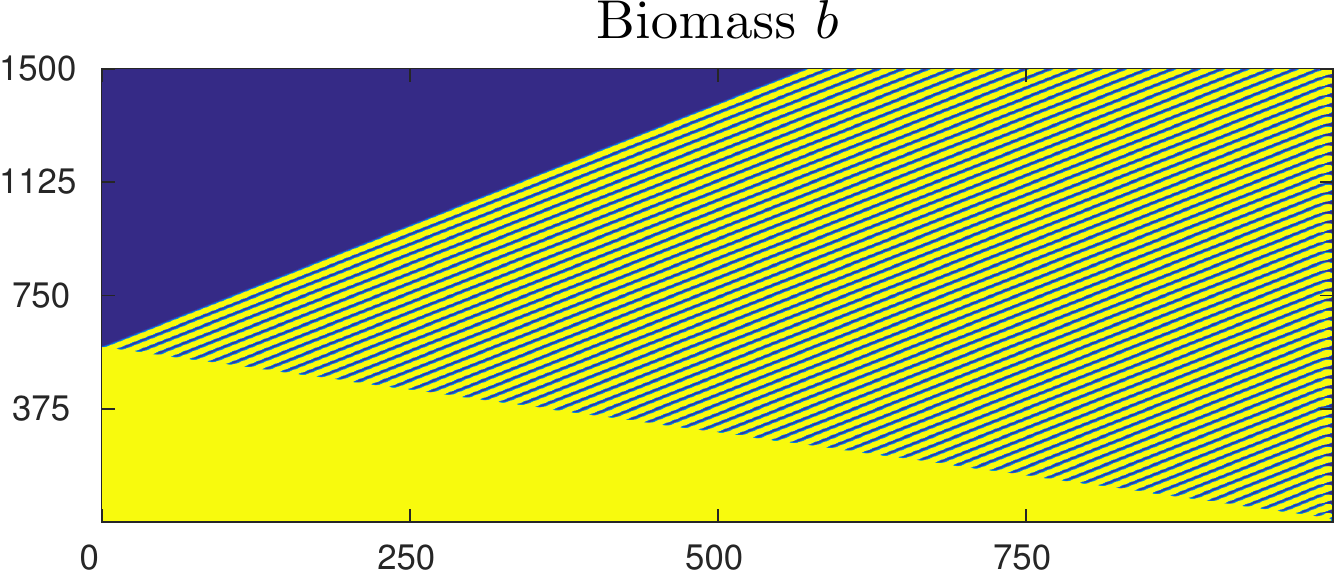}\hfill
\includegraphics[width=0.34\textwidth]{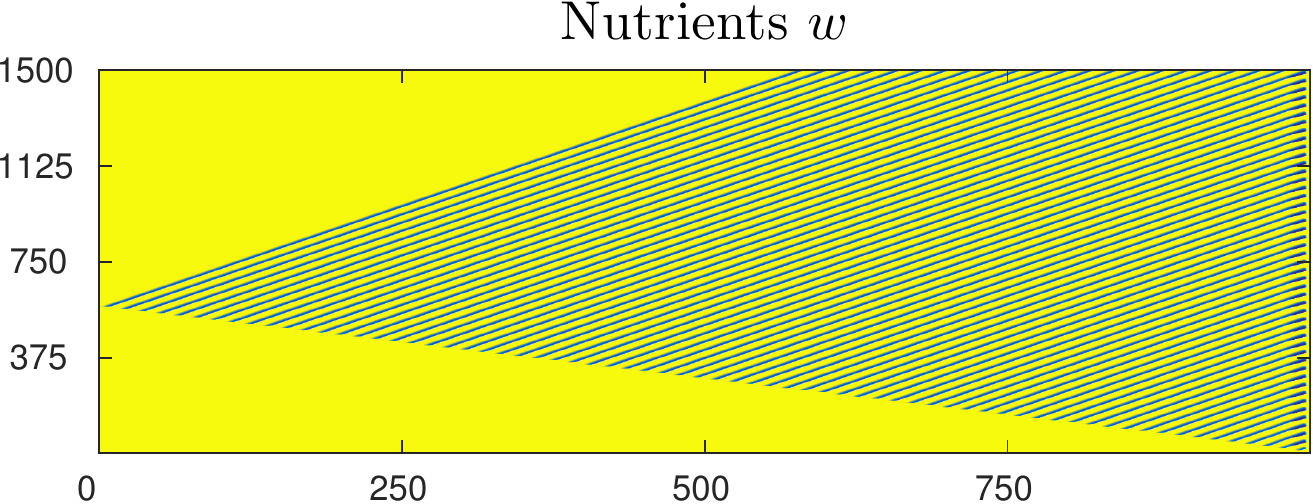}\hfill
 \includegraphics[width=0.3\textwidth]{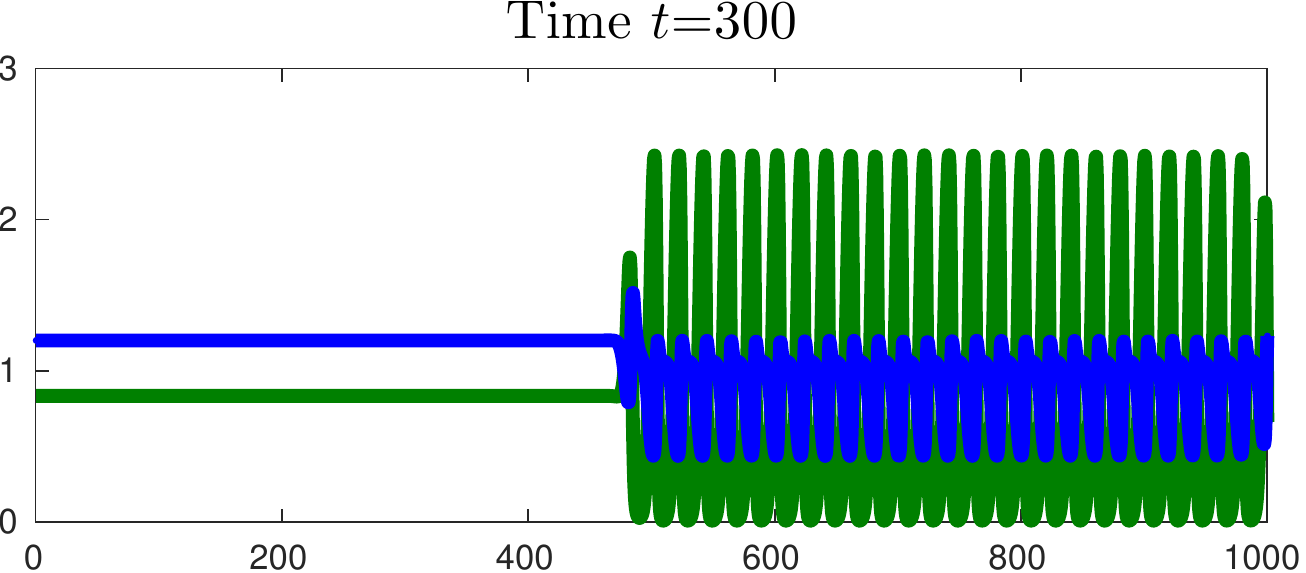} 
 \caption{Single vegetation bands, propagating uphill (top), (vi). 
Formation of regular periodic banded patterns, originating from a sideband unstable state through a perturbation at the right boundary (ii), (viii).}\label{f:d}
\end{figure}
%

%
%
%
%
%
%
%
%

\begin{Remark}[Conservation laws --- terminology]\label{r:cl} It is important to distinguish between the fact that our system \eqref{e:rd} possesses a ``conservation law'' \eqref{e:rdcl}, and the fact that we will view effective dynamics of \eqref{e:rd} as being conjugate in an appropriate sense to the dynamics of scalar viscous ``conservation laws''. In the latter sense, the term conservation law refers to the more narrow class of equations
\begin{equation}\label{e:scl}
 u_t=(d(u)u_x)_x -f(u)_x,\qquad u\in\R,\ x\in\R,
\end{equation}
with flux $f$ and viscosity $d$, whereas in the former sense, we are simply referring to the presence of a conserved quantity. Throughout, we will use the term conservation law to refer to the latter narrow class of equations. 
\end{Remark}

\paragraph{Outline.} We discuss the dynamics of the ODE and the associated PDE stability in Section \ref{s:ode}. Section \ref{s:cl} describes the connection with scalar viscous conservation laws. 
Section \ref{s:tw} contains our main results on traveling waves. We prove existence of heteroclinic orbits in Section \ref{s:p} and conclude with a discussion, Section \ref{s:d}.

\paragraph{Acknowledgment.} This work was supported through grant NSF DMS--1311740. Most of the analysis was carried out during an NSF-funded REU project on Complex Systems at the University of Minnesota in Summer 2017. The authors gratefully acknowledge conversations with Arjen Doelman and Punit Gandhi, who pointed to many of the referecnces included here and provided  many helpful comments and suggestions on an early version of the manuscript.

\section{Spatially constant equilibria and linear properties}\label{s:ode}

Spatially constant solutions satisfy
\[
 b_t=b^2w-b,\qquad w_t=-b^2w+b,
\]
with two curves of equilibria $\Gamma_0=\{b=0,w\geq 0\}$ and $\Gamma_1=\{bw=1,\,w>0\}$. The equilibria in $\Gamma_0$ and in $\Gamma_1\cap \{b>1\}$ are stable, equilibria with $0<b<1$ are unstable. For PDE stability, we consider the linearized equation at an equilibrium $(b_*,w_*)$ after Fourier-Laplace transform, $b,w\sim \rme^{\lambda t + \rmi (kx+\ell y)}$,
\begin{align}
 \lambda b&=-(k^2+\ell^2) b + (2b_*w_*-1)b + b_*^2 w,\notag\\
 \lambda w&= c\rmi k w - (2b_*w_*-1)b - b_*^2 w. \label{e:spec}
\end{align}
On the two stable branches, one finds two (explicit) eigenvalues $\lambda_{1/2}(k,\ell)$, with $\Re\lambda_2(k,\ell)<0$ for all $k$, and $\lambda_1(0,0)=0$, reflecting mass conservation as a neutral eigenvalue. One verifies that $\Re\lambda_1(k,\ell)\leq 0$ for all $k$ if and only if $\Re\lambda_{1,kk}(0,0)\leq 0$, that is, the long-wavelength expansion around the neutral mode determines stability. Expanding $\lambda_1$, one finds
\[
  \lambda_1(k)=-c_\mathrm{g}\rmi k - d_{\mathrm{eff},x} k^2  - d_{\mathrm{eff},y}\ell^2+\rmO(|k|^3+|\ell|^3),
 \]
 where $c_\mathrm{g}=-c$ and $d_\mathrm{eff}=0$ for $b_*=0$, and 
 \begin{equation}
  c_\mathrm{g}= 
  \frac{c}{b_*^2-1},\qquad  d_{\mathrm{eff},x}=
 \frac{b_*^2}{(b_*^2-1)^3} \left( (b_*^2-1)^2-c^2\right),\qquad  d_{\mathrm{eff},y}=b_*^2-1,
 \label{e:cg}
\end{equation}
for $b_*>1$. The notation $c_\mathrm{g}$ and $d_\mathrm{eff}$ refers to group velocity and effective diffusivities of long-wavelength modulations of total mass, and can be understood as coefficients in the reverse Fourier transform of the long-wavelength expansion, $\Phi_t=-c_\mathrm{g}\Phi+d_{\mathrm{eff},x}\Phi_{xx}+d_{\mathrm{eff},y}\Phi_{yy}$. We emphasize that from this simple calculation, we conclude that disturbances of vegetation-free states are advected ``downhill'' with speed $c$, as expected, while disturbances of the vegetation state $b_*>1$ are transported ``uphill'' with speed $c_g>0$. 

\begin{figure}[b]
\centering\includegraphics[width=0.7\textwidth]{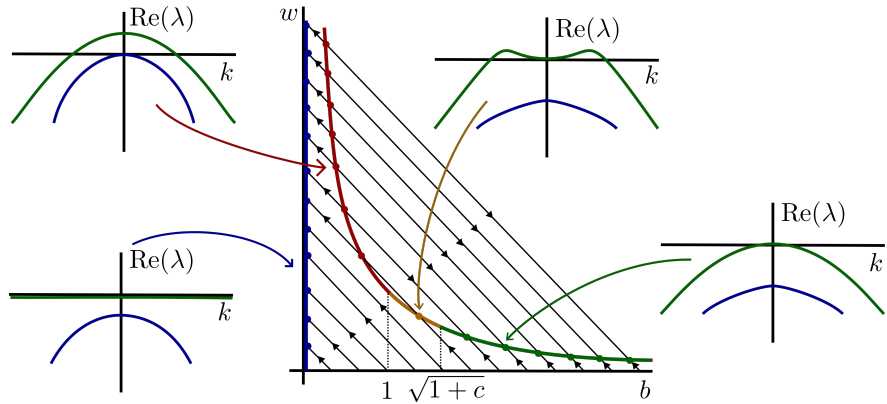}
\caption{Dynamics of the kinetics in the $b$-$w$-phase plane with equilibria and spectra of the linearization from \eqref{e:spec}. }\label{f:lin}
\end{figure}

As a consequence, spatially constant states are linearly (marginally) stable when 
\begin{equation}\label{e:st}
 d_{\mathrm{eff},x}>0, \quad \text{ that is, } \quad b>\sqrt{1+c};
\end{equation}
see Figure \ref{f:lin} for schematic plots of eigenvalues and Figure \ref{f:1r} for the evolution of instabilities. Equivalently, we see that vegetation states destabilize for large advection speeds.

We note that the group velocity diverges as $b_*\searrow 1$. The ``unphysical'' large group velocities are irrelevant since states with $c_\mathrm{g}>1$ are unstable; see \cite[\S 3.2.2.b]{Coullet2000} and \cite{rs} for a discussion of this phenomenon in the context of wave trains

In particular, for low levels of water flow, vegetation patterns are unstable and break up into disorganized localized states. Linear transport by group velocities is illustrated in Figure \ref{f:1a}, the sideband instability is shown in  Figure \ref{f:d} and Figure \ref{f:1r}. 

\begin{Remark}[Geometry and $c_\mathrm{g}$]\label{r:geom}
 One can more generally compute $c_\mathrm{g}$ from simple properties of the phase portrait of the kinetics in Figure \ref{f:lin}. For more general kinetics, $b_t=f(b,w),w_t=-f(b,w)$, suppose that there is a curve of equilibria $f((\gamma_b,\gamma_w)(\tau))=0$. One then readily computes, expanding the neutral eigenvalue in the linearization, 
 \[
  c_\mathrm{g}=-c\frac{\gamma_w'}{\gamma_b'+\gamma_w'}=-c\frac{\gamma'\cdot\left(\begin{array}{c}0\\1\end{array}\right)}{ \gamma'\cdot\left(\begin{array}{c}1\\1\end{array}\right)},
 \]
which is positive for vectors $\gamma'$ in the sector of width $\pi/4$ bordered by $\left(\begin{array}{c}1\\0\end{array}\right) $ and $\left(\begin{array}{c}1\\-1\end{array}\right)$ (and in the opposite sector, where however $d_\mathrm{eff}<0$). More directly, equilibria with uphill transport are characterized by null clines $w=h(b)$ with $0>h'>-1$. In words, the somewhat counterintuitive uphill migration of vegetation stems, in this sense of group velocities, from a somewhat counterintuitive inverse (but not too strongly so) equilibrium relation between nutrient supply and biomass: equilibrium states with higher biomass concentration correspond to smaller (free) nutrient concentrations. 
\end{Remark}

\begin{figure}[b]
 \includegraphics[width=0.3\textwidth]{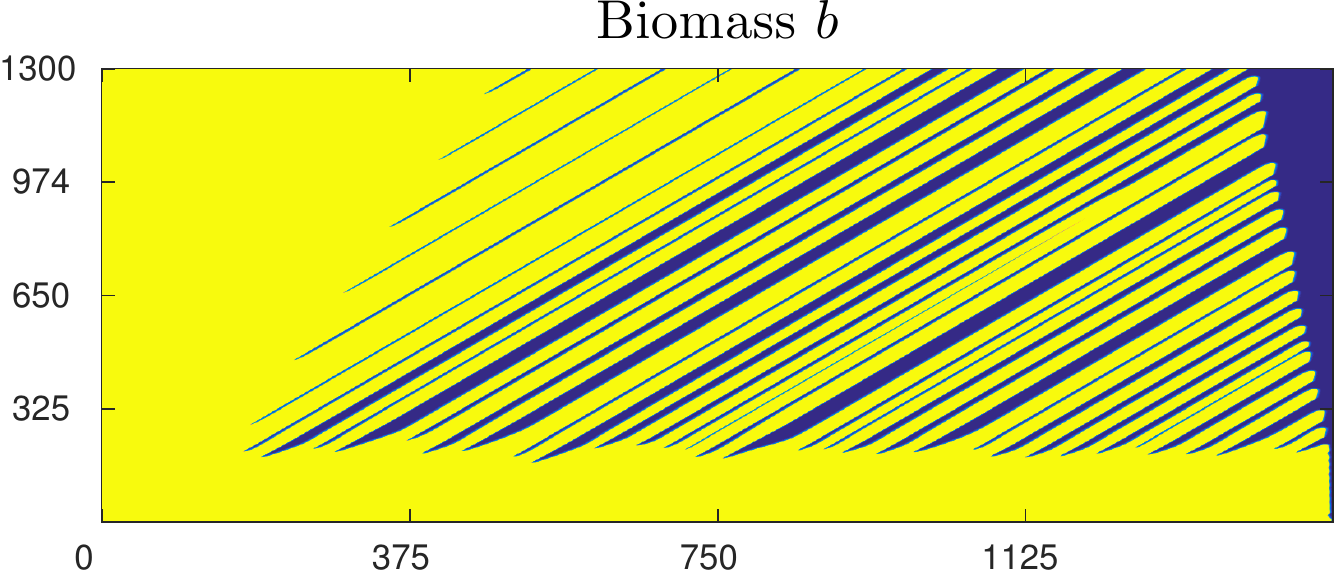}\hfill
\includegraphics[width=0.34\textwidth]{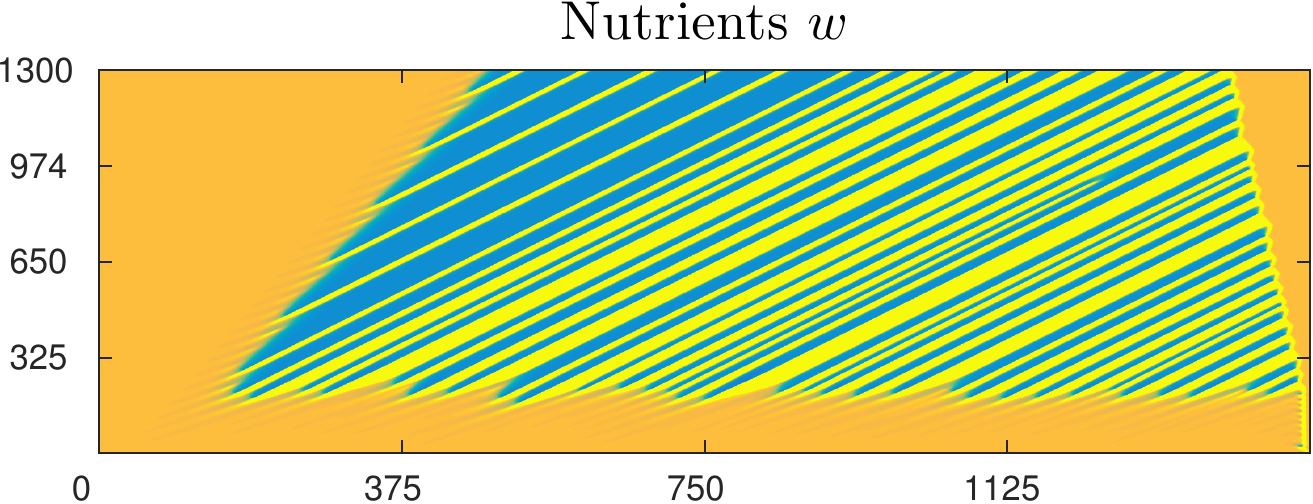}\hfill
 \includegraphics[width=0.3\textwidth]{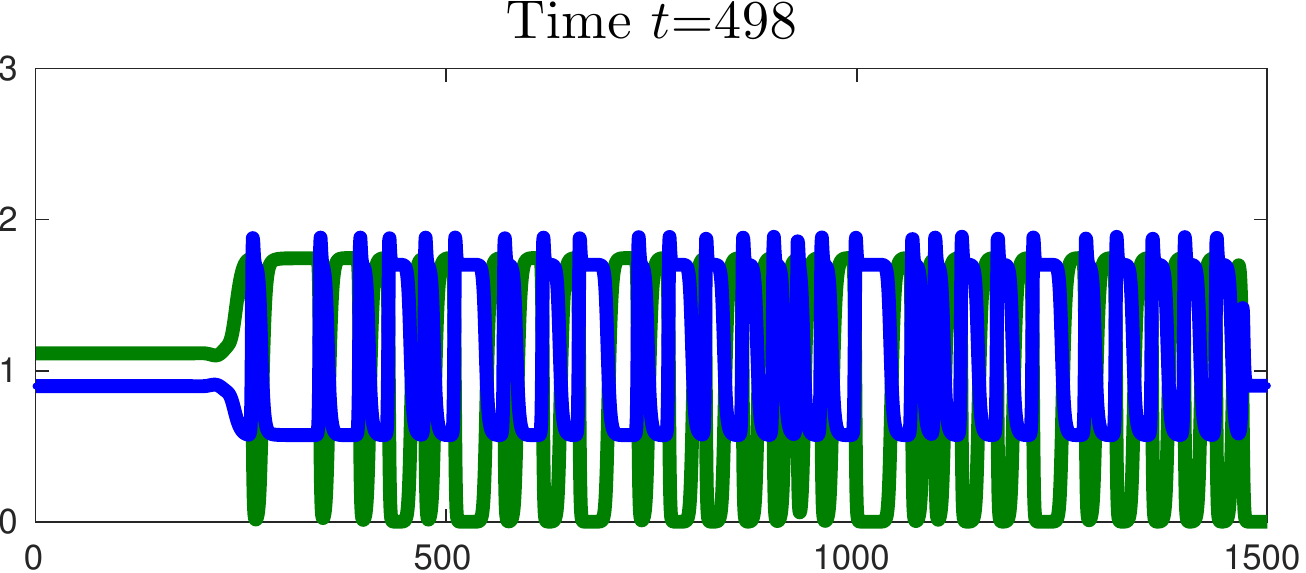}\\
 \includegraphics[width=0.3\textwidth]{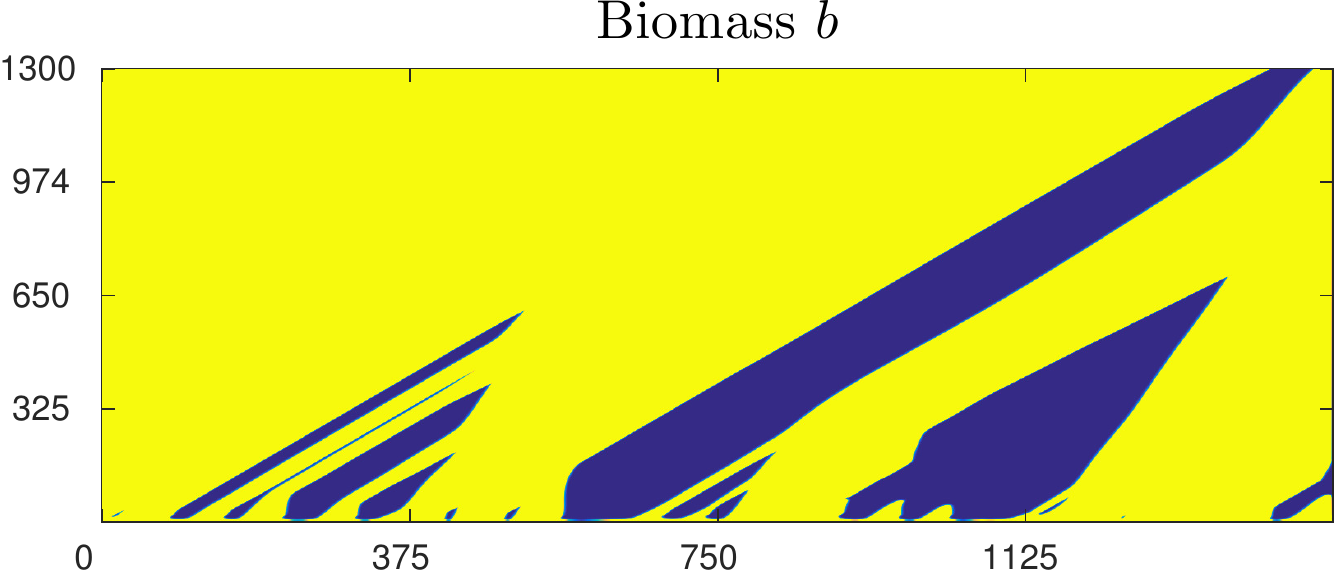}\hfill
\includegraphics[width=0.34\textwidth]{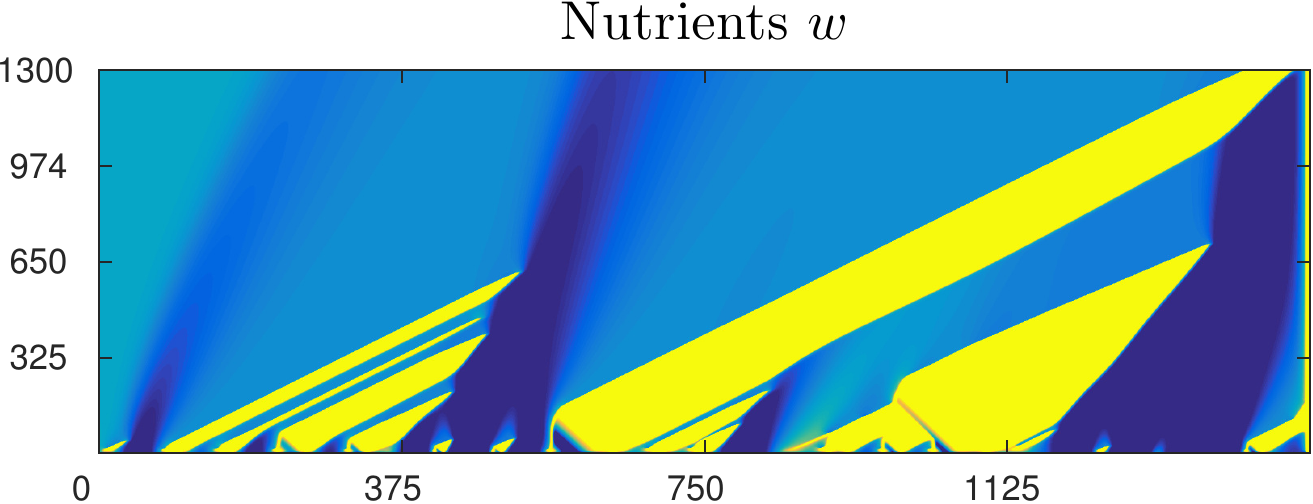}\hfill
 \includegraphics[width=0.3\textwidth]{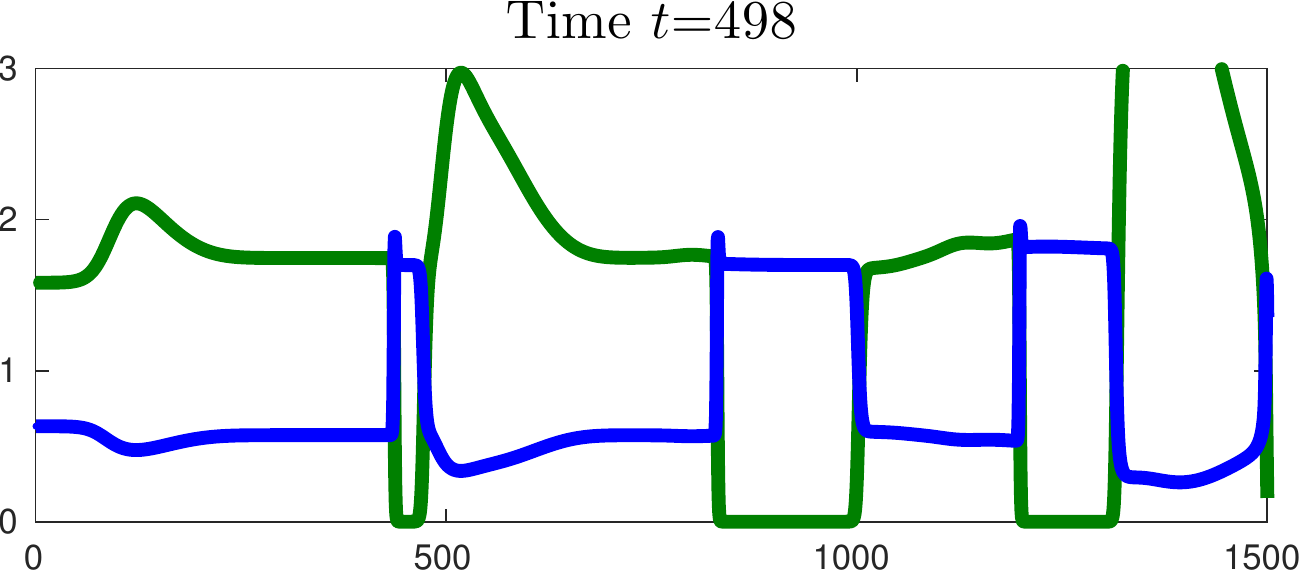} 
 \caption{Evolution of sideband instabilities (top) and typical patterns evolving from random initial data (bottom).}\label{f:1r}
\end{figure}
\section{The conservation law formalism}\label{s:cl}
The analysis so far can be reviewed from the point of view of viscous scalar conservation laws; see Remark \ref{r:cl}. Our goal now is to  explain how this analogy can be constructed formally and, to some extent rigorously. 

We first briefly recall features of scalar conservation laws that are mimicked in our system. We then show how to derive conservation law dynamics using a modulation approach, locally, and a formal reduction, globally. 

\paragraph{Scalar conservation laws.}

Dynamics of scalar conservation laws of the form \eqref{e:scl} can be most easily understood in terms of small disturbances of a constant state, $u(x)=u_0+\eps v_0(x)$, where $v(x)$ satisfies at leading order the convection-diffusion equation $v_t=d(u_0)v_{xx}-f'(u_0)v_x$. The localized initial condition $v_0(x)$ experiences linear transport with characteristic speed $f'(u_0)$ and diffusive decay with effective diffusivity. One can show that this convective-diffusive decay is preserved when taking into account higher-order terms in $\eps$. 

Beyond small amplitude, localized data, one describes dynamics in terms of Riemann problems, with initial conditions $u(x)=u_-$ for $x<0$, $u(x)=u_+$ for $u>0$. The values $u_\pm$ are propagated with characteristic speed $f'(u_\pm)$. In the case $f(u_-)>f(u_+)$, one typically observes a (unique) Lax shock, propagating with speed given through the Rankine-Hugoniot condition 
\[
 s=\frac{f(u_+)-f(u_-)}{u_+-u_-},
\]
which can be readily obtained by substituting a traveling-wave ansatz $u=u(x-st)$  into \eqref{e:scl} and integrating once. In the case $f(u_+)<f(u_-)$, solutions typically evolve into rarefaction waves, explicit in the case $d=0$ through an ansatz $u=u(x/t)$. 

Also, from the traveling-wave equation, one finds that all traveling waves are necessarily of (possibly degenerate) Lax type, that is, characteristics $x=f'(u_\pm)t$ enter the shock location $x=st$. Equivalently, we always have $f'(u_-)\geq s\geq f'(u_+)$ for viscous shocks, simply by inspecting stability properties of equilibria in the scalar traveling-wave ODE, and assuming well-posedness $d>0$. 

More generally, beyond the scalar setting here, one would classify shocks according to the number of characteristics entering and leaving the shock, respectively. In particular, in the scalar setting, shocks where characteristics leave the shock line, for example $f'(u_-)<s$,  would be \emph{undercompressive}. 

We refer to Figure \ref{f:char} for an illustration of these ideas, in the present context of vegetation patterns.

\paragraph{Local reduction --- modulation of vegetation densities.}
We describe solutions in a vicinity of a vegetation state, with an ansatz 
\begin{equation}\label{e:lwl}
b=b_0+\eps b_1(\eps^2 t, \eps (x-s t)),\qquad w=\frac{1}{b_0}-\eps \frac{1}{b_0^2}b_1(\eps^2 t, \eps (x-s t))+\eps^2 w_2(\eps^2 t, \eps (x-s t))+\rmO(\eps^3), 
\end{equation}
with error terms depending on the scaled variables $\tau=\eps^2 t$ and $\xi=\eps(x-st)$. Substituting into \eqref{e:rds} and collecting terms at order $\rmO(\eps^2)$, we find
\[
 -sb_{1,\xi}=b_0^2\left(w_2-\frac{b_1^2}{b_0^3}\right),\qquad \frac{s}{b_0^2}b_{1,\xi}=-\frac{c}{b_0^2}b_{1,\xi}-b_0^2\left(w_2-\frac{b_1^2}{b_0^3}\right).
\]
Adding the two equations gives $(s-c_\mathrm{g})b_{1,\xi}=0$, with $c_\mathrm{g}=\frac{c}{b_0^2-1}$, as expected. We also collect 
\begin{equation}\label{e:w2}
 w_2=-\frac{s}{b_0^2}b_{1,\xi}+\frac{b_1^2}{b_0^3}.
\end{equation}
At order $\rmO(\eps^3)$, after adding the equations for $b$ and $w$, using the expression for $w_2$, and using the ensuing equation for $w_{2,\xi}$, we find
\[
 \left(1-\frac{1}{b_0^2}\right)b_{1,\tau}=\left(1-\frac{s(c+s)}{b_0^2}\right)b_{1,\xi\xi}+(c+s)\left(\frac{b_1^2}{b_0^3}\right)_\xi.
\]
After some short algebra, we see that this equation is equivalent to Burgers' equation
\begin{equation}\label{e:burg}
 b_{1,\tau}=d_\mathrm{eff} b_{1,\xi\xi}-c_\mathrm{g}'b_1 b_{1,\xi}.
\end{equation}
Derivations of this type are ubiquitous in the literature. One would hope that error terms can be rigorously controlled using methods as in \cite{dsss}. Figure \ref{f:1c} exemplifies the presence of both Lax shocks and rarefaction waves in our system.

\begin{figure}
 \includegraphics[width=0.3\textwidth]{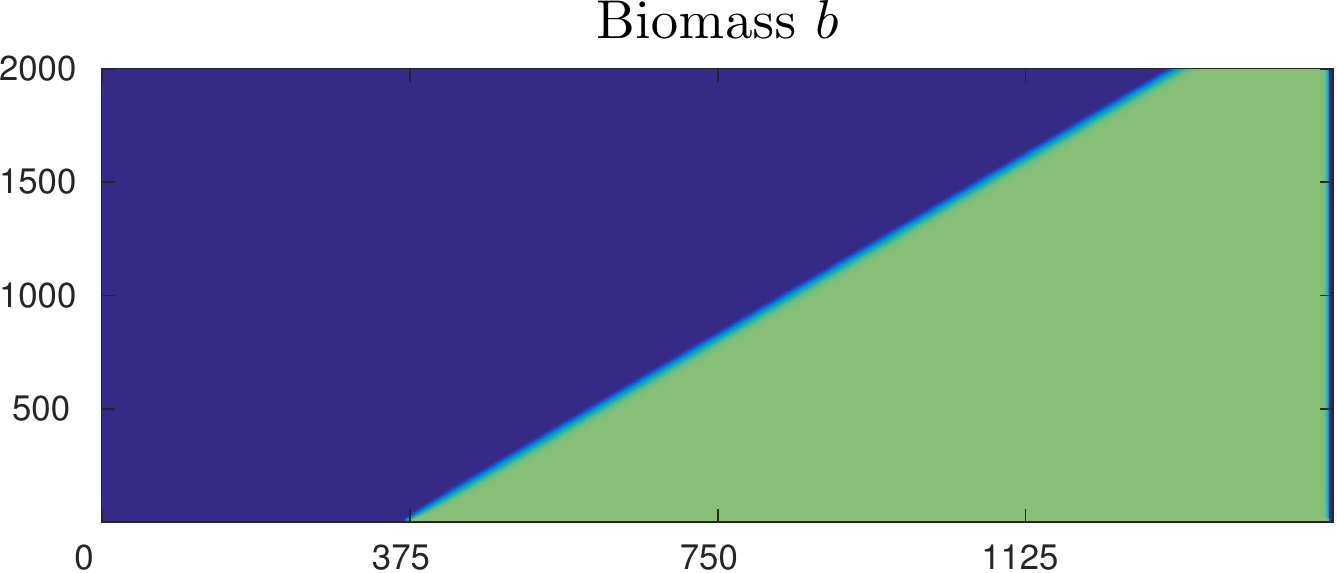}\hfill
\includegraphics[width=0.34\textwidth]{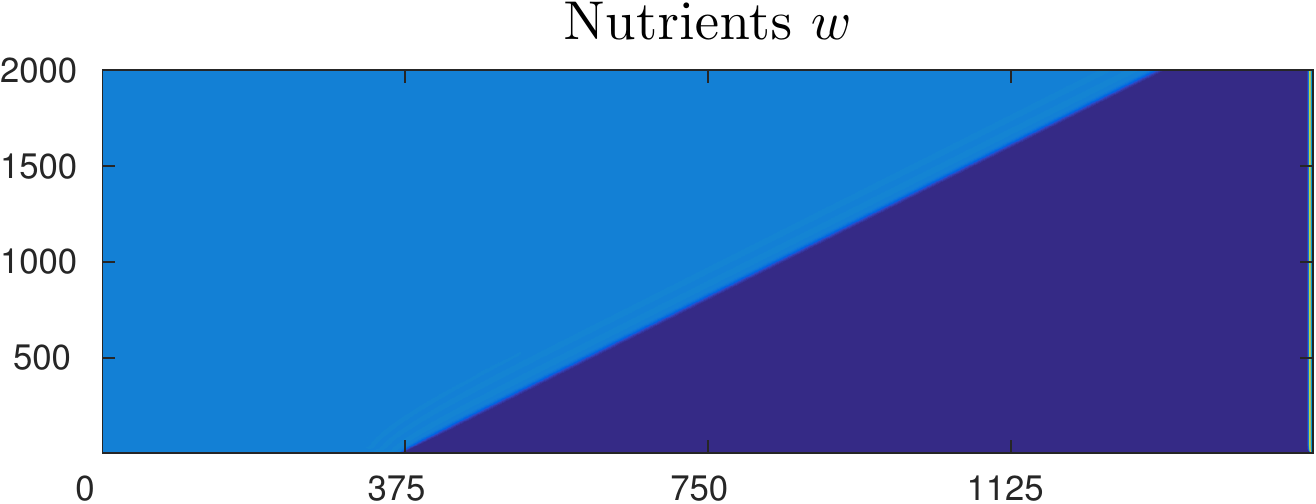}\hfill
 \includegraphics[width=0.3\textwidth]{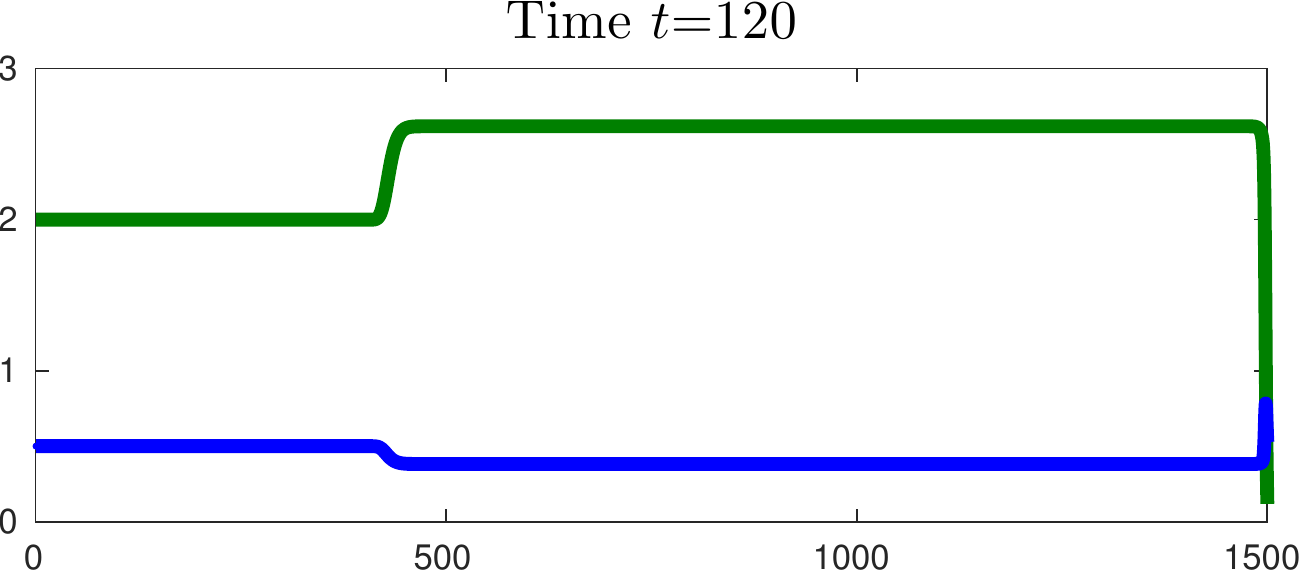}\\
 \includegraphics[width=0.3\textwidth]{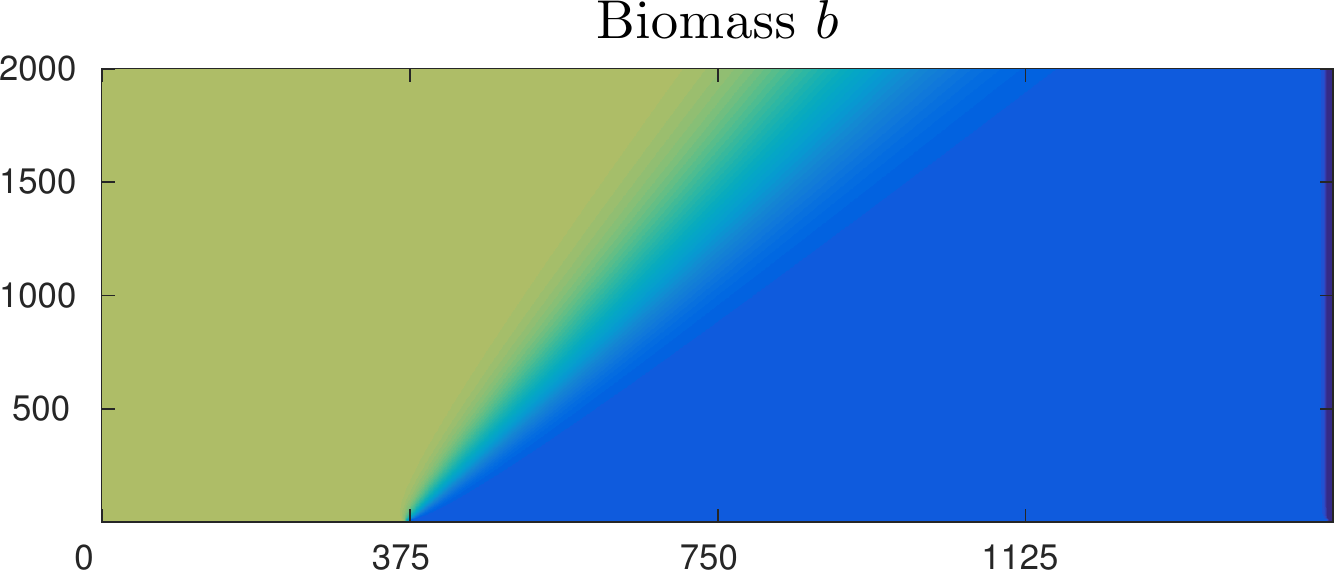}\hfill
\includegraphics[width=0.34\textwidth]{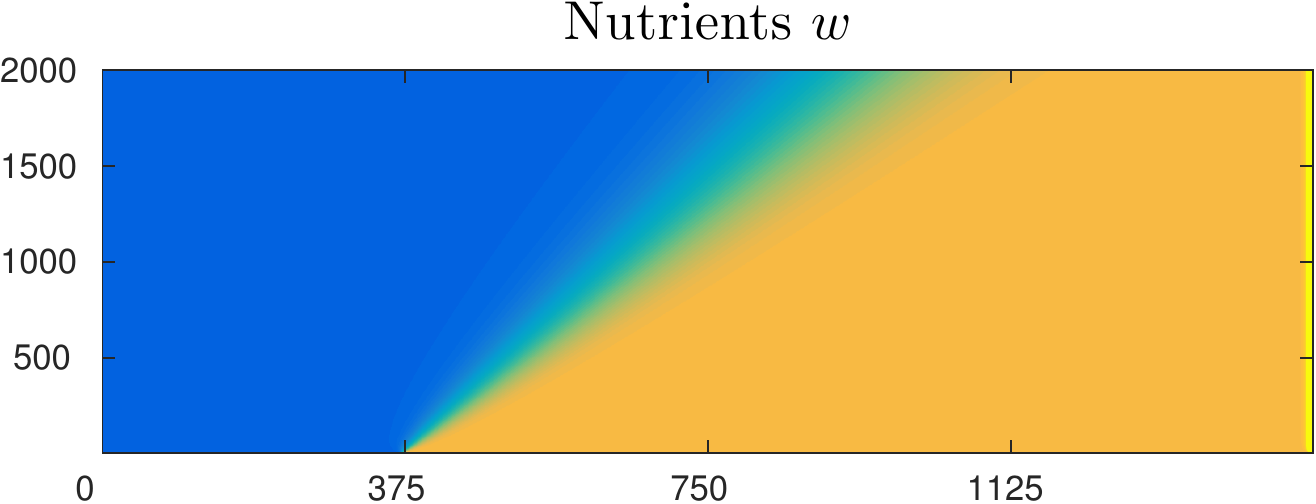}\hfill
 \includegraphics[width=0.3\textwidth]{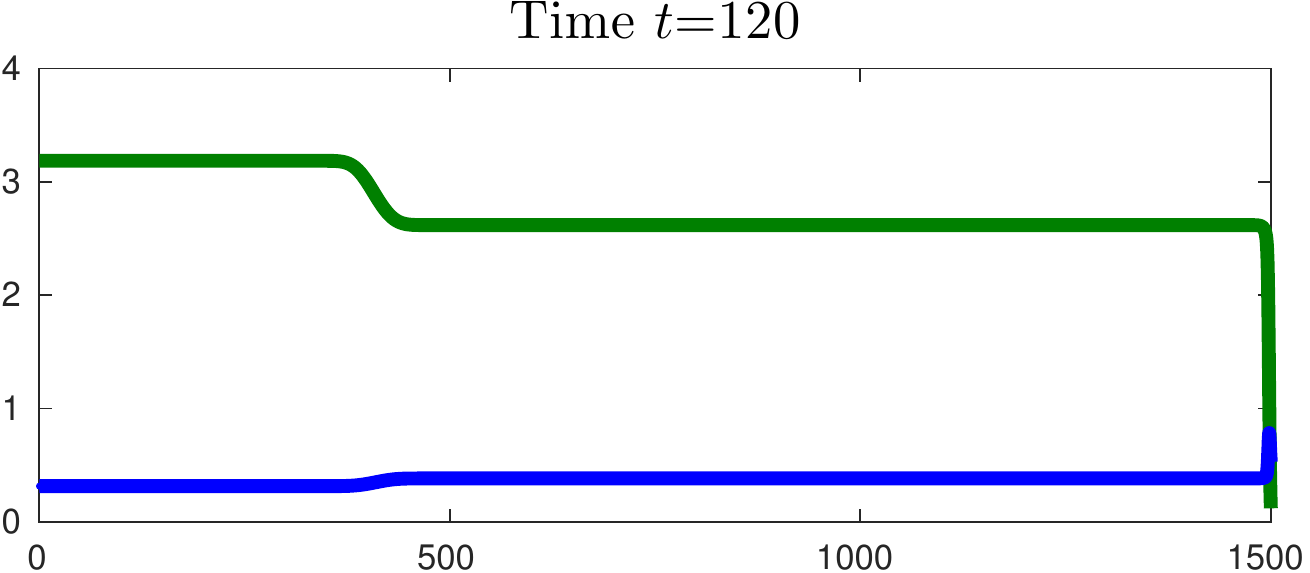}
 \caption{Plots of Lax shock (top)  and rarefaction wave (bottom) within a vegetation zone.  }\label{f:1c}
\end{figure}

\paragraph{Global reduction --- transport and viscosity.}
We notice that \eqref{e:burg} could be derived heuristically from simple linear information, the effective viscosity $d_\mathrm{eff}$ and the linear transport $c_\mathrm{g}(b)$. Noticing that the quantity $z=b+w$ solves a conservation law $z^+_t=F_x$, we could postulate the form
\begin{equation}\label{e:la}
 z^+_t=(d_\mathrm{eff}(b)z^+_x)_x-c_\mathrm{g}(b)z^+_x,\qquad b+1/b=z^+,
\end{equation}
which, among the general forms of viscous scalar conservation laws is determined by properties of the linearization at constants $b\equiv b_0$. 
While it is not clear how one would describe how \eqref{e:la} approximates 
\begin{equation}\label{e:csl}
b_t+w_t=b_{xx}+cw_x,
\end{equation}
we would like to pursue the idea of a global conservation law somewhat further. We could change variables to $z^\pm=b\pm w$, and assume that $z_-=\sqrt{(z^+)^2-4}$, at least for the stable branch. Substituting into \eqref{e:csl} gives, after a short computation,
\[
 z^+_t=(d(z^+))_{xx}-f(z^+)_x,
\]
with $f'(z^+)=c/(b^2-1)=c_\mathrm{g}$. The effective diffusivity $d'=b^2/(b^2-1)\neq d_\mathrm{eff}$ is incorrect at this order of approximation, and an equation 
\begin{equation}\label{e:redcl}
 z^+_t=(d_\mathrm{eff}(z^+_x))_x-f(z^+)_x,
\end{equation}
with $d'_\mathrm{eff}=d'-c^2b^2/(b^2-1)^3$, thus matching \eqref{e:cg}, appears to be more accurate.
%
%
%
%
The Rankine-Hugoniot condition for the speed of heteroclinic traveling-wave solutions is readily obtained from the Ansatz $z^+=z^+(x-st)$, $z^+\to z^+_\pm$ for $\xi\to\pm\infty$, 
\[
 s=\frac{f(z^+_+)-f(z^+_-)}{z^+_+-z^+_-},
\]
and corresponds to \eqref{e:twcl}. Lax shocks now correspond to shocks with $f'(z^+_-)>s>f'(z^+_+)$. This condition is generally satisfied for heteroclinic orbits connecting the two nontrivial equilibria, arising for instance as small heteroclinics in the saddle-node bifurcation described in the next section; see Figure \ref{f:1c}.  It is however never satisfied for upper and lower edges, since $s>0$ and $f'(z^+_-)<0$. In the traveling-wave analysis of the next section, one can check that $f'(z^+_+)>s$ for the heteroclinics connecting to the largest equilibrium, $f'(z^+_+)=0$ when connecting to the saddle-node equilibrium, and $f'(z^+_+)<s$ when connecting to the middle equilibrium. In this respect, upper and lower edges are \emph{undercompressive shocks}. That is, they are connecting states with characteristics emanating from the shock on the left, and with characteristics either emanating from or with the same speed as the shock on the right; see also Figure \ref{f:char}, below, for a schematic illustration. 

As previously mentioned, it maybe quite difficult to describe precisely how long-time dynamics of the reaction-advection-diffusion system \eqref{e:rd} are approximated by a scalar conservation law. We notice however that the concepts discussed here all translate immediately from the scalar conservation law to \eqref{e:rd}, replacing characteristic speeds by group velocities, and the existence of shocks with the nonlinear analysis from Section \ref{s:tw}.

\section{A phase diagram for traveling waves}\label{s:tw}

We look for traveling waves $\tilde{b}(x-st),\tilde{w}(x-st)$, which gives
\begin{equation}
 -s\tilde{b}'=\tilde{b}'' +  \tilde{b}^2 \tilde{w} -  \tilde{b},\qquad \qquad
 -s\tilde{w}'=c\tilde{w}' -\tilde{b}^2 \tilde{w} + \tilde{b}.\label{e:rdtw}
\end{equation}
Adding the equations and integrating once gives 
\begin{equation}
 s\tilde{b}+(s+c)\tilde{w}+\tilde{b}'=(s+c)\scflux,\label{e:twcl}
\end{equation}
for some constant of integration $\scflux$. Inspecting the original equation \eqref{e:rd} in a co-moving frame $\xi=x-st$, we see that $(s+c)\scflux$ is the flux for the total mass,
\[
 (b+w)_t=(s+c)\scflux_\xi.
\]
The fact that $\scflux$ is conserved for the traveling-wave equation can then be viewed as a Rankine-Hugoniot type constraint on the speed of fronts, given asymptotic states where $\tilde{b}'=0$; see Section \ref{s:int} for more details on the conservation law point of view\,\footnote{Compare also with \cite[(1.8)]{ssdefect} where such conditions were derived in reaction-diffusion systems when the underlying conserved quantity is the phase of an oscillation rather than an explicit variable.}. Note that, for solutions that limit on a vegetation-free state $b_\infty=0$, $\scflux=w_\infty$ encodes the amount of nutrient at infinity.
Solving for $\tilde{w}$ and substituting into the first equation in \eqref{e:rdtw} gives 
\[
 \tilde{b}''+s\tilde{b}'+\tilde{b}^2(\scflux - \frac{s}{s+c}\tilde{b}-\frac{1}{s+c}\tilde{b}')-\tilde{b}=0.
\]
Upon scaling 
\begin{equation}\label{e:sc}
 b=\tilde{b}(s+c)^{-1/2}, \  \theta= \scflux(s+c)^{1/2},
\end{equation}
we find
\begin{align}
 b'&=v-sb,\notag\\
 v'&=-b^2(\theta-v)+b.\label{e:tw}
\end{align}
Our main theoretical results characterize bounded solutions to this planar ODE \eqref{e:tw}. We first list theoretical results, Section \ref{s:th}, and then show numerically computed bifurcation diagrams, Section \ref{s:num}. We conclude the section with interpretations of our results, Section \ref{s:int}.

\subsection{Theoretical existence results}\label{s:th}

We first collect some elementary facts on equilibria and their bifurcations. We then state global results on the existence of heteroclinic orbits, as well as local results on the existence of periodic and homoclinic orbits. 

\paragraph{Steady-state bifurcations.} For each $s>0$, \eqref{e:tw} possesses either a unique equilibrium $b=v=0$, or two additional equilibria (counted with multiplicity), 
\[
 b_\pm=\frac{\theta}{2s}\pm\sqrt{\frac{\theta^2}{4s^2}-\frac{1}{s}}, \qquad v_\pm=sb_\pm.
\]
The discriminant vanishes, and the equilibria disappear in a saddle-node bifurcation, when 
\begin{flalign}\label{e:sn}
 \text{(SN)} & \qquad\qquad  \qquad \theta^2=4s, \qquad \text{or} \qquad s=c_\mathrm{g},&
\end{flalign}
where the latter equality can be readily found by undoing the scaling $\tilde{b}=\sqrt{\frac{s+c}{s}}$ and using \eqref{e:cg}; see also \cite{ssdefect} for the relation between zero group velocities and saddle-node bifurcations in traveling-wave equations. Before the saddle-node, $s<\theta^2/4$, there are two equilibria in addition to and compatible with a trivial equilibrium $b=0$, $w=w_+$, with respect to the conserved quantity $\theta$. Past the saddle-node, for large speeds $s>\theta^2/4$, vegetation states are not compatible with a trivial equilibrium. 

The saddle-node curve passes through a Bogdanov-Takens point, with algebraically double zero eigenvalue, at 
\begin{flalign}
 \text{(BT)} & \qquad\qquad\qquad\qquad  \qquad \theta=2, \qquad s=1.&
\end{flalign}
For $\theta<2$, the saddle-node equilibrium $b=1$ possesses a negative, stable eigenvalue in addition to the zero eigenvalue, for $\theta>2$ the additional eigenvalue is positive.

In the PDE linearization, $\theta=2$, $s=1$ corresponds to $b=1$ and $\tilde{b}=\sqrt{1+c}$, that is, the Bogdanov-Takens point coincides with the onset of the sideband instability at zero group velocity; see \cite[\S 4.2]{hs} and \cite[\S 6]{hs2} for a similar scenario in a different context. One can verify that the sideband instability, $\tilde{b}<\sqrt{1+c}$  and the homogeneous instability, $\tilde{b}<1$, only occur for $b_-$. 

From the Bogdanov-Takens curve emerges a branch of Hopf bifurcation curves for $b_-$, 
\begin{flalign}\label{e:hopf}
 \text{(Hopf)} & \qquad\qquad\qquad\qquad  \qquad \theta=\frac{s^2+1}{\sqrt{s}}, \qquad s<1.&
\end{flalign}

One can verify that both Bogdanov-Takens and Hopf bifurcation are generically unfolded. Across the  Hopf bifurcation, stability equilibria destabilize with increasing $\theta$ and $s$ fixed. The branching is towards decreasing $\theta$, hence subcritical as a bifurcation in $\theta$, for $\theta>4\cdot3^{-3/4}=1.7547653\ldots$ and $s<3^{-1/2}=0.5773502\ldots$. It is supercritical otherwise. We computed the cubic Hopf coefficient $a$ in the basis $(\sqrt(1-s^2),0), (s,1)$ using computer algebra and found 
\[
 \Re a(s)= \frac{1 - 3 s^2}{8 s^2 - 8}.
\]
The location of the degenerate Hopf bifurcation coincides well with the end point of the continuation of the periodic saddle-node in \textsc{AUTO07p}. For large $\theta$, $s\sim 0$, the real part of the cubic Hopf coefficient converges to $1/8$. In the limit $s\to 1$, the cubic coefficient diverges to $-\infty$ as expected near the codimension-two point. 

The Bogdanov-Takens point, $b=v=1$ at $\theta=2$, $s=1$, is generically unfolded. Quadratic terms are, in the notation of \cite[Thm 8.4]{kuzi}, 
    $a_{20} = 2$, $b_{20} = 2$, and $b_{11} = 0$, such that the non-degeneracy conditions $a_{20} + b_{11} \neq 0$ and $b_{20} \neq 0$ hold. 
One also readily verifies the versal unfolding in the parameters $\theta$ and $s$, based on the non-degeneracy of the saddle-node and the derivatives in trace and determinant.

\paragraph{Heteroclinic orbits --- lower and upper edges of vegetation bands.}

We now state our main existence results on heteroclinic orbits in the traveling-wave equation \eqref{e:tw}. Throughout, we denote by $0<w_-\leq w_+$ the three equilibria of \eqref{e:tw}. 

\begin{Theorem}[Upper edge]\label{t:u}
 There exists a unique, continuous curve $\{s_\mathrm{u}(\theta), \theta\in (0,\infty)\}$, such that for parameter values on this curve there exists a heteroclinic orbit connecting $w=0$ to $w=w_+$, thus describing the upper edge of a vegetation zone. Moreover, 
 \begin{itemize}
  \item $s_\mathrm{u}$ is non-decreasing;
  \item $s_\mathrm{u}(\theta)\to 0$ for $\theta\to 0$;
  \item $s_\mathrm{u}(\theta)=(s_\infty+\rmo(1))\theta^{2/3}$ for $\theta\to\infty$;
  \item $s_\mathrm{u}(\theta)=\theta^2/4$ for $\theta$ sufficiently small.
 \end{itemize}
Moreover, the heteroclinic diverges, with $w_+\to\infty$ for $\theta\to 0$ or $\theta\to\infty$.
\end{Theorem}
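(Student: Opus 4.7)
The heteroclinic we seek is a saddle-to-saddle connection in the planar ODE \eqref{e:tw}, hence a codimension-one event in $(s,\theta)$. A quick computation shows that $(0,0)$ is a hyperbolic saddle for every $s>0$, with stable eigenvalue $\lambda_-=(-s-\sqrt{s^2+4})/2$, and that whenever $\theta^2>4s$, the equilibrium $(b_+,sb_+)$ is also a saddle: using $sb_+^2=\theta b_+-1$, the Jacobian has determinant $1-sb_+^2=2-\theta b_+<0$. Denote by $W^u_-(b_+)$ the branch of the one-dimensional unstable manifold at $b_+$ pointing into $\{b<b_+\}$ and by $W^s_+(0)$ the branch of the stable manifold at the origin lying in $\{b>0\}$. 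The upper edge corresponds precisely to $W^u_-(b_+)\cap W^s_+(0)\neq\emptyset$, and I will show this intersection is nonempty exactly on the graph of a function $s_u(\theta)$.

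Existence and uniqueness I would establish by shooting. For each $\theta$, choose a transversal $\Sigma$ crossed by $W^s_+(0)$ (for instance a segment of $\{v=0\}$) and define a signed separation $\Phi(s;\theta)$ between the first crossings of $\Sigma$ by $W^u_-(b_+)$ and by $W^s_+(0)$, continuously extended to the saddle-node boundary $s=\theta^2/4$ via center-manifold reduction at the merged equilibrium. I expect the two ends of the admissible $s$-range to give opposite signs of $\Phi$: at the SN boundary the center-unstable branch of the degenerate equilibrium flows into $W^s_+(0)$ (so that, for $\theta$ in an initial interval, $\Phi$ vanishes identically on the boundary, producing the SN-branch of $s_u$), whereas at the other extreme the damping $-sb$ in $b'=v-sb$ dominates and forces $W^u_-(b_+)$ to overshoot $W^s_+(0)$, reversing the sign. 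An intermediate-value argument then yields a root $s_u(\theta)\in(0,\theta^2/4]$. Uniqueness and monotonicity follow from a rotation argument: an increment $\Delta s>0$ perturbs the vector field by $(-\Delta s\cdot b,\,0)$, which twists trajectories in $\{b>0\}$ uniformly in a consistent direction, so that $\Phi(\cdot;\theta)$ is strictly monotone; an analogous comparison in $\theta$ gives that $s_u$ is non-decreasing and continuous.

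The asymptotic statements are then handled by specialized arguments. The transition between the SN branch $s_u(\theta)=\theta^2/4$ and the interior branch is organized by the Bogdanov-Takens point at $(\theta,s)=(2,1)$: using the BT normal form one shows that for all $\theta$ in some initial interval the heteroclinic lies exactly on the saddle-node curve, yielding $s_u(\theta)=\theta^2/4$ for $\theta$ small and in particular $s_u(\theta)\to 0$ as $\theta\to 0$. For $\theta\to\infty$ I would rescale $b=\theta^{1/3}B$, $v=\theta V$, $s=\sigma\theta^{2/3}$, $\xi=\theta^{-1/3}\eta$; to leading order the rescaled system is a $\theta$-independent planar ODE with its own transverse saddle-to-saddle connection at a unique $\sigma=s_\infty$, and this connection persists under $o(1)$ corrections, giving $s_u(\theta)=(s_\infty+o(1))\theta^{2/3}$. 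The divergence $w_+=b_+\to\infty$ is immediate at both ends: $b_+=2/\theta$ on SN, while $b_+\sim\theta/s\sim s_\infty^{-1}\theta^{1/3}$ under the rescaling.

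The principal hurdle is the uniform monotonicity/rotation estimate underlying the shooting argument, especially in the transition region where $b_+$ and $b_-$ nearly coincide and the hyperbolicity of $b_+$ degenerates: there one must work through the Bogdanov-Takens normal form to continue $W^u_-(b_+)$ across that window and patch it to the global comparison. A secondary obstacle is verifying persistence of the limiting heteroclinic under the $\theta\to\infty$ rescaling, which requires an explicit trapping region for the reduced ODE that is robust to the subdominant corrections.
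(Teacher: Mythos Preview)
Your overall shooting/monotonicity strategy matches the paper's in spirit, but two of your concrete mechanisms are wrong and would not go through as stated.

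\textbf{The rotation argument fails in the $(b,v)$ coordinates.} You claim the perturbation $(-\Delta s\cdot b,0)$ twists trajectories uniformly in $\{b>0\}$. It does not: the signed rotation of the vector field under this perturbation is proportional to $b\cdot v'=b\bigl(-b^2(\theta-v)+b\bigr)$, which changes sign across the nullcline $\alpha$. The paper notes this explicitly --- ``the ordering property holds in a different set of coordinates for upper and lower edge heteroclinics'' --- and for the upper edge passes to $(b,u)=(b,v-sb)$, tracking the \emph{stable} manifold of the origin backward in a region $\{u<0\}$ where the required monotonicity does hold. Without this change of variables your separation function $\Phi(\cdot;\theta)$ need not be monotone, so neither uniqueness nor the non-decreasing property of $s_u$ follows from your argument.

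\textbf{The BT point does not organize the small-$\theta$ regime.} The Bogdanov--Takens point sits at $(\theta,s)=(2,1)$; a normal-form analysis there gives information only in a neighborhood of that point, not for $\theta\to 0$. In fact the upper-edge curve leaves the saddle-node curve near $\theta\approx 1.314$, not at BT. The paper's proof of $s_u(\theta)=\theta^2/4$ for small $\theta$ is entirely different: it scales $\theta=\eps$, $s=\sigma\eps^2$, compactifies via $\beta=1/b$, and shows that at $\eps=0$ the stable manifold of the origin connects to the equilibrium at $b=\infty$; for $\eps>0$ this forces the connection to land on the \emph{middle} equilibrium $b_-$ unless $b_-=b_+$, i.e.\ unless one is on the saddle-node curve. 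Your BT-based argument supplies no substitute for this.

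Smaller point: in your large-$\theta$ rescaling the spatial exponent should be $\xi=\theta^{-2/3}\eta$ (equivalently $x=\eps^2 y$ with $\eps=\theta^{-1/3}$), not $\theta^{-1/3}$; otherwise the leading-order system still carries $\theta$. The rest of that paragraph, including the persistence-by-transversality idea, agrees with the paper.
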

We derive more precise asymptotics for the heteroclinics in the proof. Numerically, $s_\infty=0.9055$.
\begin{Theorem}[Lower edge]\label{t:l}
 There exists a unique, continuous curve $\{s_{\lindex}(\theta), \theta\in (\theta_0,\infty)\}$, such that for parameter values on this curve there exists a heteroclinic orbit connecting $w=0$ to $w=w_+$, thus describing the lower edge of a vegetation zone. Moreover, 
 \begin{itemize}
  \item $s_{\lindex}$ is non-decreasing;
  \item $s_{\lindex}(\theta)\to 0$ for $\theta\to \theta_0$;
  \item $s_{\lindex}(\theta)\to \infty$ for $\theta\to\infty$;
  \item $s_{\lindex}(\theta)=\theta^2/4$ for $\theta$ sufficiently large.
 \end{itemize}
Moreover, the heteroclinic diverges, with $w_+\to\infty$ for $\theta\to \theta_0$ or $\theta\to\infty$.
\end{Theorem}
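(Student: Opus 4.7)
The plan is to prove Theorem \ref{t:l} by a shooting argument analogous to that of Theorem \ref{t:u}, but with the heteroclinic running in the opposite direction. The lower edge corresponds to an orbit of \eqref{e:tw} whose $\alpha$-limit is the origin $O=(0,0)$ and whose $\omega$-limit is the saddle $P_+=(w_+,sw_+)$: the trajectory leaves $O$ along its one-dimensional unstable manifold and must land on the one-dimensional stable manifold of $P_+$. Off the saddle-node locus \eqref{e:sn} both are hyperbolic saddles, so matching them is a codimension-one condition in the two-parameter family $(s,\theta)$, producing at most a curve.

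First I would parametrise the branch of $W^u(O)$ entering the first quadrant along the positive eigendirection associated with $\lambda_u(s)=-s/2+\sqrt{s^2/4+1}$ and define a signed shooting function $F(s,\theta)$ measuring the transverse distance from $W^u(O)$ to $W^s(P_+)$ on a fixed Poincar\'e section, for instance the vertical line $b=b_+$. Smooth dependence of invariant manifolds on parameters makes $F$ smooth away from the saddle-node curve, so existence of the heteroclinic curve reduces to producing zeros of $F$ and verifying transversality $\partial_s F\neq 0$. Monotonicity of $s_{\lindex}$ in $\theta$ and uniqueness then follow from implicit differentiation once this sign is fixed; I expect to pin the sign down by a Melnikov-type computation based on the variational equation along the heteroclinic and the monotone rotation of $W^u(O)$ as $s$ grows.

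The two boundary behaviours anchor the curve. For $\theta\to\infty$ I would verify $s_{\lindex}(\theta)=\theta^2/4$ directly on the saddle-node locus: at the merged equilibrium $(2/\theta,\theta/2)$ the non-zero eigenvalue is $4/\theta^2-\theta^2/4$, strongly negative for large $\theta$, and a centre-manifold reduction combined with a contraction argument shows that $W^u(O)$ aligns with the centre direction at leading order, yielding a heteroclinic-to-saddle-node for all sufficiently large $\theta$. For $\theta\to\theta_0$ with $s\to 0$ the equilibrium $P_+$ slides to infinity, so I would rescale to keep it at a finite location (for example $B=sb$, $V=v$), pass to the planar $s=0$ limit, and identify $\theta_0$ as the smallest $\theta$ for which the limit system admits the corresponding heteroclinic. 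An intermediate-value argument in $s$ at each intermediate $\theta$ then fills in the full curve, and the explicit formula $b_+=\theta/(2s)+\sqrt{\theta^2/(4s^2)-1/s}$ specialised to $s=s_{\lindex}(\theta)$ yields divergence of $w_+$ at both endpoints.

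The main obstacle is the singular limit $s\to 0$ near $\theta_0$: uniform hyperbolicity of the shooting breaks down as $W^s(P_+)$ retreats to infinity, and one must choose a blow-up of the $(s,b,v)$-space in which both invariant manifolds persist as smooth objects in the limit. Only after such a desingularisation can $\theta_0$ be pinned down unambiguously and spurious endpoints of the curve (for instance due to a homoclinic bifurcation at $P_+$ coincident with a Hopf transition along $\theta=(s^2+1)/\sqrt{s}$) be ruled out.
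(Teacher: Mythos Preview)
Your shooting-plus-Melnikov outline is a legitimate route, but it diverges from the paper's argument in a way that makes the hardest step harder. The paper does not use a smooth shooting function $F(s,\theta)$, a Poincar\'e section, or any Melnikov integral. Instead it runs a purely topological dichotomy: one follows only the unstable manifold $\mathcal W^u(0,0)$ inside a bounded semi-invariant region $\Sigma$ bordered by the two nullclines and the horizontal line $v=\theta$, and notes that $\mathcal W^u$ can exit $\Sigma$ either through the $v'$-nullcline $\alpha$ (call this parameter set $\Sd$) or through $\{v=\theta\}$ (set $\Su$). Poincar\'e--Bendixson forces one or the other, and the lower-edge curve is simply $\partial\Su\cap\partial\Sd$. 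The crucial monotonicity in $(\theta,s)$---what you hope to extract from the sign of a Melnikov derivative---comes instead from an elementary vector-field comparison: at any common point the slope of the field is strictly monotone in $\theta$ and in $s$, so two unstable manifolds with different parameter values can never cross inside $\Sigma$. This single observation yields both the non-decreasing property and uniqueness without ever linearising along the heteroclinic.

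The gap in your proposal is exactly that Melnikov step. Saying you ``expect to pin the sign down'' is not a proof, and in fact computing the sign of $\partial_s F$ globally along an unknown curve of heteroclinics is considerably harder than the one-line slope comparison the paper uses. Your instinct that ``$W^u(O)$ rotates monotonically as $s$ grows'' is the right one, but that observation \emph{is} the proof once you phrase it as a comparison of graphs rather than as input to a variational calculation; the Melnikov machinery is unnecessary overhead. For the endpoints your ideas are closer to the paper's: the large-$\theta$ regime is handled there by a slow--fast rescaling $\theta=\eps^{-1}$, $s=\sigma\eps^{-2}$, $b=\eps\beta$, $v=\eps^{-1}\eta$ that reduces the connection to a one-dimensional slow flow forcing $\sigma=1/4$, and the $s\to 0$ limit is treated by the compactification $\beta=1/b$ rather than your $B=sb$, but both are reasonable desingularisations of the same degeneration.
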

Again, we find more precise asymptotics in the proof. Numerically, $\theta_0\sim 1.389\ldots$.

\begin{Corollary}[Maxwell point]\label{c:M}
 There exists a unique $(\theta_\mathrm{M},s_\mathrm{M})$, such that there exists a heteroclinic loop, that is, both upper and lower edge heteroclinics exist simultaneously.
\end{Corollary}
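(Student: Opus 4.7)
The plan is to compare the two monotone curves $s_\mathrm{u}(\theta)$ and $s_{\lindex}(\theta)$ produced by Theorems \ref{t:u} and \ref{t:l} on their common domain and extract a unique crossing from the intermediate value theorem plus a transversality argument.

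\textbf{Existence by IVT.} I would introduce the continuous difference $\Delta(\theta):=s_\mathrm{u}(\theta)-s_{\lindex}(\theta)$ on $[\theta_0,\infty)$, extended continuously to $\theta=\theta_0$ via $s_{\lindex}(\theta_0):=0$. Theorem \ref{t:u} gives $s_\mathrm{u}(\theta_0)>0$, so $\Delta(\theta_0)>0$. For large $\theta$, Theorem \ref{t:l} yields $s_{\lindex}(\theta)=\theta^2/4$ whereas Theorem \ref{t:u} yields $s_\mathrm{u}(\theta)=(s_\infty+\rmo(1))\theta^{2/3}$, so $\Delta(\theta)\to-\infty$. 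Continuity produces some $\theta_\mathrm{M}\in(\theta_0,\infty)$ with $\Delta(\theta_\mathrm{M})=0$, giving the coexistence pair $(\theta_\mathrm{M},s_\mathrm{M})$ with $s_\mathrm{M}:=s_\mathrm{u}(\theta_\mathrm{M})=s_{\lindex}(\theta_\mathrm{M})$ and hence the heteroclinic loop.

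\textbf{Uniqueness.} Any Maxwell pair must lie strictly below the saddle-node parabola $s=\theta^2/4$: Theorem \ref{t:u} places $s_\mathrm{u}$ on the parabola only for small $\theta$ (where $s_{\lindex}$ is undefined or strictly below the parabola), and Theorem \ref{t:l} places $s_{\lindex}$ on the parabola only for large $\theta$ (where $s_\mathrm{u}\sim s_\infty\theta^{2/3}\ll\theta^2/4$ has long departed). Consequently the equilibria $w_\pm$ of \eqref{e:tw} are hyperbolic at any Maxwell point, and the loop consists of two transverse saddle-to-saddle connections. I would then prove uniqueness by a Melnikov-type monotonicity argument: along the curve $s=s_\mathrm{u}(\theta)$ (which, by hypothesis, already glues $W^\mathrm{u}(w_+)$ to the origin), parametrize the remaining half of the loop by tracking the signed distance $d(\theta)$ from $W^\mathrm{u}(0)$ to $W^\mathrm{s}(w_-)$ measured on a fixed transverse section, and show that $d$ is strictly monotone in $\theta$. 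A single sign change of $d$ then forces a unique crossing of $s_\mathrm{u}$ with $s_{\lindex}$.

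\textbf{Main obstacle.} Existence is essentially read off from the endpoint and asymptotic data of Theorems \ref{t:u}--\ref{t:l}; the real work is the uniqueness step, since one must control how both invariant manifolds deform simultaneously as $(\theta,s)$ varies with $s$ slaved to $\theta$ along $s=s_\mathrm{u}(\theta)$. The most promising route is a variational identity for the planar polynomial system \eqref{e:tw} obtained by integrating a suitable 1-form along the loop, which should yield sign-definiteness of $d'(\theta)$. Should a direct global identity prove elusive, a fallback is to invoke the versal unfolding of the Bogdanov-Takens point, where the Maxwell curve emerges smoothly and uniquely, and then continue it globally by excluding tangencies $s_\mathrm{u}'(\theta)=s_{\lindex}'(\theta)$, itself a transversality condition amenable to direct calculation in \eqref{e:tw}.
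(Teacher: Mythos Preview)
Your existence argument is exactly what the paper intends: the corollary is stated without its own proof and is meant to be read off directly from the endpoint and asymptotic data of Theorems~\ref{t:u} and~\ref{t:l} via the intermediate value theorem, precisely as you do.

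On uniqueness, the paper likewise offers no explicit argument --- the claim is supported by the numerical bifurcation diagram and the surrounding monotonicity structure of Section~\ref{s:p}, but a self-contained proof is not written out. Your instinct that this is the substantive step is therefore correct, and your proposal goes beyond what the paper provides. Two remarks on your sketch. First, a slip: both edge heteroclinics connect $0$ and $w_+$, so the splitting distance you want is between $W^{\mathrm u}(0)$ and $W^{\mathrm s}(w_+)$, not $W^{\mathrm s}(w_-)$. Second, be aware that the monotonicity statements actually proved in the paper (Propositions~\ref{local} and~\ref{above}) hold only when one of $(\theta,s)$ is frozen; along the constraint $s=s_{\mathrm u}(\theta)$ both parameters move, and the two effects on $W^{\mathrm u}(0)$ have \emph{opposite} signs (increasing $\theta$ pushes the manifold down, increasing $s$ pushes it up), so sign-definiteness of $d'(\theta)$ does not fall out of those propositions. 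A genuine Melnikov computation or variational identity along the loop, as you suggest, appears to be needed. The Bogdanov--Takens fallback would at best give local uniqueness near the codimension-two point and does not obviously globalize.
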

Numerically, $(\theta_\mathrm{M},s_\mathrm{M})=(0.7689,1.8465)$. In this regime, the trace of the linearization of $w=0$ is negative and the trace at $w=w_+$ is positive. The heteroclinic loop bifurcation is therefore a somewhat non-standard bifurcation discussed in \cite{shash,shil}. The two families of homoclinics emerging from the loop, asymptotic to each of the two equilibria in the loop respectively, bifurcate tangent to one of the heteroclinic orbit branches (in this case the lower edge homoclinic). We do however not attempt to analytically verify the generic unfolding conditions of the loop to rigorously establish the bifurcation diagram near the heteroclinic loop. However, we do note that the numerically computed bifurcation diagram agrees well with the corresponding diagram \cite[Fig. 13.7.20]{shil}.

\paragraph{Homoclinic and periodic orbits --- vegetation bands and gaps.}

Here we describe results on existence of homoclinic orbits. While it is conceivable to obtain somewhat more global existence and monotonicity results for the curves of existence in parameter space, mimicking the methods employed to prove Theorems \ref{t:u} and \ref{t:l}, instead we concentrate on end points of the numerically computed bifurcation curves. We refer to homoclinic orbits asymptotic to $w_+$ as vegetation gaps, and to homoclinic orbits asymptotic to $0$ as vegetation bands. We find
\begin{enumerate}
 \item a branch of vegetation gaps bifurcates from the BT point in the direction of decreasing $\theta$ and $s$;
 \item the branch of vegetation gaps terminates on the Maxwell point from Corollary \ref{c:M} and continues  from there as a vegetation band;
 \item the branch of vegetation bands has speed $s\sim \frac{6}{7}\theta^{-2}$ as $\theta\to\infty$, while the amplitude of the vegetation density $w_+$ diverges with $\theta$.
\end{enumerate}
The end points (i) and (ii) have been discussed above. We shall discuss (iii) in Section \ref{s:thinf}

Periodic vegetation patterns form a two-parameter family that happens to exist in a close vicinity of the Hopf curve. It is however typically not confined to the region between these two curves, as the direction of branching of periodic orbits from the Hopf curve and from the homoclinic curve change at some points. We discussed direction of branching from the Hopf curve, above. It would be interesting to obtain analytical existence results.

\subsection{The complete bifurcation diagram --- numerics and implications} \label{s:num}
We present the numerically computed bifurcation diagram and undo the scalings \eqref{e:sc}.

\paragraph{Numerically computed bifurcation diagrams.}
We computed periodic, homoclinic, and heteroclinic orbits using \textsc{AUTO07p} continuation software \cite{auto}; see Figure \ref{f:bif}. 
\begin{figure}
\includegraphics[width=0.32\textwidth]{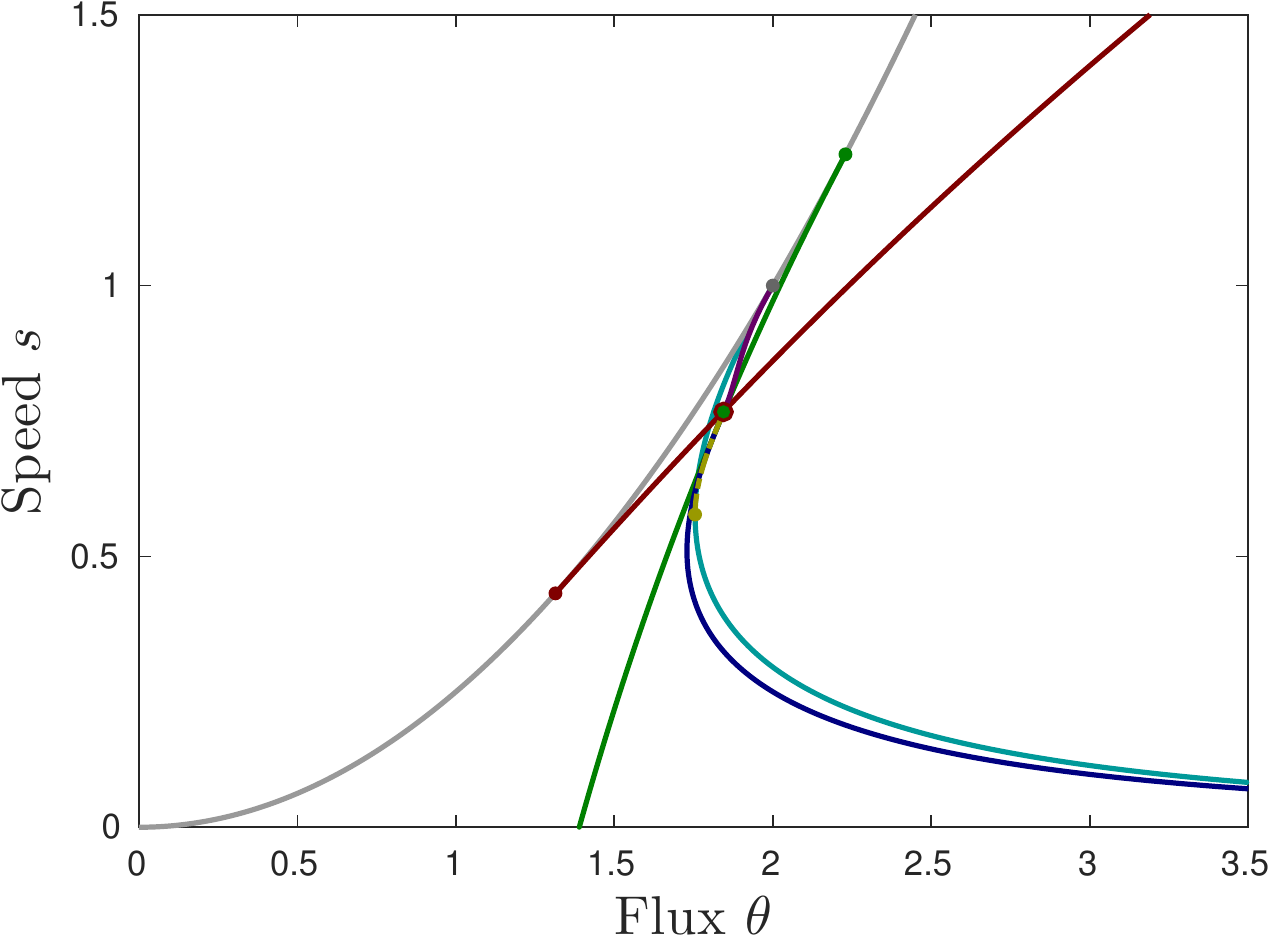}\hfill
\includegraphics[width=0.32\textwidth]{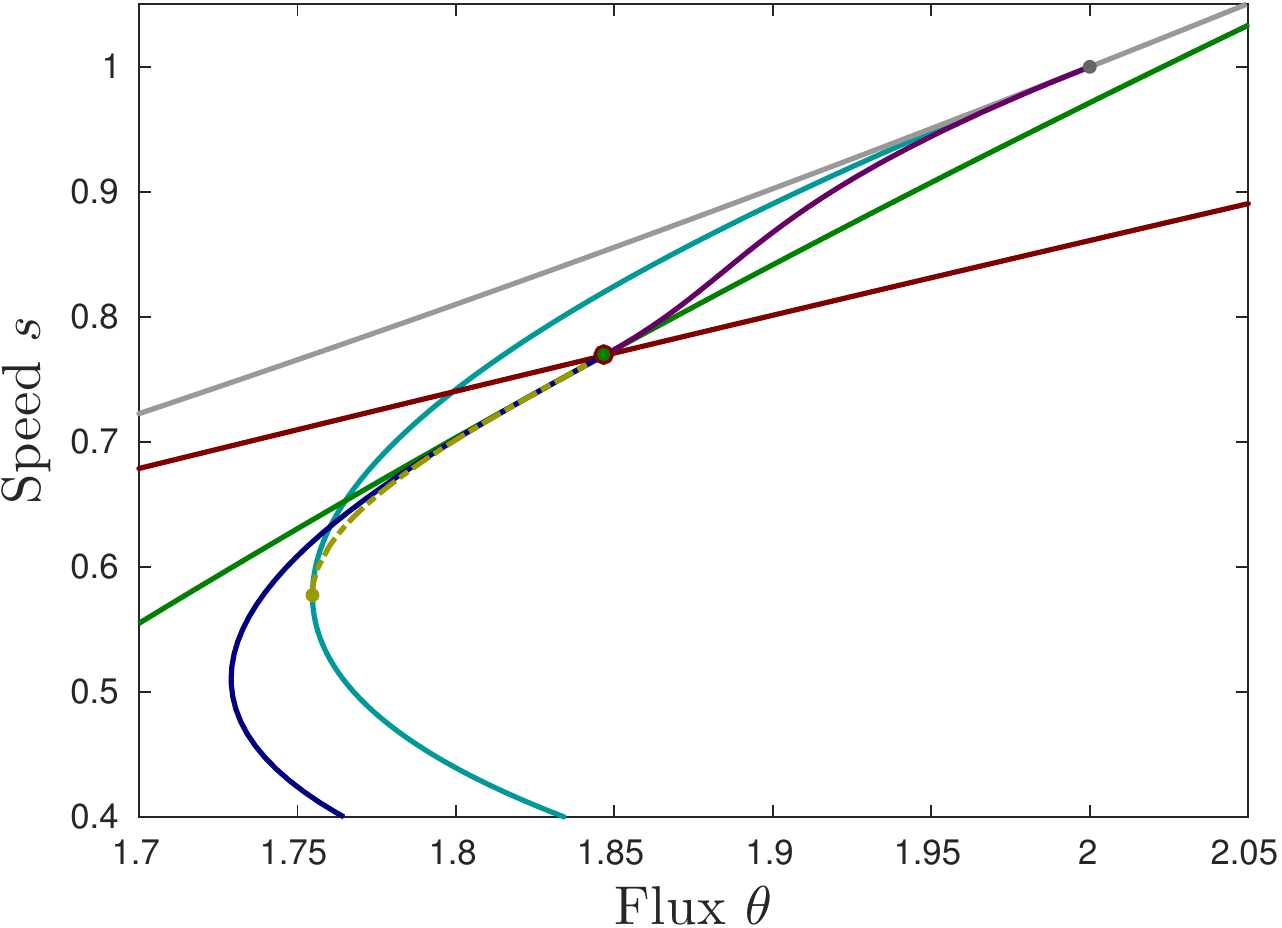}\hfill
\includegraphics[width=0.32\textwidth]{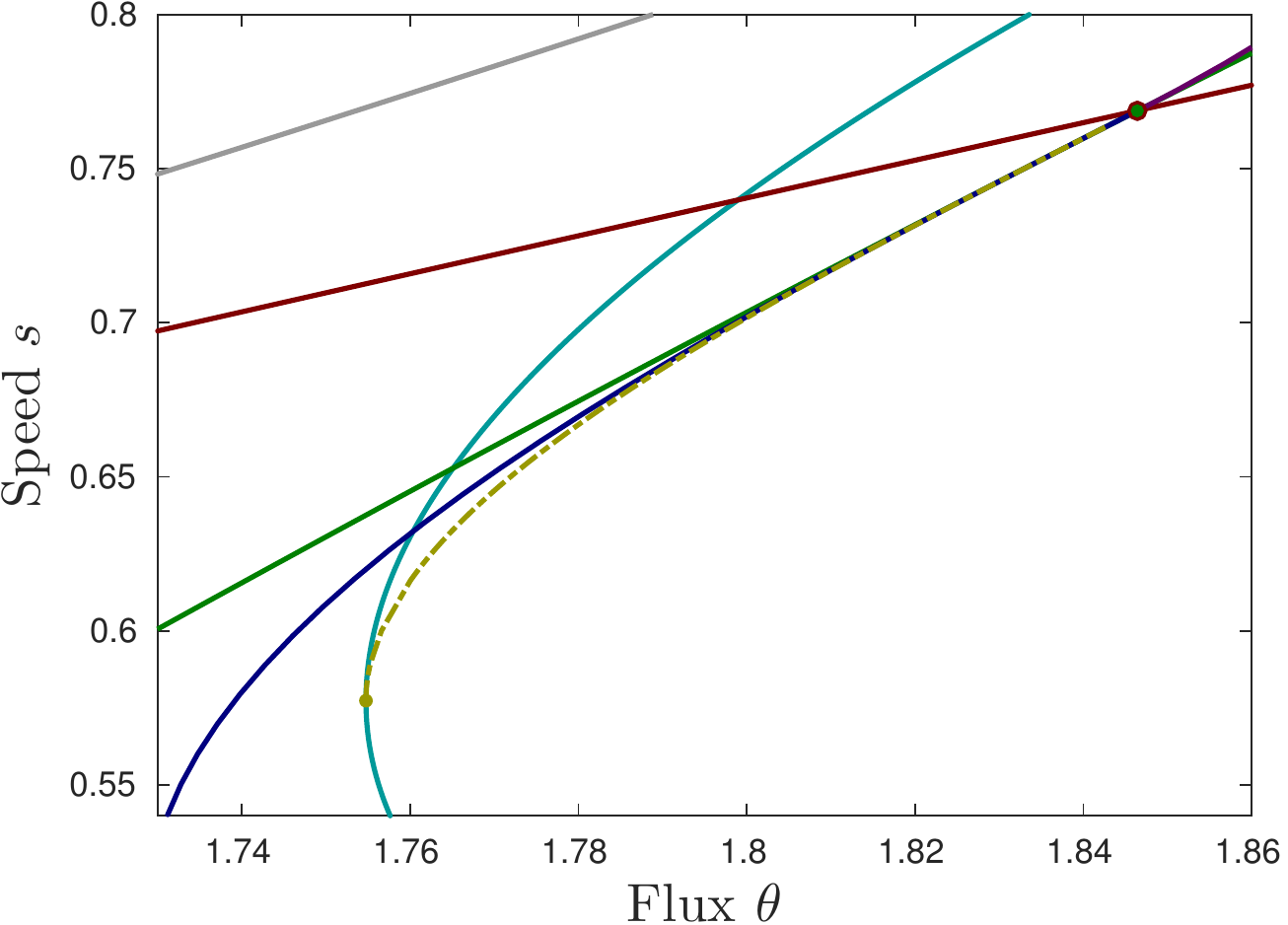}
\caption{Bifurcation diagrams with all bifurcations (left) and zooms (middle and right). Lines are saddle-node (gray) with BT point (dark gray $(2,1)$); upper edge (red) and  lower edge (green), both with endpoints on the saddle-node curve ($(1.314,0.432)$ and $(2.229,1.242)$, resp.), from where they continue on that curve; Hopf (light blue) and saddle-node of periodic orbits (yellow dashed), terminating on a degenerate Hopf point ($(1.755,0.577)$, yellow dot) and heteroclinic loop/Maxwell point ($(1.847,0.769)$, green/red circle); homoclinic bands (dark purple) and gaps (light purple), terminating at the BT point and the  heteroclinic loop. Periodic orbits exist in the area bounded by homoclinic, Hopf, and periodic saddle-node curves. The lower edge touches down at $\theta_0\sim 1.389$, the upper edge diverges $s\sim 0.9055\cdot  \theta^{2/3}$ for $\theta\to\infty$.} \label{f:bif}
\end{figure}
We find the theoretically predicted crossing of speeds of upper and lower edge heteroclinics, the Maxwell point, and the touch-down points, where heteroclinic orbits hit the saddle-node curve, that is, where the speed of the edge ceases to be larger than the group velocity of the vegetation state. Somewhat less intuitively, there exists a curve of single vegetation bands with very \emph{small} speed for large fluxes $\theta$. Interestingly, almost all homoclinic and periodic orbits are confined to a rather narrow zone in parameter space, bounded by homoclinic, Hopf, and a periodic saddle-node curve. The diagram shows a somewhat surprising complexity in the region including  BT point, Maxwell point, and a degenerate Hopf point. 

\paragraph{Undoing the scaling.}
The bifurcation diagram can be translated into the original variables, undoing the scaling \eqref{e:sc}, in a geometrically straightforward way, illustrated in Figure \ref{f:sc}, which also includes two sample-diagrams. 

\begin{figure}[h]
 \includegraphics[width=0.28\textwidth]{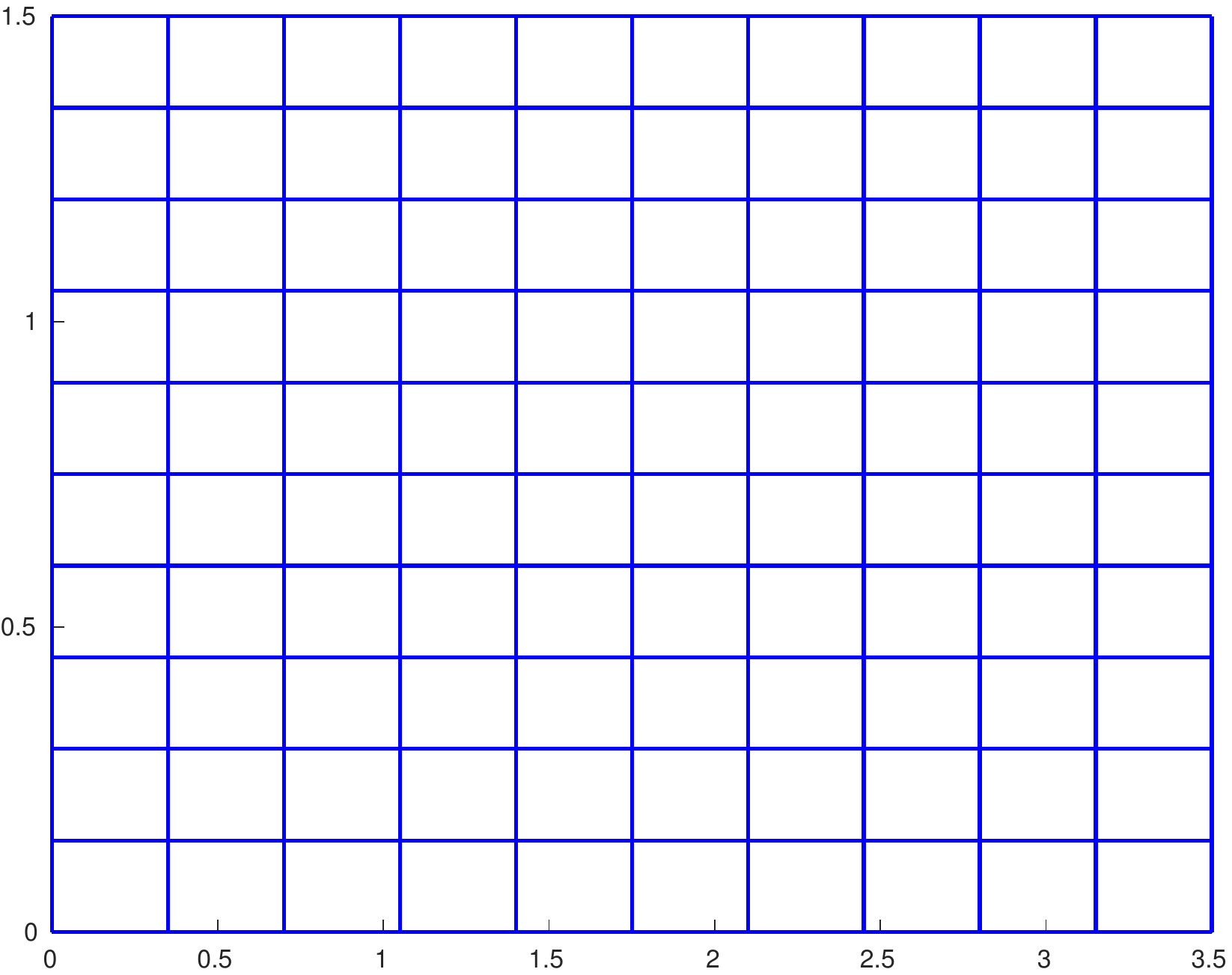}\hfill\raisebox{0.12\textwidth}{$\Longrightarrow$}\hfill\includegraphics[width=0.28\textwidth]{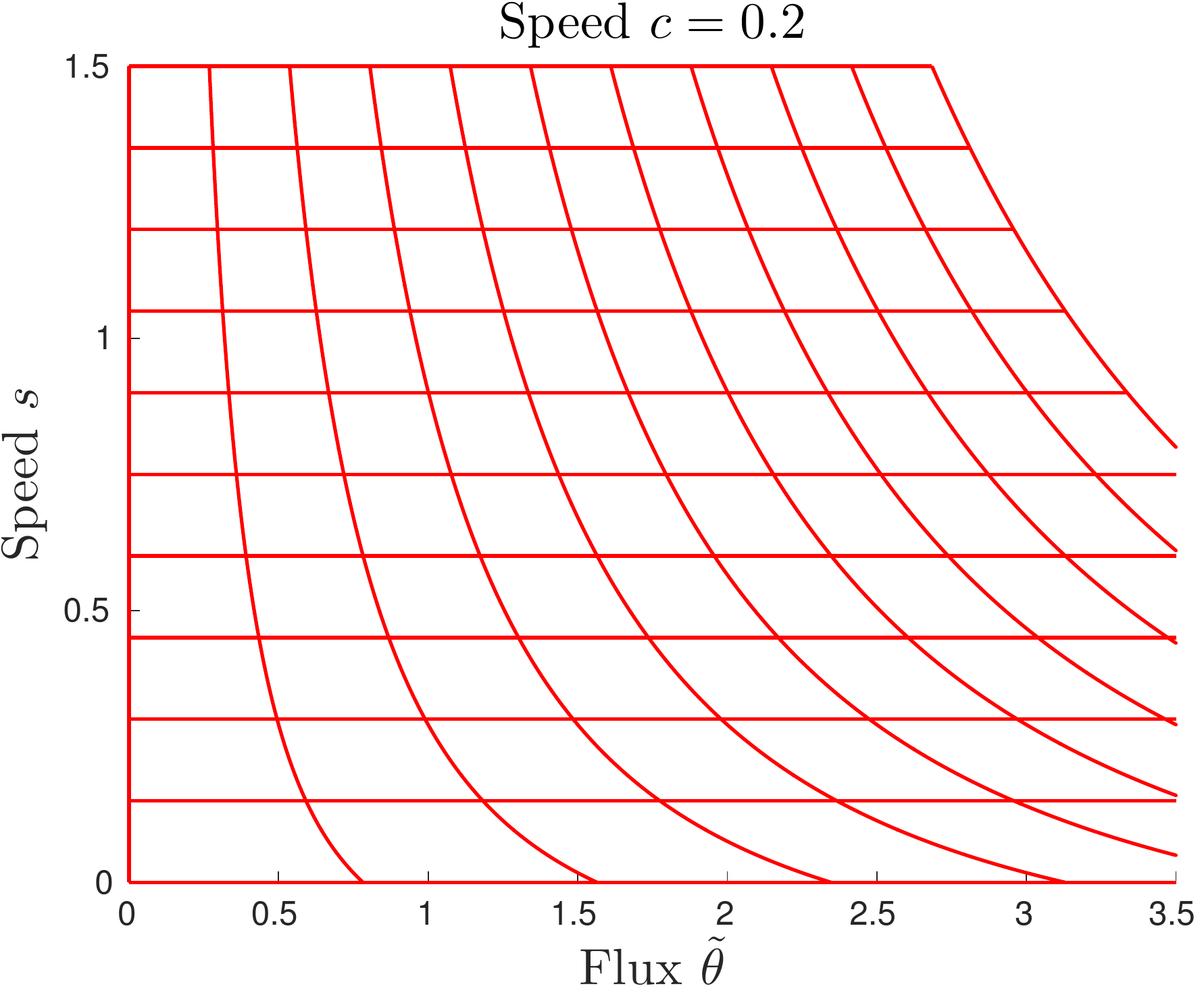}\hfill\includegraphics[width=0.28\textwidth]{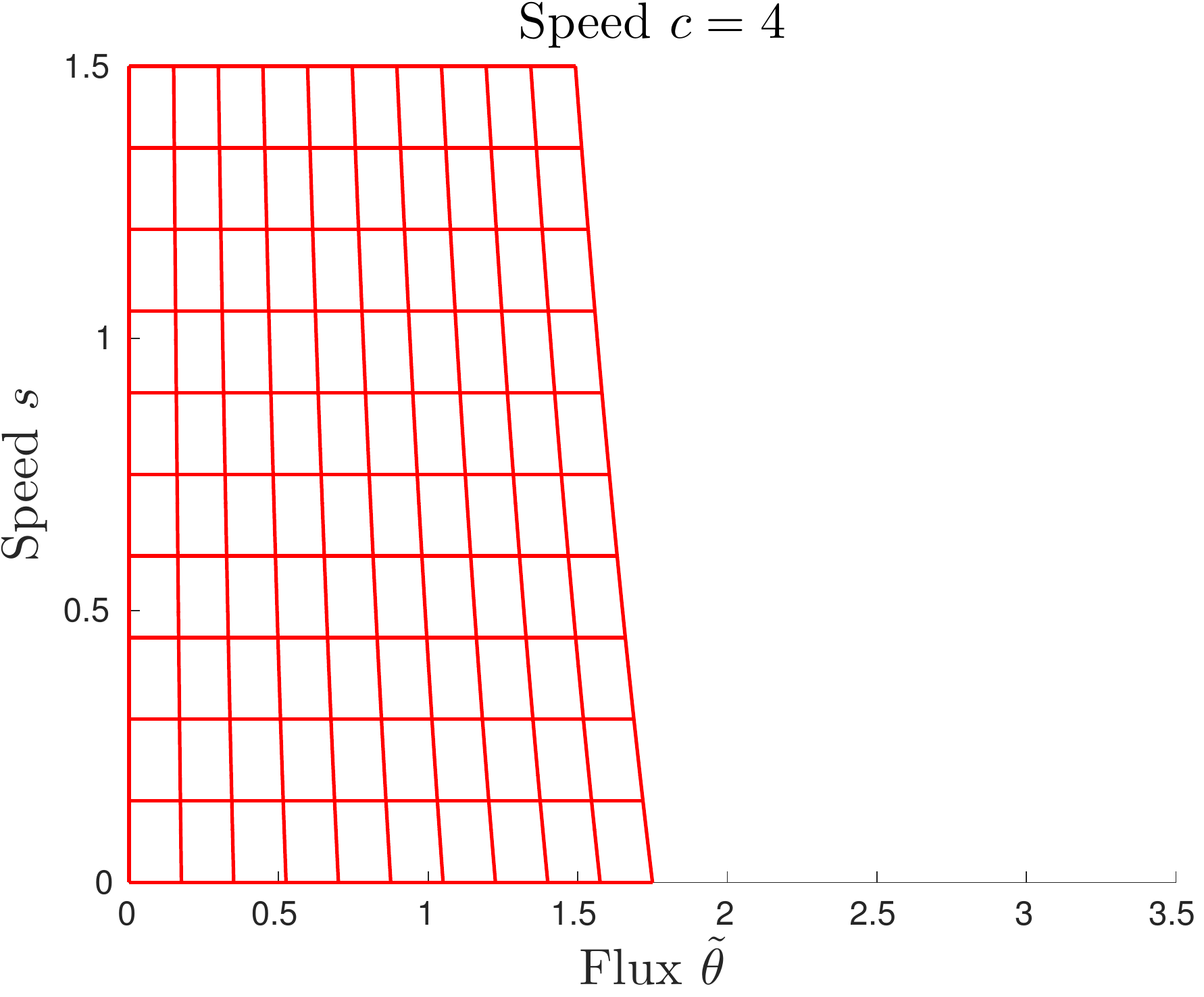}\\[0.2in]
 \includegraphics[width=0.28\textwidth]{bif_all}\hfill\raisebox{0.12\textwidth}{$\Longrightarrow$}\hfill\includegraphics[width=0.28\textwidth]{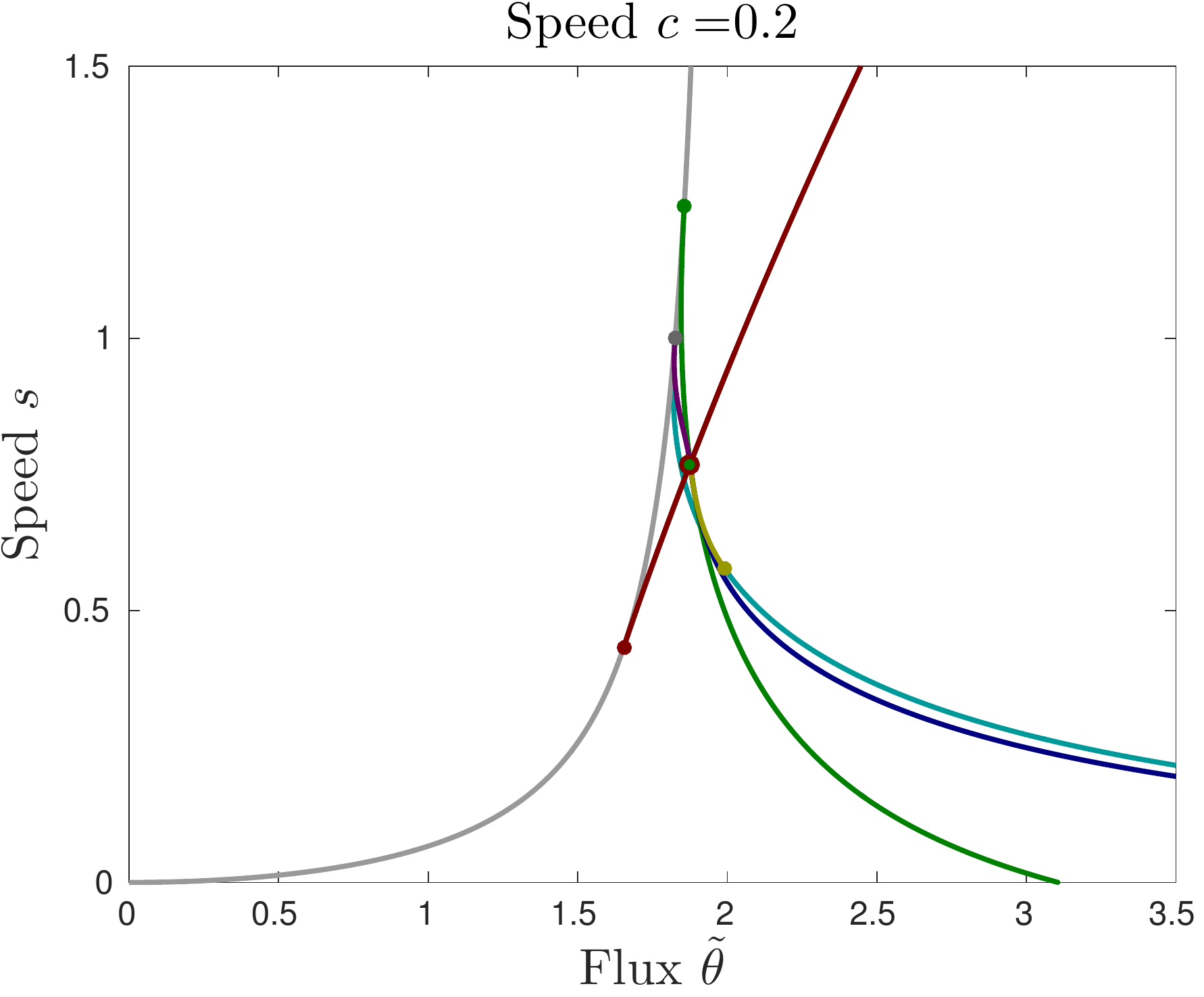}\hfill\includegraphics[width=0.28\textwidth]{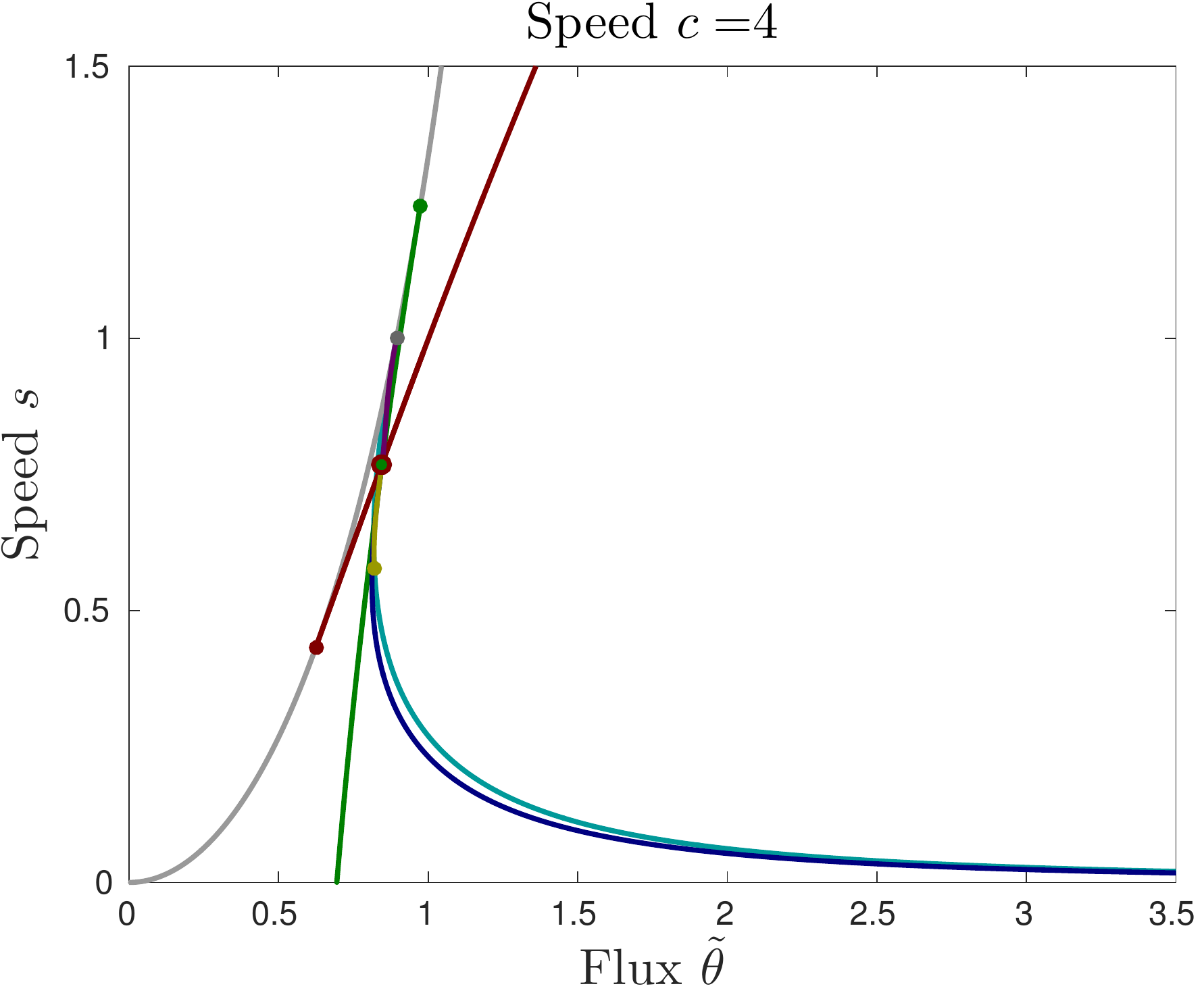} 
 \caption{Effect of scaling on a uniform grid in the $\theta-s$ plane  and on our bifurcation diagram from Figure \ref{f:bif} for speeds $c=.2$ and $c=4$.}\label{f:sc}
\end{figure}

For large advection speeds $c$, the effect of the scaling is simply a scaling of the flux by $\sqrt{c}$. For small speeds $\sqrt{c}$, however, the diagram is distorted, pushing the portion of the diagram with speed $s<1$ out to large fluxes $\scflux$. As a consequence, flux-speed relations for all periodic and homoclinic orbits will be monotone for large speeds. On the other hand, the flux-speed relation of the lower-edge heteroclinic changes monotonicity, exhibiting turning points for intermediate speeds $c\sim 0.5$. Finally, the saddle-node curve stays monotone for all speeds and flux-speed relation of the upper edge also remains monotone for arbitrarily small speeds. For very small speeds, existence of both upper and lower edges, as well as bands and periodic orbits, is guaranteed for $\scflux>2$, with speeds decreasing monotonically in $\scflux$ for all traveling waves but the upper edge; see Figure \ref{f:sc1}.

\begin{figure}
  \includegraphics[width=0.32\textwidth]{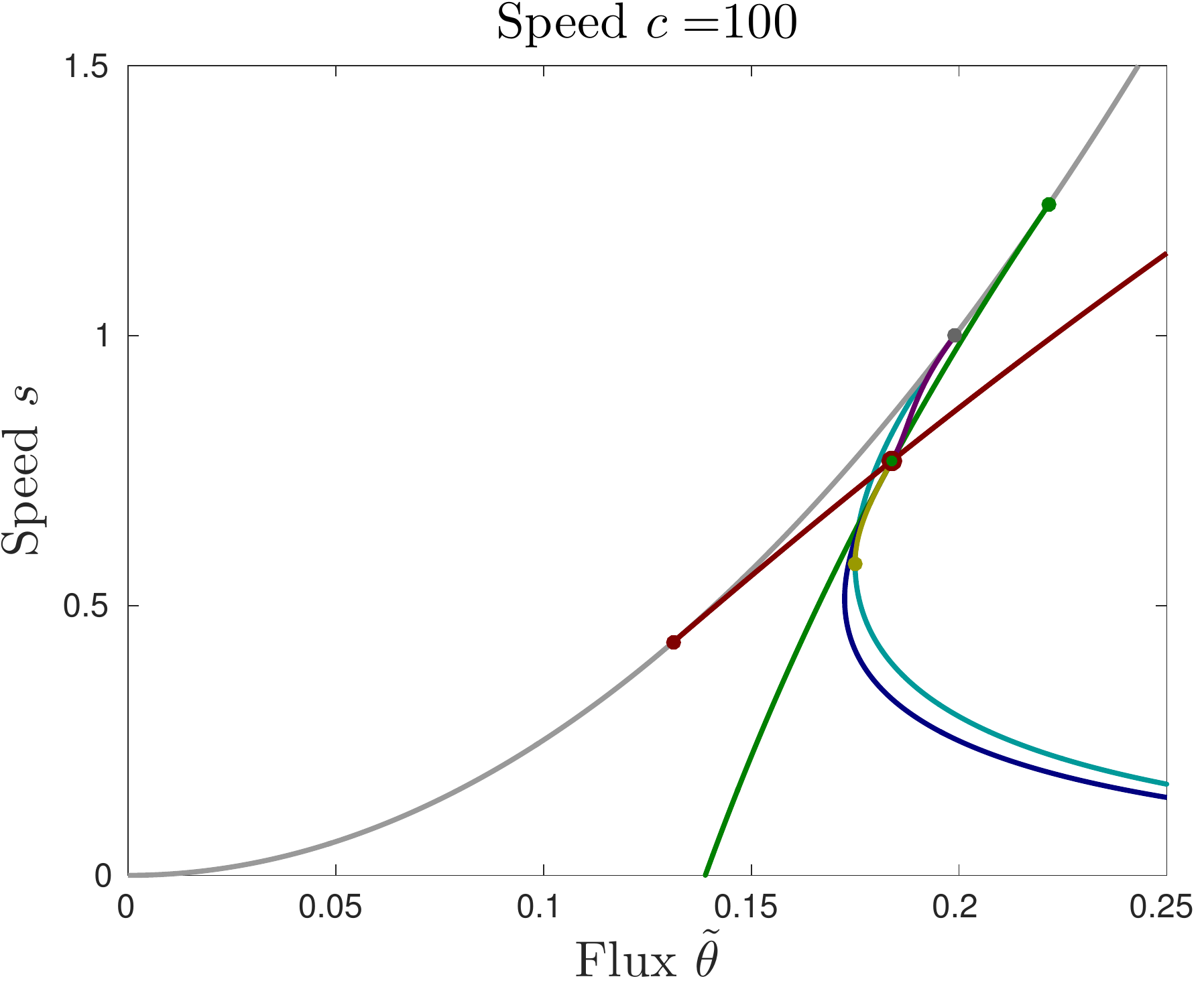}\hfill\includegraphics[width=0.32\textwidth]{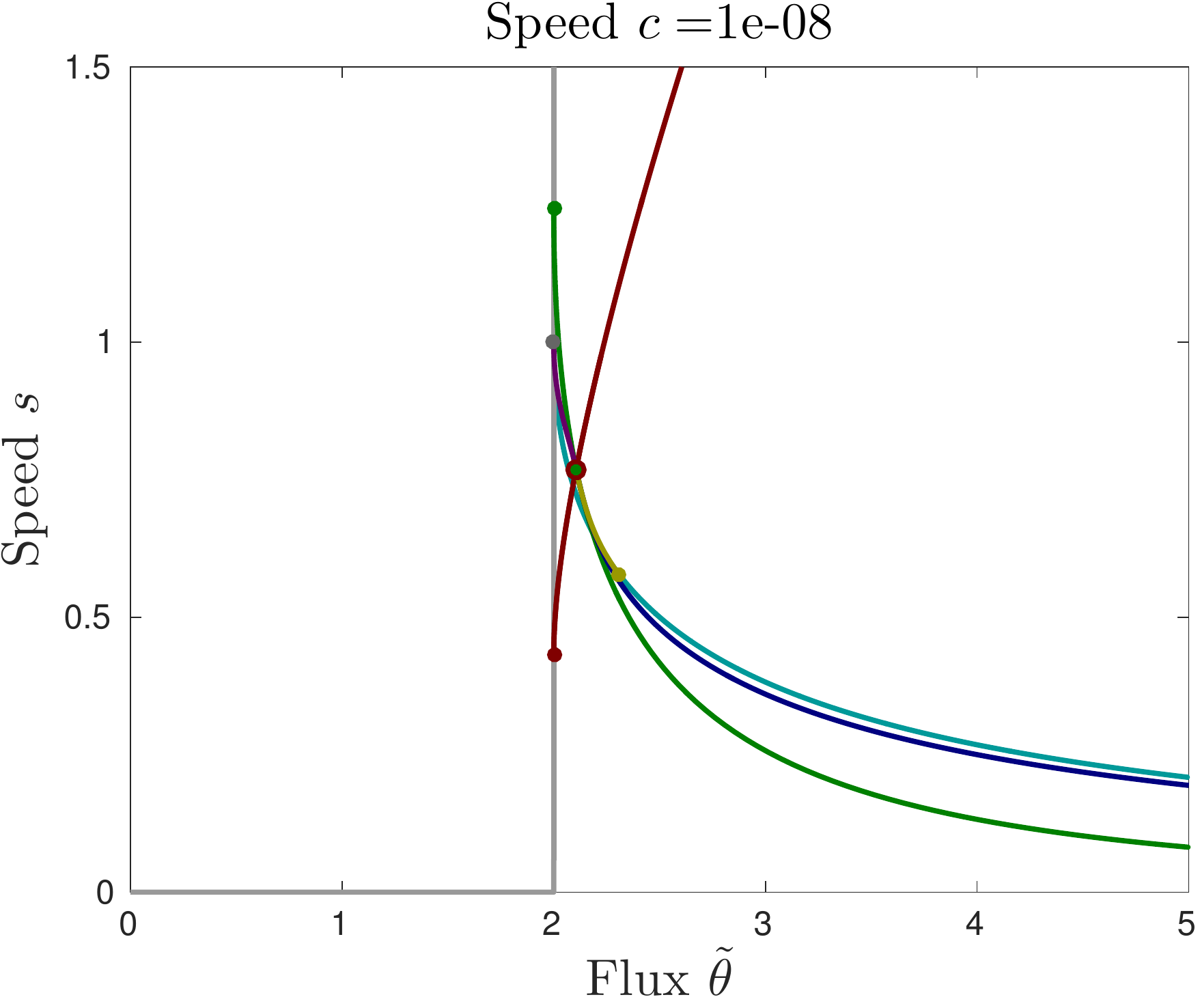}\hfill\includegraphics[width=0.32\textwidth]{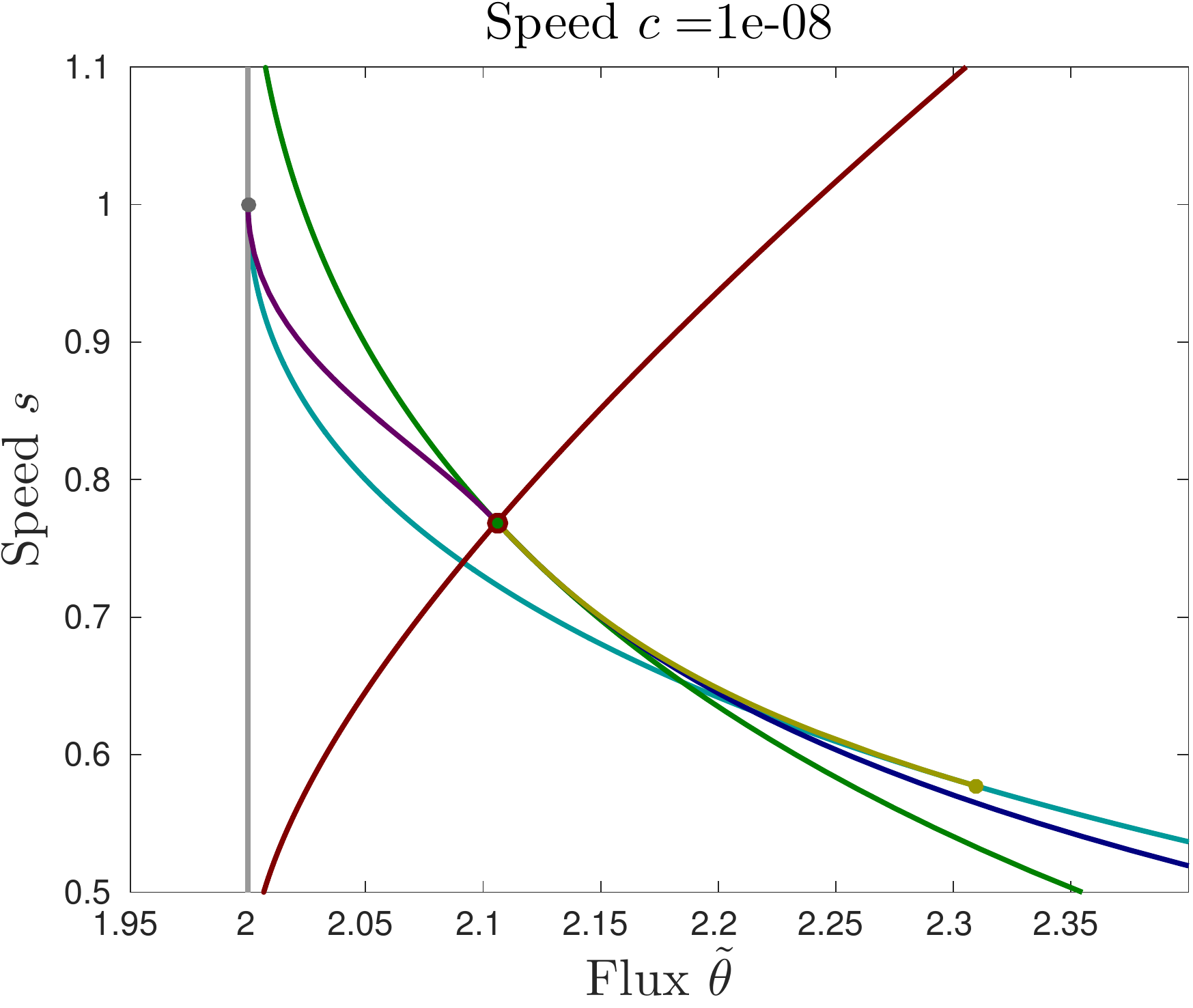} 
 \caption{Bifurcation diagrams, scaled according to large speed (left) and small speed (middle) with zoom (right); note the different scales on the axes.}\label{f:sc1}
\end{figure}

\subsection{Interpretation and implications}\label{s:int}

\paragraph{Upper and lower edge.}
We saw throughout the existence of upper and lower edges in direct simulations. Notice that, as apparent in Figure \ref{f:1b},  the profile of $b$ is monotone, apparent also from the proof, but $w$ is not necessarily monotone. In fact, $w$ is obtained from \eqref{e:twcl}, which includes the derivative of $b$ in addition to the monotone profile of $b$. Phenomenologically, the apparent peak of $w$ near the upper edge reflects an enhanced potential for growth in the leading edge of the front. 

We compared predicted speeds with speeds in direct simulations, with generally good agreement, suggesting in particular that the front solutions found here are stable as solutions of the PDE. For upper edges, we initiated a Riemann problem with prescribed concentrations $b=0,w=w_+$ at the right end and somewhat arbitrary values $b,w>0$ at the left end. We observed that the leading edge front selects the state in its wake in the following sense. Define $b_-^*=1/w_-^*$, the state in the wake selected by the front. Setting up Riemann problems with $b_-=1/w_->b_-^*$, the characteristic speed in the wake $c_\mathrm{g}(b_-)$ is smaller than the speed $s$ of the upper edge. One finds, as illustrated in Figure \ref{f:ue}, left panel, that the region between the upper edge $b\sim b_-^*$ and the wake $b\sim b_-$ is filled in by a rarefaction wave with approximately linear profile of $b$. If $b_-<1/b_-^*$ on the other hand, then the upper edge changes as a bound state between a Lax shock connecting $b_-$ and $b_-^*$ and the upper edge. These bound states are, in our traveling-wave problem, heteroclinic orbits connecting to the middle equilibrium. Reducing $b_-$ further, the left equilibrium undergoes a sideband instability and more complicated dynamics ensue; see Figure \ref{f:d}, left panel. When the heteroclinic is of saddle-node type, that is, when group velocities in the vegetation state equal speed of propagation, one notices distinctly slower spatio-temporal rates of convergence. We did not attempt to investigate those quantitatively. 

\begin{figure}[h]
\begin{minipage}{0.6\textwidth}
 \includegraphics[width=0.48\textwidth]{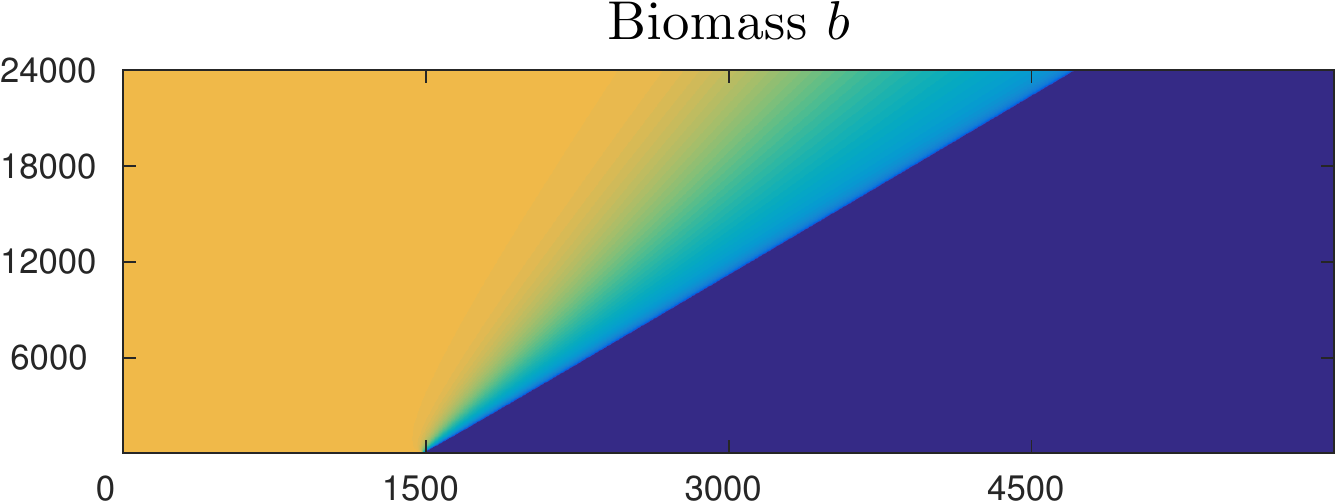}\hfill\includegraphics[width=0.48\textwidth]{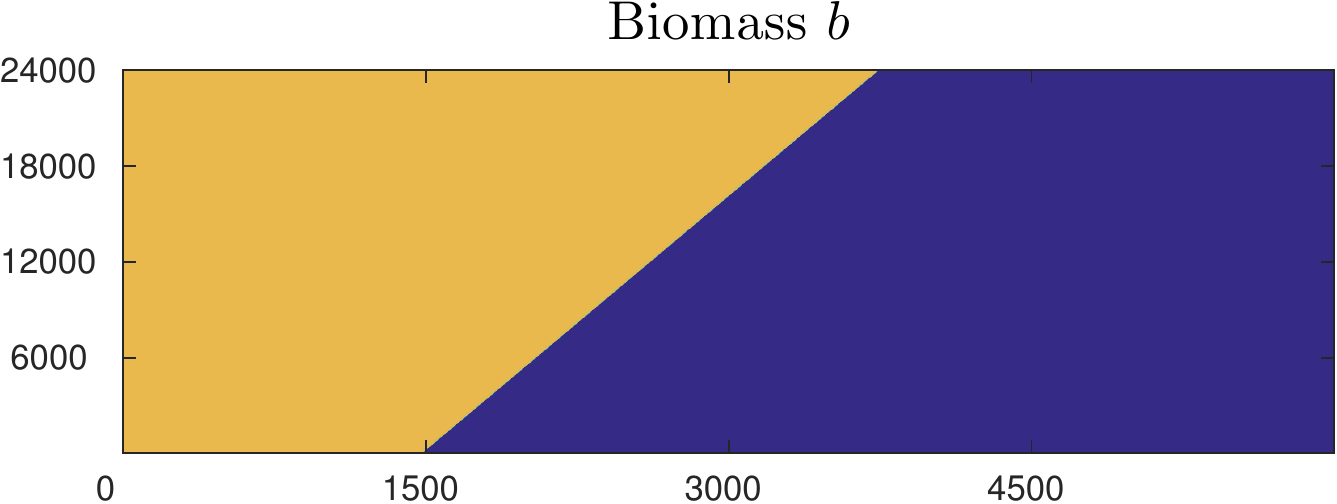}\\
 \includegraphics[width=0.48\textwidth]{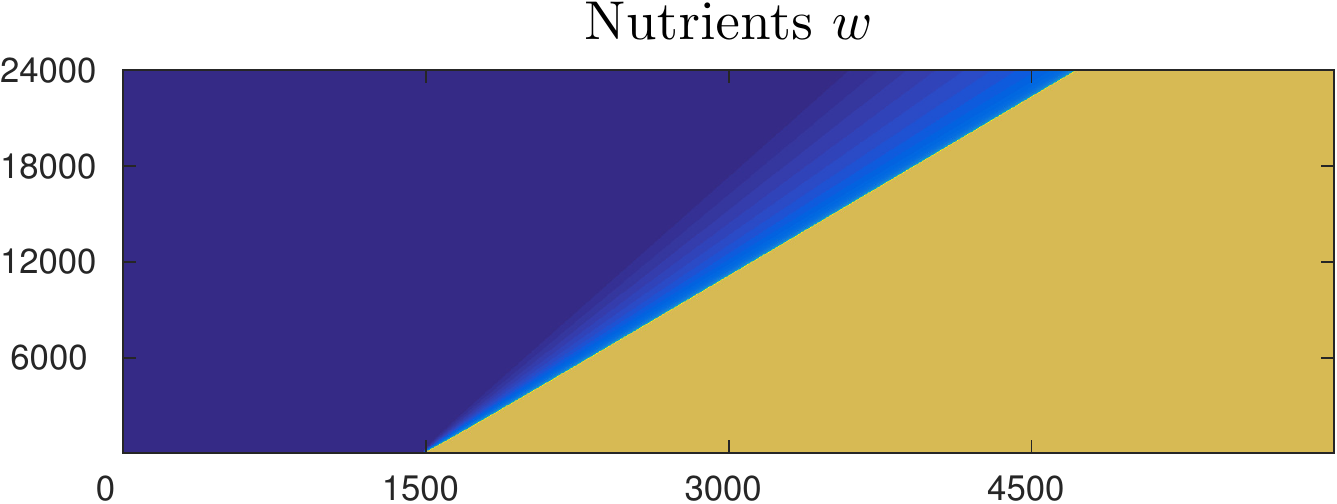}\hfill\includegraphics[width=0.48\textwidth]{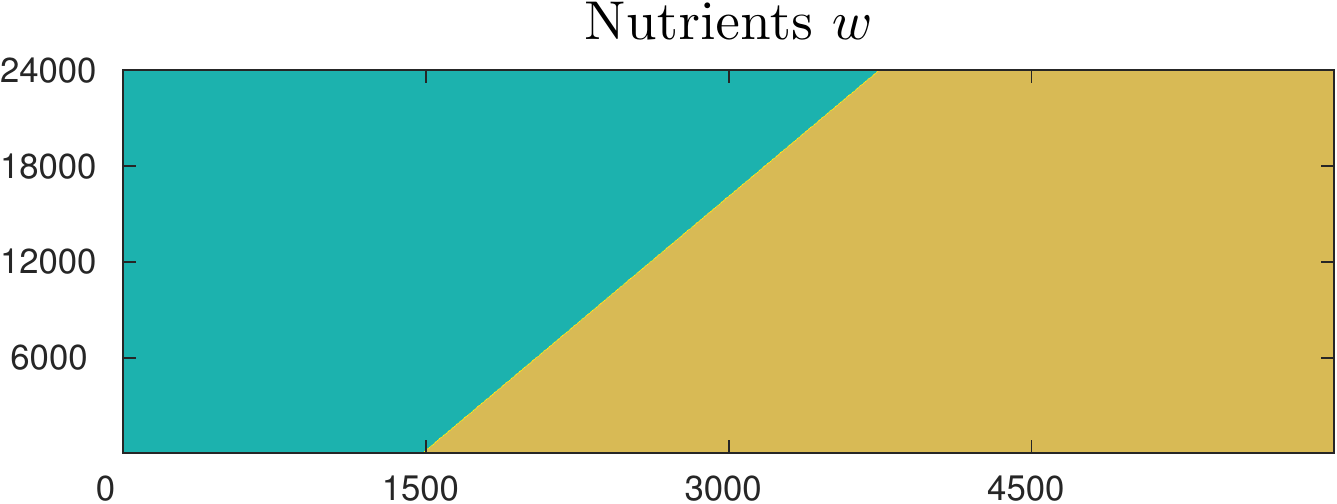}\\
 \includegraphics[width=0.48\textwidth]{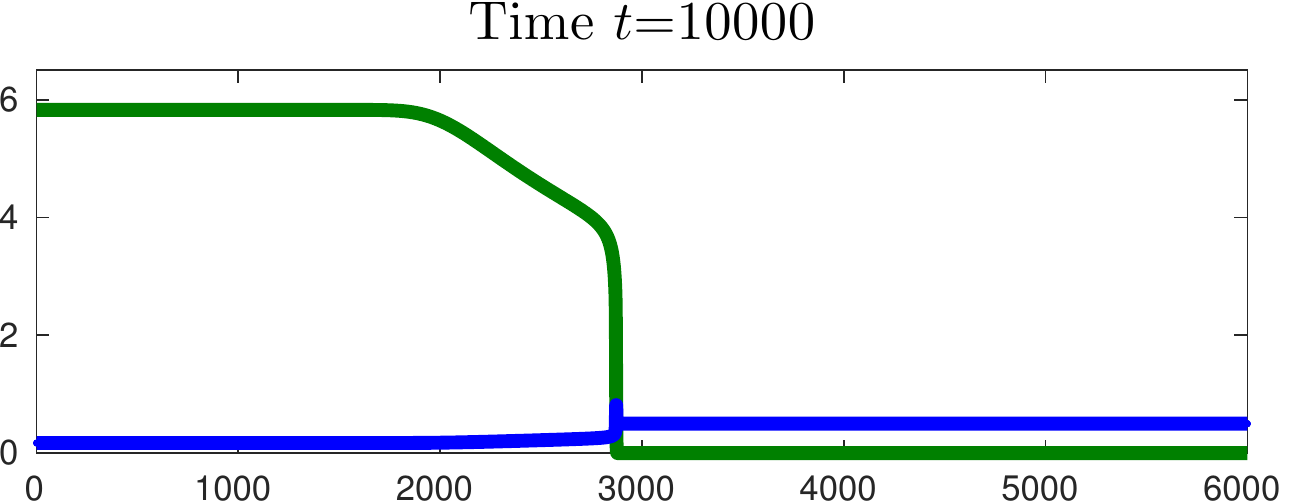}\hfill\includegraphics[width=0.48\textwidth]{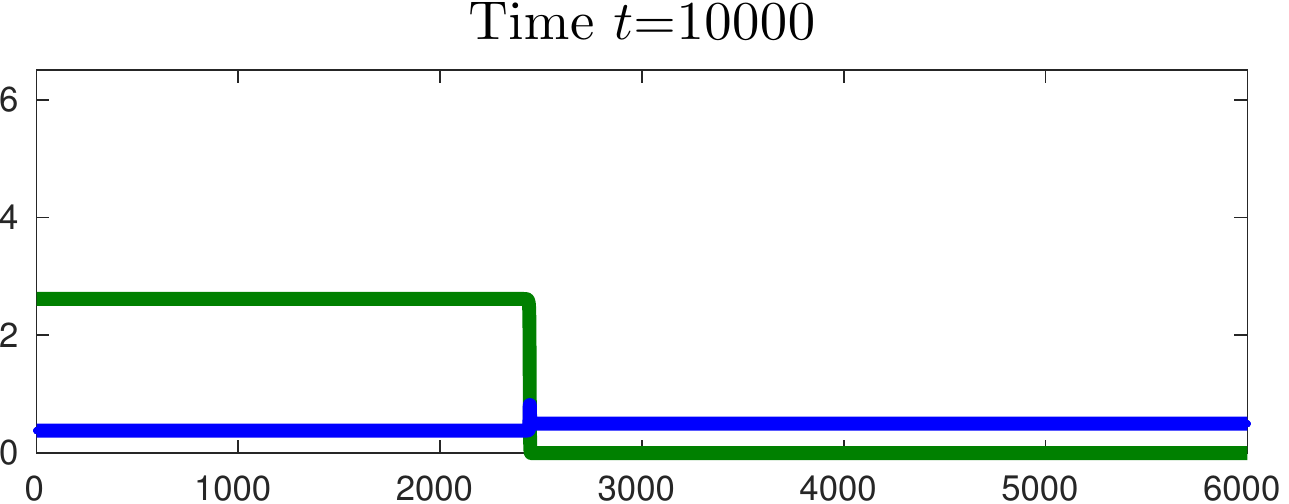}
\end{minipage}\hfill
\begin{minipage}{0.32\textwidth}
 \includegraphics[width=\textwidth]{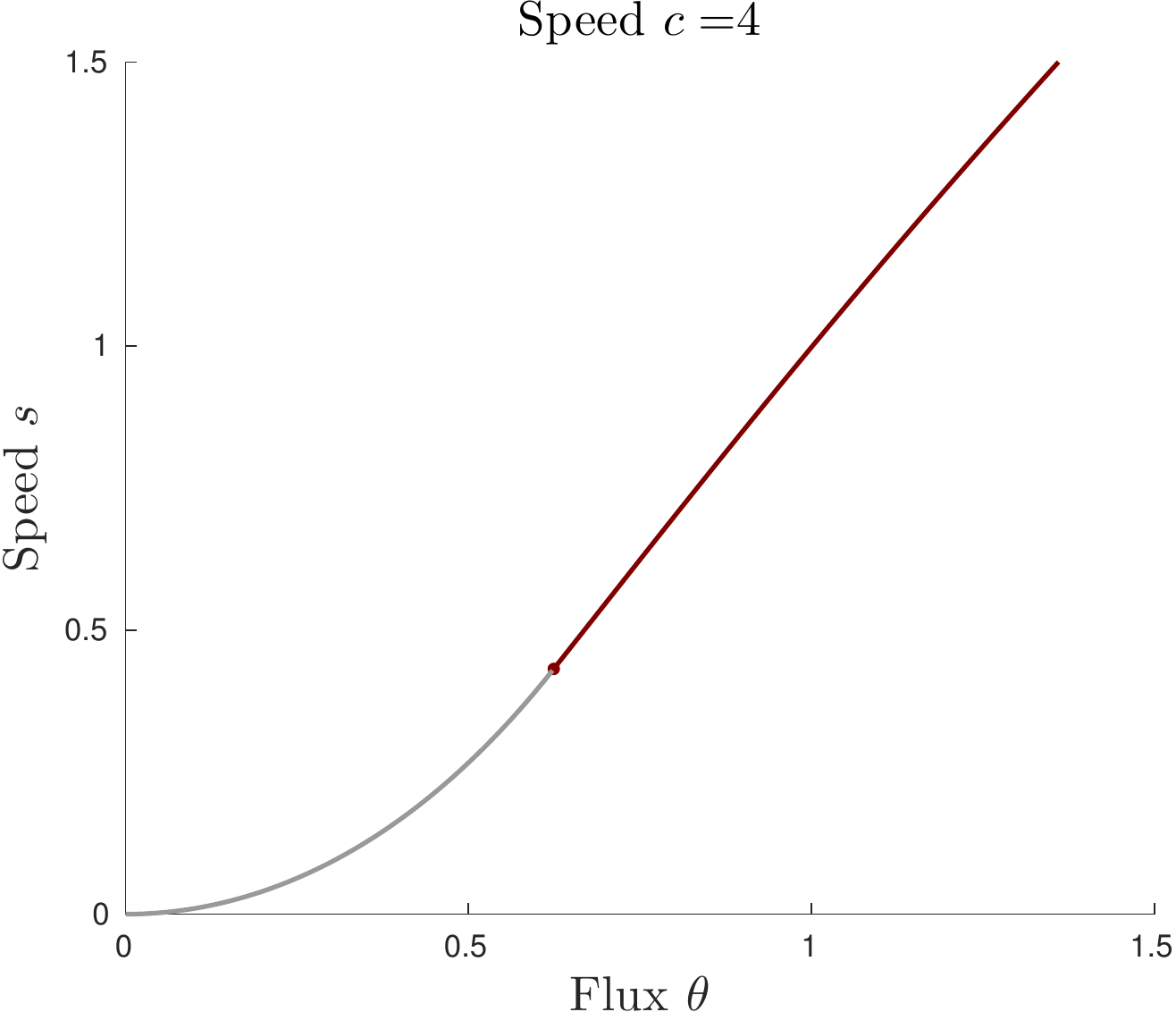}
\end{minipage}
  \caption{Upper edge dynamics for Riemann problems with smaller group velocity, generating an interpolating rarefaction wave (left), and with larger group velocity, generating a bound state of Lax shock and upper edge (middle). Comparisons of predictions from Section \ref{s:ode} with speed measurements in direct simulations for $c=4$ (right).  }\label{f:ue}
\end{figure}
For the lower edge, we set up the reflected Riemann problem. The resulting dynamics are very much equivalent and we omit detailed results, here. Group velocities in the vegetation state larger than the speed of the lower edge lead to rarefaction waves, smaller group velocities lead to bound states of lower edges and Lax shocks. In this sense, nutrient flow below the lower edge selects the vegetation state uphill. 
%
%
%
%
%

%
%
%
%
%

\paragraph{Small diffusion or large advection.}

Small diffusion $d_b=\varepsilon^2$ in \eqref{e:rd} amounts, after scaling of space and time, to a large parameter $c\mapsto c/\varepsilon$ and small effective speeds $s\mapsto \varepsilon s$. Assuming $\eps=0.01$ and $c=1$, we find the bifurcation diagram in the left panel of Figure \ref{f:sc1}, where speeds $s$ need to be multiplied by $\eps$. We find that most of the complexity is now confined to a very narrow region of fluxes $\scflux$, such that for most values of nutrient flow $w_+$, say, we find a 
upper and lower edges with different speeds, and single and periodic vegetation bands with comparatively small speeds ($\scflux\gtrsim 0.2$). 

In particular, small diffusion in the presence of an $\rmO(1)$ value of the nutrient flow $w$ in the vegetation-less state $b=0$ gives $\scflux\sim 1$, $\theta\sim \sqrt{s+1/\varepsilon}$, in \eqref{e:sc}. Therefore, $\theta\gg 1$ such that lower edges occur at $s=\theta^2$, which gives $s\sim 1/\varepsilon$, with resulting unscaled speed $s_\mathrm{eff}\sim 1$. For upper edges, $s\sim \theta^{2/3}$,  which gives $s\sim \varepsilon^{-2/3}$ and effective small speeds $s_\mathrm{eff}\sim \varepsilon^{1/3}$. Speeds of vegetation bands are yet smaller, $s\sim \theta^{-2}$, which gives $s_\mathrm{eff}\sim \varepsilon^{2}$.

\paragraph{The conservation law formalism --- undercompressive versus Lax shocks.}
In a short summary, our main results are existence results for  shocks that are \emph{not} the typical Lax shocks, but rather various types of undercompressive shocks, degenerate Lax shocks,  and spatio-temporally periodic solutions. This aspect of the traveling-wave solutions is illustrated in Figure \ref{f:char}. The existence of these shocks can in some limits be understood in relation to the sideband instability, which  causes the Bogdanov-Takens point and generates curves of homoclinic and periodic solutions. The other ingredient is the inherent difficulty with the reduction to a scalar conservation law \eqref{e:redcl}: the global elimination is ill defined since equilibrium branches $w=1/b$ and $b=0$ are not connected, separated by a region $0<b<1$ with negative effective viscosity, and a justification beyond the small-amplitude approximation in \eqref{e:lwl} seems unrealistic.
\begin{figure}[h]
 \centering\includegraphics[width=\textwidth]{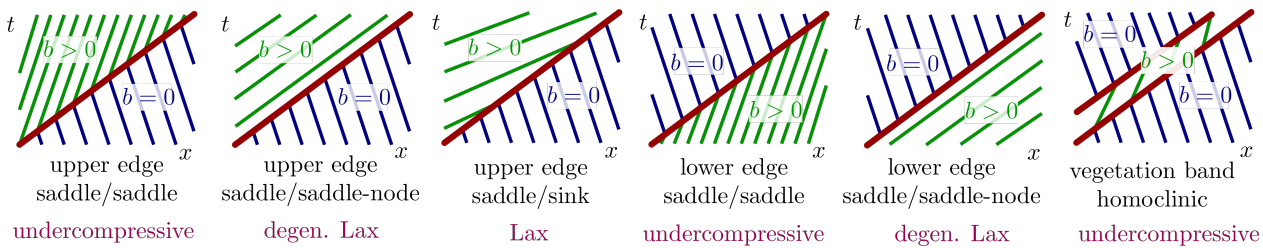}
 \caption{Schematic plot of upper edges, lower edges, and vegetation bands with direction of characteristics added in space-time plots illustrating the undercompressive nature of the traveling-wave solutions found here. Note that characteristics are always sloped to the left, negative speed, when $b=0$, and sloped to the right when $b>0$, while all traveling waves (shocks) propagate to the right, uphill. }\label{f:char}
\end{figure}

\section{Analysis --- existence proofs}\label{s:p}

We first establish existence using monotonicity in $\theta$ and $s$, Section \ref{s:mon}. We then investigate limits of lower and upper edge heteroclinics, Section \ref{s:thinf} and \ref{s:th0}.

\subsection{Monotonicity and existence}\label{s:mon}

We prove Theorem \ref{t:l} in detail. We merely outline the differences in the proof of Theorem \ref{t:u}, which is conceptually similar but requires a different set of coordinates. Both proofs rely on phase plane analysis, constructing invariant regions and using the Poincar\'e-Bendixson Theorem. The key ingredient is a monotonicity, in appropriate coordinates, with respect to $\theta$ and $s$, that implies an ordering of stable and unstable manifolds. Interestingly, the ordering property holds in a different set of coordinates for upper and lower edge heteroclinics. 

\subsubsection{Proof of Theorem \ref{t:l}}
We begin by examining the local behavior of the traveling-wave ODE \eqref{e:tw} near the equilibrium $(0,0)$. The linearization is hyperbolic and we denote the unstable manifold by $\mathcal W^u$ with subscripts $(\theta,s)$ when necessary.

\begin{Proposition}[Position of the Local Unstable Manifold $\mathcal W^u$] \label{local}
The following hold in a sufficiently small neighborhood of $(0,0)$:
\begin{enumerate}
\item $\mathcal W^u$ lies above the line $v=sb$ for any choice of $(\theta,s)$.
\item If $\theta_1<\theta_2$, then the unstable manifold $\mathcal W^u_{\theta_1}$ lies above the unstable manifold $\mathcal W^u_{\theta_2}$ for any fixed $s$.
\item If $s_1<s_2$, then the unstable manifold $\mathcal W^u_{s_2}$ lies above the unstable manifold $\mathcal W^u_{s_1}$ for any fixed $\theta$.
\end{enumerate}
\end{Proposition}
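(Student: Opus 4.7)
The plan is to realize the local unstable manifold as a graph $v=\phi(b;\theta,s)$ over small $b>0$ and read off the three orderings directly from the first two Taylor coefficients of $\phi$.

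The linearization at the origin is
\[
J=\begin{pmatrix}-s & 1\\ 1 & 0\end{pmatrix},
\]
with eigenvalues $\lambda_\pm(s)=\tfrac12(-s\pm\sqrt{s^2+4})$ satisfying $\lambda_+>0>\lambda_-$, and unstable eigenvector of slope
\[
a_1(s):=s+\lambda_+(s)=\tfrac12\bigl(s+\sqrt{s^2+4}\bigr).
\]
Since $a_1(s)>s$, the branch of $\mathcal W^u$ with $b>0$ is tangent to a line strictly above $v=sb$, giving (i). Since $a_1'(s)=\tfrac12\bigl(1+s/\sqrt{s^2+4}\bigr)>0$, the tangent slopes of $\mathcal W^u_{s_1}$ and $\mathcal W^u_{s_2}$ differ already at the origin, so
\[
\phi(b;\theta,s_2)-\phi(b;\theta,s_1)=\bigl(a_1(s_2)-a_1(s_1)\bigr)b+O(b^2)>0
\]
for all sufficiently small $b>0$, which is (iii).

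For (ii), the linearization (and hence $a_1$) is independent of $\theta$, so the two manifolds $\mathcal W^u_{\theta_1}$ and $\mathcal W^u_{\theta_2}$ are tangent at the origin and the distinction must appear at second order. Substituting $\phi(b)=a_1 b+a_2 b^2+O(b^3)$ into the invariance relation
\[
\phi'(b)\bigl(\phi(b)-sb\bigr)=b-\theta b^2+\phi(b)\,b^2,
\]
the $O(b)$ balance recovers the characteristic polynomial identity $\lambda_+(s+\lambda_+)=1$, and the $O(b^2)$ balance yields
\[
a_2(\theta,s)=-\frac{\theta}{s+3\lambda_+(s)}.
\]
Since $s+3\lambda_+>0$, $a_2$ is strictly decreasing in $\theta$, so for $\theta_1<\theta_2$ the difference $\phi(b;\theta_1,s)-\phi(b;\theta_2,s)=\bigl(a_2(\theta_1)-a_2(\theta_2)\bigr)b^2+O(b^3)$ is positive for $b>0$ small enough; this is (ii).

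The main obstacle is (ii): because the linear eigenvector is $\theta$-independent, one is forced to go to the quadratic term and must verify that the cubic remainder does not spoil the ordering on the neighborhood in question. This follows from smoothness of the local unstable manifold with respect to $(\theta,s)$, which makes the remainder estimates uniform on compact parameter sets and allows a single neighborhood of the origin to be chosen on which all three statements hold. A minor point worth flagging is that (i) and (iii) really concern the branch $b>0$ (the physically relevant one for later constructing heteroclinics to $w_+$); on the $b<0$ branch the same slope comparisons give the reversed inequalities, whereas (ii), being controlled by the even power $b^2$, extends to both branches.
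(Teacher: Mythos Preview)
Your proof is correct and follows essentially the same route as the paper: compute the linearization at the origin to read off the unstable eigendirection slope $a_1=\tfrac12(s+\sqrt{s^2+4})$, use $a_1>s$ and $a_1'(s)>0$ for (i) and (iii), and expand the graph of $\mathcal W^u$ to second order to extract the $\theta$-dependence $a_2=-\theta/(s+3\lambda_+)$ for (ii). Your added remarks on uniformity in parameters and on the $b<0$ branch are helpful clarifications but do not change the argument.
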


\begin{proof}
The linearization of \eqref{e:tw} at $(0,0)$ is
\[
\begin{bmatrix}
-s & 1\\
1 & 0
\end{bmatrix} \qquad \text{with unstable eigenvector} \qquad \mathbf e_u =
\begin{bmatrix}
2\\
s+\sqrt{s^2+4}
\end{bmatrix}.
\]
The slope of $\mathbf e_u$ is greater than $s$, the slope of the line, thus proving {\it(i)}.

For {\it(ii)}, note that $b'>0$ above the line $v=sb$. Thus, part {\it(i)} allows us to smoothly parametrize the local unstable manifold as a function of $b$. Let $h_i(b)$ be such a parametrization so that $\mathcal W^u_{\theta_i} = \text{graph~} h_i$, for $i = 1,2$. Assuming that $\theta_1<\theta_2$, we will show that $h_1>h_2$ in a neighborhood of $(0,0)$. 

Expanding in $b$, we write 
\[ h_i(b) = h_{i,1} b + h_{i,2} b^2 + \mathcal O(b^3), \qquad \text{for } i = 1,2, \]
where $h_{i,1} = \tfrac{1}{2}(s+\sqrt{s^2+4})$, the slope of $\mathbf e_u$. We compute the derivative $v'_i$ in two ways, first using the chain rule $v_i' = h_i'(b) b'$ and second by plugging into the ODE \eqref{e:tw}. Setting coefficients of $b^2$ equal, we find
\[
h_{i,2}= \frac{-\theta_i(s+3\sqrt{s^2+4})}{2s^2+9}.
\]
So $h_{1,2} > h_{2,2}$ and we may choose $b$ sufficiently small to guarantee that $h_1(b) > h_2(b)$.

Part {\it(iii)} follows from a similar argument, made even simpler by the fact that the monotonicity is encoded in the linear coefficient in the expansion of $h_i$.
\end{proof}

\paragraph{The phase plane.} 
For given parameter values of $(\theta, s)$, we define a region in the phase plane of \eqref{e:tw}
\[ 
\displaystyle\Sigma \defeq \left\{(b,v)\mid b\geq 0, \max\{sb, \theta-\tfrac{1}{b}\}\leq v\leq \theta\right\},
\]
bounded below by the nullclines
\begin{align*}
\gamma	&\defeq	\{b'=0\} = \{ (b, v) \mid v = sb \}\\
\alpha	&\defeq	\{ v' =0 \} = \left\{ (b, v) \mid  v = \theta - \tfrac{1}{b} \right\},
\end{align*}
and bounded above by the line $\beta \defeq \{ v= \theta \}$; see Figure \ref{f:simple_bif}. 

From Proposition \ref{local}, we know that, near $(0,0)$, $\mathcal W^u$ starts in $\Sigma$. 
By computing the direction of the vector field on $\partial \Sigma$, one sees that $\mathcal W^u$ may only exit $\Sigma$ by crossing $\alpha$ or $\beta$. We treat $\Sigma$ as a semi-invariant region and formalize a dichotomy in Proposition \ref{dichotomy}.

%
%
%
%
%
%
%

\begin{figure}
\centering\includegraphics[width=0.9\textwidth]{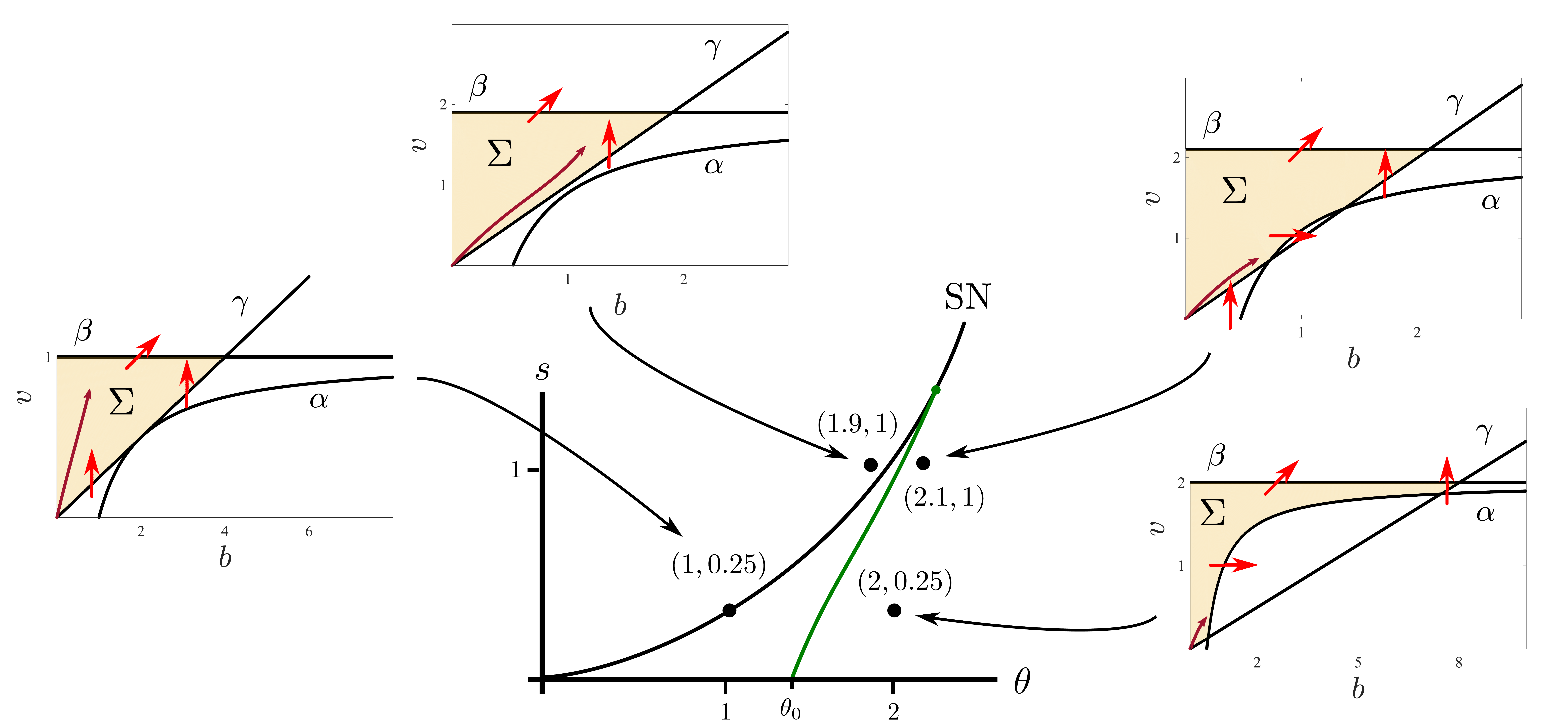}
\caption{A partial bifurcation diagram including the saddle-node curve \eqref{e:sn} and the existence curve of lower edge heteroclinics $s_\ell(\theta)$ (green). For various parameter values $(\theta,s)$ insets show: the semi-invariant region $\Sigma$ (shaded gold); its boundaries $\alpha, \beta, \gamma$ (black); selected arrows from the vector field (red); and an initial segment of the unstable manifold $\mathcal W^u$ (maroon).}\label{f:simple_bif}
\end{figure}

\paragraph{Dichotomy in the parameter plane.} We define two sets in the $(\theta,s)-$plane 
\begin{align*}
\Su	&\defeq	\{(\theta, s) \mid \mathcal W^u\cap \beta \neq \emptyset \} \\
\Sd	&\defeq	\{(\theta, s) \mid \mathcal W^u\cap \alpha  \neq \emptyset \}.
\end{align*}
We claim that both these sets are nonempty. For $\Su$, choose $\theta = 1$ and $s>\theta^2/4$. This choice lies above the saddle-node curve \eqref{e:sn} and so the curves $\alpha$ and $\gamma$ do not intersect. The vector field points only in the positive $v$ direction along the line $\gamma$, so $\mathcal W^u$ cannot cross $\gamma$ from above. Since $\mathcal W^u$ starts above $\gamma$ it must stay above $\gamma$. Thus it eventually intersects $\beta$. This type of invariant-region argument, verified by comparing slopes, will recur throughout the proof.

For $\Sd$, we choose $(\theta,s) = (4,1)$, which lies to the right of the Hopf curve \eqref{e:hopf}. By Proposition \ref{local}, near the origin we have that $\mathcal W^u$ lies beneath the line $v = 4 b$. The slope of the vector field along $v = 4b$ is less than $4$, so $\mathcal W^u$ must remain below $v = 4b$. This forces an intersection with $\alpha$ because $v = 4 b$ is tangent to $\alpha$ at $b = \tfrac{1}{2}$.

Additionally, these two sets are disjoint. If they were not, we let $(b_*,v_*)$ be the point at which $\mathcal W^u$ first exits $\Sigma$. If $(b_*,v_*)\in \beta$, then for $b>b_*$ the unstable manifold must remain in the invariant rectangle $\{ (b,v)\mid b>b_*, v>\theta \}$ and thus can never intersect $\alpha$. If $(b_*,v_*)\in \alpha$, then after $(b_*,v_*)$ the unstable manifold $\mathcal W^u$ must remain below itself for $b<b_*$ and below the line $\{(b,v)\mid v = v_*\}$ for $b>b_*$ 
This nearly proves the next proposition.

\begin{Proposition}[The $\Su,\Sd$ Dichotomy] \label{dichotomy}
The interior of the first quadrant of the $(\theta,s)$ parameter plane is partitioned into the two nonempty sets $\Su$ and $\Sd$.
\end{Proposition}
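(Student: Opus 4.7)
The plan is to show that $\Su \cup \Sd$ exhausts the open first quadrant, since non-emptiness and disjointness are already in hand from the preceding paragraphs.

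First I would tabulate the direction of the vector field on each component of $\partial\Sigma$. On the $\gamma$-pieces of $\partial\Sigma$ (namely $b\in(0,b_-)\cup(b_+,\theta/s)$) one computes $b'=0$ and $v'=sb(b-b_-)(b-b_+)>0$, so the flow is vertical and strictly into $\Sigma$. On the $\alpha$-segment $b\in(b_-,b_+)$, $v'=0$ and $b'=v-sb>0$, so the flow is horizontal and transverse to $\alpha$, pointing out of $\Sigma$. On $\beta=\{v=\theta\}$, $b',v'>0$, so the flow exits upward. On the left boundary $\{b=0,\,0<v\leq\theta\}$, $b'=v>0$ points rightward into $\Sigma$. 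Hence any trajectory lying inside $\Sigma$ can leave $\Sigma$ only through $\alpha$ or $\beta$, and must do so transversely.

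Next I would argue by contradiction. Suppose the forward trajectory on $\mathcal W^u$ stays in $\Sigma$ for all $\xi\geq 0$. Since $\Sigma$ is compact, the $\omega$-limit set $\omega(\mathcal W^u)$ is a non-empty compact invariant subset of $\Sigma$, and by the Poincar\'e--Bendixson theorem is either an equilibrium, a periodic orbit, or a union of equilibria and connecting orbits. The equilibria in $\Sigma$ are $(0,0)$, $w_-$, and $w_+$, all on $\partial\Sigma$. A homoclinic back to the saddle $(0,0)$ is ruled out because its stable eigendirection has slope $\tfrac{s-\sqrt{s^2+4}}{2}<0$ and therefore leaves the closed first quadrant locally, while the orbit is assumed to remain in $\Sigma$.

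The principal obstacle is excluding periodic orbits in $\Sigma$, and this is where I would spend the most care. The boundary analysis above shows that a periodic orbit cannot meet $\alpha$ or $\beta$ (outward flow) nor $\gamma$ or $\{b=0\}$ (transverse inward flow), hence any such orbit must lie in the open interior of $\Sigma$. Since $\Sigma$ is simply connected, the bounded component enclosed by the orbit lies in $\Sigma$, and by index theory it must contain an equilibrium; but all three equilibria sit on $\partial\Sigma$, a contradiction. (As a back-up, a Bendixson--Dulac multiplier of the form $B(b,v)=b^{-k}$ with a suitably chosen exponent should yield a sign-definite divergence of $B(v-sb,\,-b^2(\theta-v)+b)$ on $\Sigma$.) Consequently, $\mathcal W^u$ either exits $\Sigma$ through $\alpha$ or $\beta$ in finite time (placing $(\theta,s)\in\Su$ or $\Sd$) or converges to $w_-$ or $w_+$, in which case $\mathcal W^u$ contains a limit point on $\alpha$ and $(\theta,s)$ is assigned to $\Sd$ under the closure interpretation of $\mathcal W^u\cap\alpha$; in either case the partition claim holds, with the heteroclinic curve of Theorem \ref{t:l} forming the shared boundary between $\Su$ and $\Sd$.
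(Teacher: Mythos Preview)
Your argument is correct, but it takes a longer route than the paper's. The key observation you overlook is that $b'=v-sb>0$ throughout the interior of $\Sigma$ (since the lower boundary is $v=\max\{sb,\theta-1/b\}\geq sb$). This single monotonicity fact does double duty: it immediately rules out periodic orbits (a periodic trajectory would force $b$ to return to its initial value) and it rules out a homoclinic return to the origin (the $b$-coordinate cannot decrease back to $0$). The paper therefore invokes Poincar\'e--Bendixson only in the weak form ``exit $\Sigma$ or converge to a boundary equilibrium,'' and then observes that the two nontrivial equilibria already lie on $\alpha$. Your detour through index theory for periodic orbits and the stable-eigendirection computation for the origin are both valid, but they are unnecessary once you notice that $b$ is a strict Lyapunov coordinate inside $\Sigma$. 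What your version buys is a more explicit accounting of the boundary flow, which is useful pedagogically; what the paper's version buys is brevity and a cleaner mechanism that also drives the later monotonicity arguments (Proposition~\ref{above}).
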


\begin{proof}
We have already shown $\Su$ and $\Sd$ are nonempty and disjoint. We only need to show that the union $\Su\cup \Sd$ covers the first quadrant. For any $(\theta,s)$, Proposition \ref{local} implies that $\mathcal W^u$ begins in $\Sigma$. For convenience, let $s>0$ so that the region $\Sigma$ is bounded with no equilibria in its interior. Thus, by the Poincar\'e-Bendixson Theorem the unstable manifold must either exit $\Sigma$ or converge to an equilibrium on $\partial \Sigma$. If $\mathcal W^u$ exits $\Sigma$, it must intersect either $\alpha$ or $\beta$ to do so, as previously stated.\footnote{This dichotomy is central to the proof and the reader may verify the statement by computing the vector field on $\partial \Sigma$.} 
If $\mathcal W^u$ does not exit $\Sigma$, it must converge to one of the three equilibria $(0,0), (b_-,v_-),(b_+,v_+)\in \partial \Sigma$. Since $b'>0$ in $\Sigma$, the unstable manifold cannot return to $(0,0)$ without exiting $\Sigma$. Meanwhile $(b_\pm, v_\pm)$ are both contained in $\alpha$.
\end{proof}

\begin{Remark}
If $s = 0$, the region $\Sigma$ is infinite and the equilibrium $(b_+,v_+)$ does not exist. Assuming $\mathcal W^u$ does not exit $\Sigma$, it is squeezed between $\alpha$ and $\beta$. So it must converge to the line $v = \theta$. In a sense, one may think of this situation as consistent with the proposition since the equilibrium $(b_+,v_+) \to \{v = \theta\}$ as $s\to 0$. For more detail, see the argument in Section \ref{s:th0}.
\end{Remark}

The local Proposition \ref{local} has consequences in the whole region $\Sigma$.

\begin{Proposition}[Relative Positioning of $\mathcal W^u$] \label{above}
Within the region $\Sigma$, the following hold:
\begin{enumerate}
\item For any fixed $s$ and any $\theta_1<\theta_2$, the global unstable manifold $\mathcal W^u_{\theta_1}$ lies above $\mathcal W^u_{\theta_2}$. \label{abovetheta}
\item For any fixed $\theta$ and any $s_1<s_2$, the global unstable manifold $\mathcal W^u_{s_1}$ lies below $\mathcal W^u_{s_2}$. \label{aboves}
\end{enumerate}
\end{Proposition}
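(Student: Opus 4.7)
The plan is to upgrade the local ordering from Proposition \ref{local} to a global ordering inside $\Sigma$ by a standard graph-comparison argument, exploiting the fact that $b' = v - sb > 0$ in the interior of $\Sigma$ (strictly, since $v \geq sb$ on $\gamma$ and we stay strictly above $\gamma$ by Proposition \ref{local}\,(i) together with invariance). Consequently, each unstable manifold, as long as it remains in $\Sigma$, can be parametrized as a graph $v = h(b)$ with $b$ increasing. The idea is then to compare the slopes
\[
h'(b) \;=\; \frac{v'}{b'} \;=\; \frac{-b^{2}(\theta - v) + b}{v - sb}
\]
along the graphs at a hypothetical first point of tangency.

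For part (i), fix $s$ and $\theta_1 < \theta_2$ and let $h_1, h_2$ denote the corresponding graphs of $\mathcal{W}^u$. By Proposition \ref{local}\,(ii) we have $h_1(b) > h_2(b)$ for all sufficiently small $b > 0$. Suppose toward contradiction that the two graphs meet at some first $b_* > 0$, with both points $(b_*, h_i(b_*))$ still in $\Sigma$. At $b_*$ we have $h_1(b_*) = h_2(b_*) =: v_*$ and, since $h_1 - h_2$ passes from positive values through $0$, necessarily $h_1'(b_*) \leq h_2'(b_*)$. However, the formula above, evaluated at the common point, yields
\[
h_i'(b_*) \;=\; \frac{-b_*^{2}(\theta_i - v_*) + b_*}{v_* - s b_*},
\]
where the denominator is the same positive number for $i = 1, 2$ and the numerator is strictly decreasing in $\theta_i$. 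Hence $h_1'(b_*) > h_2'(b_*)$, a contradiction.

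Part (ii) is entirely analogous, starting now from Proposition \ref{local}\,(iii): for fixed $\theta$ and $s_1 < s_2$ we have $h_{s_1} < h_{s_2}$ locally. At a hypothetical first crossing $b_*$, the ordering forces $h_{s_2}'(b_*) \leq h_{s_1}'(b_*)$. On the other hand, the numerator in the slope formula is the same positive quantity $-b_*^{2}(\theta - v_*) + b_*$ for both $i$ (positivity follows from $(b_*, v_*)$ lying in the interior of $\Sigma$, hence strictly above $\alpha$, so $v' > 0$), while the denominator $v_* - s_i b_*$ is strictly decreasing in $s_i$ and remains positive. Thus $h_{s_2}'(b_*) > h_{s_1}'(b_*)$, again a contradiction.

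The only real subtlety, and the step that requires a little care, is checking that the comparison is carried out at points where both graph descriptions and both signs (numerator and denominator of $h'$) are valid. This is precisely where the choice of the region $\Sigma$ pays off: in the interior of $\Sigma$ one has $v - sb > 0$ and $v > \theta - 1/b$, which make both the parametrization $v = h(b)$ and the sign of the numerator automatic. Any exit from $\Sigma$ (across $\alpha$ or $\beta$) only truncates the domain of comparison and does not affect the monotonicity argument; the ordering thus propagates from a neighborhood of $(0,0)$ throughout $\Sigma$, proving the proposition.
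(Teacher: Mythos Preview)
Your proof is correct and follows essentially the same approach as the paper: parametrize the unstable manifolds as graphs $v=h(b)$ using $b'>0$ in $\Sigma$, assume a first crossing, and derive a contradiction by comparing the slopes $h'=v'/b'$ of the two vector fields at the common point. The paper only writes out part~(i) and calls part~(ii) ``similar''; your explicit treatment of part~(ii), including the observation that the numerator is positive strictly above $\alpha$ so that the monotonicity of the denominator in $s$ gives the needed strict inequality, is a welcome addition and not something the paper spells out.
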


\begin{proof}
We prove part {\it(i)} and omit the similar proof of part \textit{(ii)}.

Since $b'>0$ in $\Sigma$, both unstable manifolds are functions of $b$ and so the only way for them to switch their relative positioning is by intersecting.
For the sake of contradiction, suppose there is an intersection $(b_*,v_*)\in \Sigma$.  Without loss of generality, we may assume that $(b_*,v_*)$ is the first such intersection. We know by Proposition \ref{local} that  $\mathcal W^u_{\theta_1}$ lies above $\mathcal W^u_{\theta_2}$ for all $b<b_*$. However, computing the vector fields at $(b_*,v_*)$ with each $\theta_i$ reveals that the one for $\theta_1$ has a larger slope. This contradicts the fact that, due to their relative positions, the $v$ value of $\mathcal W^u_{\theta_2}$ must be increasing at least as fast as that of $\mathcal W^u_{\theta_1}$ in order to have an intersection.
\end{proof}

\begin{Corollary}[Rectangular Subsets]\label{rect} \quad \\
\vspace{-.2 in}
\begin{enumerate}
\item If $(\theta_*,s_*)\in \Su$, then $(\theta,s)\in \Su$ for all $\theta < \theta_*$ and $s>s_*$.
\item If $(\theta_*,s_*)\in \Sd$, then $(\theta,s)\in \Su$ for all $\theta > \theta_*$ and $s<s_*$.
\end{enumerate}
\end{Corollary}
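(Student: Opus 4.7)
The claim reduces to Proposition \ref{above} (monotonicity of $\mathcal W^u$ in each of $\theta$ and $s$) together with the dichotomy Proposition \ref{dichotomy}. I treat part (i); part (ii) is entirely symmetric after reversing both orderings. The plan is to chain the two monotonicity statements to conclude that $\mathcal W^u_{\theta,s}$ lies above $\mathcal W^u_{\theta_*,s_*}$, and then leverage the fact that $\mathcal W^u_{\theta_*,s_*}$ is already known to exit $\Sigma_{\theta_*,s_*}$ through $\beta_{\theta_*}$ in order to force $\mathcal W^u_{\theta,s}$ to exit through $\beta_\theta$.

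\emph{Execution.} Fix $(\theta_*,s_*)\in\Su$ and $(\theta,s)$ with $\theta<\theta_*,\ s>s_*$. Applying Proposition \ref{above}(i) with $s$ held fixed and then (ii) with $\theta$ held fixed yields the chain $\mathcal W^u_{\theta,s}\geq \mathcal W^u_{\theta_*,s}\geq \mathcal W^u_{\theta_*,s_*}$, viewed as graphs $v=h(b)$. By Proposition \ref{dichotomy} it suffices to rule out $(\theta,s)\in\Sd$. Suppose for contradiction $(\theta,s)\in\Sd$, and let $(\tilde b,\theta-1/\tilde b)$ be the exit point of $\mathcal W^u_{\theta,s}$ through $\alpha_\theta$. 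The chain of inequalities gives $\theta - 1/\tilde b = h_{\theta,s}(\tilde b)\geq h_{\theta_*,s_*}(\tilde b)$, while the fact that $\mathcal W^u_{\theta_*,s_*}$ exits $\Sigma_{\theta_*,s_*}$ through $\beta_{\theta_*}$ and therefore never touches $\alpha_{\theta_*}$ gives the strict inequality $h_{\theta_*,s_*}(\tilde b) > \theta_* - 1/\tilde b$. Combining the two, $\theta > \theta_*$, contradicting the choice of $(\theta,s)$, so $(\theta,s)\in\Su$.

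\emph{Main obstacle.} The only delicate point is the domain of validity of the graph comparison: Proposition \ref{above} places both manifolds inside ``$\Sigma$'', but the regions $\Sigma_{\theta,s}$ and $\Sigma_{\theta_*,s_*}$ differ. Concretely, the argument above manipulates $h_{\theta_*,s_*}(\tilde b)$, which is only defined while $\mathcal W^u_{\theta_*,s_*}$ remains in $\Sigma_{\theta_*,s_*}$. If instead its $\beta_{\theta_*}$-exit occurs at some $b_*<\tilde b$, then for $b>b_*$ that trajectory enters the invariant region $\{v>\theta_*\}$ identified in the proof of Proposition \ref{dichotomy}, while $\mathcal W^u_{\theta,s}$ at $\tilde b$ still has $v<\theta\leq\theta_*$; the two curves must therefore have crossed at some $b_c<\tilde b$ inside $\Sigma_{\theta,s}\cap\Sigma_{\theta_*,s_*}$, and the slope-comparison used to prove Proposition \ref{above} rules this out. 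I expect this short case split, rather than any new idea, to be the only additional work beyond the propositions already available.
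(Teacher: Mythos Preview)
Your proposal is correct and rests on the same core ingredient as the paper's proof, namely the monotonicity of $\mathcal W^u$ from Proposition~\ref{above}. The difference is purely in the packaging. The paper argues in two direct steps: first change $\theta$ while holding $s=s_*$ fixed, and observe that since $\mathcal W^u_{\theta,s_*}$ lies above $\mathcal W^u_{\theta_*,s_*}$ and the latter already reaches the height $v=\theta_*$, the former certainly reaches the \emph{lower} height $v=\theta$, so $(\theta,s_*)\in\Su$; then change $s$ at fixed $\theta$ and repeat. Each step concludes $\Su$-membership directly, so no appeal to the dichotomy or to any contradiction is needed, and the domain-of-validity issue you flagged never arises, because one only needs to know that the higher manifold reaches at least the same height as the lower one did before exiting. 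Your route---chaining both monotonicities at once and then arguing by contradiction through $\Sd$ and $\alpha$---works, but it is what forces you into the case split on whether $\mathcal W^u_{\theta_*,s_*}$ has already exited through $\beta_{\theta_*}$ before $b=\tilde b$. That split is not wrong, just avoidable: the paper's stepwise ordering sidesteps the obstacle entirely.
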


\begin{proof}
Again, we prove part {\it(i)} and omit the similar proof of part \textit{(ii)}.

Suppose that $(\theta_*,s_*)\in \Su$ and that $\theta < \theta_*$. Fixing $s= s_*$ the proposition implies that $\mathcal W^u_{\theta}$ lies above $\mathcal W^u_{\theta_*}$ (even in the larger region $\Sigma_{\theta_*}$). Thus $\mathcal W^u_\theta$ intersects $\beta_{\theta_*} = \{v = \theta_*\}$. But since $\beta_\theta =  \{v = \theta\}$ is strictly below $\beta_{\theta_*}$, $\mathcal W^u_{\theta}$ must have also intersected $\beta_\theta$ and so $(\theta, s_*)\in \Su$. Now suppose that $s>s_*$ and fix $\theta$. The second part of Proposition \ref{above} implies that $\mathcal W^u_{s_*}$ lies below $\mathcal W^u_{s}$. Thus $\mathcal W^u_{s}$ must intersect $\beta_\theta$, so $(\theta,s)\in\Su$.
\end{proof}

This powerful statement immediately guarantees that $\Su, \Sd$ are path connected (use paths along edges of rectangles) and that the common boundary $\partial \mathcal S \defeq \partial \Su \cap \partial \Sd$ is non-decreasing.

\paragraph{The curve of lower-edge heteroclinics $s_{\lindex}(\theta)$.}
The next proposition implies that the common boundary $\partial \mathcal S = \partial \Su \cap \partial \Sd$ forms a curve of parameter values for which the system \eqref{e:tw} has a heteroclinic orbit that corresponds to the desired traveling wave of equation \eqref{e:rds}.
\begin{Proposition}\label{sup}
Fix an arbitrary $\theta_*$ and suppose that $\Sd\cap \{\theta = \theta_*\}$ is nonempty and bounded above. Let
\[ s_* = \sup_{s\geq 0} \left(\Sd\cap \{\theta = \theta_*\}\right). \]
Then $(b_+,v_+)\in \mathcal W^u_{(\theta_*,s_*)}$ and $s_*$ is the only $s$ value in $\Sd\cap \{\theta = \theta_*\}$ with this property.
\end{Proposition}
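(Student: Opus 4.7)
My plan is a shooting-style limit argument at the supremum, combining continuous dependence of solutions of \eqref{e:tw} on $s$ with the monotonicity of Proposition~\ref{above}. By hypothesis and Corollary~\ref{rect}, $\Sd\cap\{\theta=\theta_*\}\subseteq(0,s_*]$ while every $s>s_*$ lies in $\Su$; in particular any sequence $s_n\uparrow s_*$ lies in $\Sd$.

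\emph{No exit at $s_*$.} I would first rule out that $\mathcal W^u_{s_*}$ exits $\Sigma$ in forward time. On $\beta$ the flow is strictly outward since $v'=b>0$ when $v=\theta$, and on the open arc of $\alpha$ the vector field is purely horizontal with $b'>0$, while the tangent to $\alpha$ has slope $1/b^2>0$; hence any crossing of either $\alpha$ or $\beta$ is transversal. Transversal crossings persist under small perturbations of $s$: exit through $\beta$ would place a whole neighborhood of $s_*$ in $\Su$, contradicting $s_n\uparrow s_*$ in $\Sd$; transversal exit through $\alpha$ would place a neighborhood of $s_*$ in $\Sd$, contradicting $s>s_*\in\Su$. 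Thus $\mathcal W^u_{s_*}$ stays in $\bar\Sigma$ for all forward time, and since $b'>0$ in the interior of $\Sigma$ precludes closed orbits, the Poincar\'e--Bendixson Theorem forces convergence to one of the equilibria $(0,0)$, $(b_-,v_-)$, or $(b_+,v_+)$; monotonicity of $b$ along the orbit rules out $(0,0)$.

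\emph{Selecting $(b_+,v_+)$.} A direct Jacobian calculation, using the equilibrium relation $sb_\pm^2-\theta b_\pm+1=0$, gives $J_\pm=\bigl(\begin{smallmatrix}-s&1\\-1&b_\pm^2\end{smallmatrix}\bigr)$ with $\det J_\pm=2-\theta b_\pm$. Strictly below the saddle-node one checks $\theta b_+>2>\theta b_-$, so $(b_+,v_+)$ is a hyperbolic saddle while $(b_-,v_-)$ is either a node or a focus (never a saddle). If $(b_-,v_-)$ were a sink, its open basin of attraction would depend continuously on $s$, forcing $\mathcal W^u_s\to(b_-(s),v_-(s))$ for all $s$ in a neighborhood of $s_*$ and so placing a neighborhood of $s_*$ in $\Sd$, contradicting $s>s_*\in\Su$. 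If $(b_-,v_-)$ were a source, no forward orbit can converge to it. Therefore $\mathcal W^u_{s_*}$ converges to $(b_+,v_+)$ along a branch of its stable manifold, which gives the desired heteroclinic.

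\emph{Uniqueness.} Suppose for contradiction that some $s_\dagger<s_*$ in $\Sd\cap\{\theta=\theta_*\}$ also satisfies $(b_+(s_\dagger),v_+(s_\dagger))\in\mathcal W^u_{s_\dagger}$. For $s$ slightly greater than $s_\dagger$, Proposition~\ref{above} places $\mathcal W^u_s$ strictly above $\mathcal W^u_{s_\dagger}$, hence on the upper side of the stable manifold of the saddle at $(b_+(s_\dagger),v_+(s_\dagger))$; local hyperbolic-saddle dynamics then sweep the orbit along the emerging branch of the unstable manifold into the region $\{b>b_+(s_\dagger),\,v>v_+(s_\dagger)\}$. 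In that region both $b'>0$ and $v'>0$, and no equilibria are present, so the orbit escapes monotonically through $\beta$. This places some $s\in(s_\dagger,s_*)$ in $\Su$, contradicting Corollary~\ref{rect} which forces $s\in\Sd$ for $s<s_*$. The main obstacle is making this last global-tracking step rigorous: one needs $C^1$ continuous dependence of the invariant manifolds of the hyperbolic saddle on $s$, together with a careful verification that the post-saddle trajectory cleanly enters the ``escape'' region rather than being recaptured near $(b_-(s),v_-(s))$ or deflected back across $\alpha$.
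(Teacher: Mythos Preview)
Your existence argument follows essentially the same shooting strategy as the paper: rule out exit of $\mathcal W^u_{s_*}$ through $\beta$ (which would place a neighborhood of $s_*$ in $\Su$, contradicting $s_n\uparrow s_*$ in $\Sd$) and transversal exit through the open arc of $\alpha$ (which would place a neighborhood in $\Sd$, contradicting maximality), then invoke Poincar\'e--Bendixson. Your added Jacobian analysis at $(b_-,v_-)$ to exclude convergence there is more explicit than the paper, which simply appeals to continuity of the $\alpha$-intersection; both versions quietly leave the non-hyperbolic case on the Hopf curve unaddressed, but this is a minor technicality.

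Your uniqueness argument, however, is unnecessarily intricate and, as you yourself flag, incomplete: tracking $\mathcal W^u_s$ past the saddle, controlling which side of the stable manifold it lands on for the \emph{perturbed} flow, and then globally steering it through $\beta$ requires exactly the $C^1$ manifold dependence and post-saddle analysis you identify as obstacles. The paper avoids all of this by using Proposition~\ref{above} together with one elementary observation: as $s$ decreases at fixed $\theta$, the equilibrium $(b_+,v_+)$ moves \emph{up and to the right} along $\alpha$ (since $b_+$ increases and $v_+=\theta-1/b_+$). So if some $s_\dagger<s_*$ also had a heteroclinic to $(b_+,v_+)_{s_\dagger}$, then $\mathcal W^u_{s_\dagger}$ lies strictly below $\mathcal W^u_{s_*}$ as a graph over $b$, while its alleged terminus $(b_+,v_+)_{s_\dagger}$ lies strictly above the terminus $(b_+,v_+)_{s_*}$ of $\mathcal W^u_{s_*}$. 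Since $\mathcal W^u_{s_*}$ limits onto $\alpha$ at $b=(b_+)_{s_*}$, the lower graph $\mathcal W^u_{s_\dagger}$ must already have met $\alpha$ at some $b\le(b_+)_{s_*}<(b_+)_{s_\dagger}$, and therefore cannot reach $(b_+,v_+)_{s_\dagger}$ while remaining in $\Sigma$. No saddle tracking, no escape region, no $C^1$ dependence --- just the monotone comparison and the motion of $(b_+,v_+)$ along $\alpha$.
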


\begin{proof}
With $\theta_*$ and $s_*$ as above, suppose that $(b_+,v_+)\notin \mathcal W^u_{(\theta_*,s_*)}$. We treat two cases.

First, suppose that $(\theta_*,s_*)\in \Su$. There exists $ s < s_*$ arbitrarily close to $s_*$ with $(\theta_*,s)\in \Sd$. But this is impossible because the unstable manifold $\mathcal W^u_{s}$ is continuous in the parameter $s$.

Second, suppose that $(\theta_*,s_*)\in \Sd$. Then  $\mathcal W^u_{s_*}$ must intersect $\alpha$ at a point below $(b_+,v_+)$. Since the unstable manifold is continuous in parameters, there must be a $ s > s_*$ such that $\mathcal W^u_{s}$ intersects $\alpha$ near the intersection $\mathcal W^u_{s_*}\cap \alpha$. But now $(\theta_*,s) \in \Sd$, contradicting the maximality of $s_*$.

Now we show that $s_*$ is unique. Suppose that there is an $s\neq s_*$ such $(b_+,v_+)_s\in \mathcal W^u_{(\theta_*,s)}$. Clearly, $ (\theta_*,s)\in\Sd$, so we must know that $ s< s_*$. Then, by Proposition \ref{above}, $\mathcal W^u_{s}$ lies below $\mathcal W^u_{s_*}$. However, $(b_+,v_+)_s$ is above $(b_+,v_+)_{s_*}$, so $\mathcal W^u_{s}$ could never reach $(b_+,v_+)_s$.
\end{proof}

Corollary \ref{rect} implies that $\partial\mathcal S$ is non-decreasing. We parameterize $\partial \mathcal S$ by a function $s_{\lindex}(\theta)$ so that $\partial \mathcal S = \text{graph } s_{\lindex}$. The continuity of $\mathcal W^u$ with respect to parameters implies that $s_{\lindex}(\theta)$ is continuous. All that remains is to characterize the location and some properties of this boundary curve by describing the sets $\Su$ and $\Sd$.

\begin{Proposition}[Location and Shape of $\partial \mathcal S$] \label{curve}
In the first quadrant of the $(\theta,s)-$plane, we have:
\begin{enumerate}
\item The subset $\{s>\theta^2/4\}\subseteq \Su$. So the intersection of Proposition \ref{sup} is always bounded above.
\item The subset $\{ s = 0, \theta\leq1\}\subseteq \Su$. So the intersection of Proposition \ref{sup} is empty for $\theta_*\leq 1$.
\item The subset $\{ s = 0, \theta\geq2\}\subseteq \Sd$. So the intersection of Proposition \ref{sup} is nonempty for $\theta_*\geq2$.
\item The subset $\{ s = \theta^2/4, \theta>2^{5/4}\}\subseteq \Sd$. In fact, the saddle-node $(b_+,v_+)\in \mathcal W^u$ for these parameter values.
\end{enumerate}
\end{Proposition}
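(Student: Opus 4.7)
My plan is to verify the four claims by direct phase-plane arguments, in each case locating the unstable manifold $\mathcal W^u$ relative to an explicitly constructed barrier inside $\Sigma$. All four arguments rely on Proposition \ref{local} for the starting data of $\mathcal W^u$ at the origin, a sign check of the vector field on $\partial\Sigma$, and an appeal to Poincar\'e--Bendixson inside $\Sigma$. For part \textit{(i)}, above the saddle-node curve the quadratic $sb^2-b\theta+1$ has negative discriminant $\theta^2-4s$, so $\alpha$ and $\gamma$ do not intersect for $b>0$, and a sign check as $b\to 0^+$ places $\alpha$ strictly below $\gamma$ in the first quadrant. On $\gamma=\{v=sb\}$ one has $v'=b(sb^2-b\theta+1)>0$, so the vector field points vertically upward; together with Proposition \ref{local}\textit{(i)} this traps $\mathcal W^u$ above $\gamma$. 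Since $\alpha$ is unavailable as an exit and no equilibrium lies in the interior of $\Sigma$ in this regime, the trajectory must leave $\Sigma$ through $\beta$.

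For parts \textit{(ii)} and \textit{(iii)} I specialize to $s=0$, where the system admits the Lyapunov-like function $H(b,v)=\tfrac12 v^2-\tfrac12 b^2+\tfrac{\theta}{3}b^3$ with $\dot H=b^2v^2\geq 0$; expanding the unstable manifold as $v=b-\tfrac{\theta}{3}b^2+O(b^3)$ gives $H=O(b^4)\to 0$ at the origin and hence $H\geq 0$ along the trajectory. In \textit{(ii)} the constraint $\theta\leq 1$ places $\alpha$ in the first quadrant only for $b\geq 1/\theta\geq 1\geq\theta$, and a slope comparison on the line $v=b$ (where $v'/b'-1=b(b-\theta)$ has the same sign as $b-\theta$) shows $\mathcal W^u$ cannot descend to $\alpha$ on its range of availability; the Remark following Proposition \ref{dichotomy} then forces convergence to $\beta$. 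In \textit{(iii)} the same slope comparison traps $\mathcal W^u$ below $v=b$ on $[0,\theta]$, and for $\theta\geq 2$ a quantitative estimate combining this bound with the dominance of the $-b^2\theta$ term in $v'$ should force $\mathcal W^u$ to descend below $\alpha$ before reaching $\beta$, placing $(\theta,0)\in\Sd$.

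Part \textit{(iv)} uses the same trapping idea on the saddle-node curve $s=\theta^2/4$, where the quadratic factors as $s(b-2/\theta)^2$ and $v'\geq 0$ on $\gamma$ with a unique zero at the saddle-node point $(b_*,v_*)=(2/\theta,\theta/2)$, which is also the unique point of tangency of $\gamma$ with $\alpha$. Thus $\mathcal W^u$ stays above $\gamma$ and Poincar\'e--Bendixson leaves only two possibilities: exit through $\beta$, or convergence to $(b_*,v_*)\in\alpha$. The main obstacle is ruling out $\beta$-exit precisely when $\theta>2^{5/4}$: I expect to do this by constructing an auxiliary linear barrier $v=ab$ with $s<a<1$, calibrated simultaneously to lie above the initial eigenvector slope $(s+\sqrt{s^2+4})/2$ and to stay below $\beta$ for all $b\leq b_*$; the critical value $\theta=2^{5/4}$ should emerge as the value at which these two constraints first become compatible, matching the trace $(16-\theta^4)/(4\theta^2)=-1/\sqrt{2}$ of the linearization at the saddle-node. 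Verifying this calibration, and closing the $\theta\geq 2$ estimate in \textit{(iii)} where the Lyapunov bound is weak for large $b$, are the most delicate steps.
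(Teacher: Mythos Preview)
Your argument for \textit{(i)} is correct and matches the paper's. The main gap is in \textit{(iv)}. First, the numerical constraint is inconsistent: on the saddle-node curve with $\theta>2^{5/4}$ one has $s=\theta^2/4>\sqrt{2}>1$, so the range $s<a<1$ is empty, and in any case the unstable eigenvector slope $(s+\sqrt{s^2+4})/2$ exceeds $s>1$. More seriously, a line $v=ab$ through the origin cannot close off the region: since $a>s$, the wedge $\{sb<v<ab\}$ is open on the right, $b'>0$ throughout, and the saddle-node point $(b_*,v_*)=(2/\theta,\theta/2)$ sits on its lower edge $\gamma$, not at a tip. A trajectory trapped in this wedge can pass $b_*$ and still reach $\beta$ for $b\in(\theta/a,\theta/s)$; nothing in your barrier forces convergence to $(b_*,v_*)$. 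The paper's device is an \emph{affine} barrier $\mathcal L_2=\{v=\tfrac{\theta^2}{8}b+\tfrac{\theta}{4}\}$ that passes through $(b_*,v_*)$, so $\gamma$ and $\mathcal L_2$ together bound a lens on $[0,b_*]$ with the saddle-node at its apex; $\mathcal W^u$ starts in this lens because $\mathcal L_2(0)=\theta/4>0$. Writing $u=\theta b$, the slope inequality on $\mathcal L_2$ reduces to $(u-2)(8u^2-32u+\theta^4)<0$ on $(0,2)$, and the quadratic factor is positive for all $u$ precisely when its discriminant $32(32-\theta^4)$ is negative, i.e.\ $\theta>2^{5/4}$. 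That is where the threshold comes from, not from any eigenvalue-matching condition at the saddle-node.

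For \textit{(ii)} and \textit{(iii)} your Lyapunov function $H$ with $\dot H=b^2v^2$ is a nice observation, but the argument as written does not close. In \textit{(ii)}, $\mathcal W^u$ starts \emph{below} the line $v=b$ (since $h_2=-\theta/3<0$), so the comparison on $v=b$ does not separate it from $\alpha$; the bound $H\geq 0$ alone does not exclude hitting $\alpha$ at points where $H|_\alpha\geq 0$. The paper instead verifies a single parameter value and invokes Corollary~\ref{rect}: at $(\theta,s)=(1,0)$ the barrier $\mathcal L=\{v=\tfrac12 b\}$ has vector-field slope $M=b^2-2b+2>\tfrac12$ everywhere, and $\alpha$ lies strictly below $\mathcal L$, so $\mathcal W^u$ cannot reach $\alpha$. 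For \textit{(iii)}, at $(\theta,s)=(2,0)$ the line $v=b$ is \emph{tangent} to $\alpha$ at $(1,1)$, so $\mathcal W^u$ (trapped below $v=b$ on $(0,1)$ by your slope comparison) is pinched onto $\alpha$ there; this tangency, rather than a quantitative estimate, is what forces the intersection.
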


\begin{proof}
We use the same construction for the proof of each statement, which are all similar to the earlier proofs of the nonemptiness of $\Sd, \Su$. Define
\[
\begin{array}{r l c l}
\mathcal L & \defeq \{ v = cb\} & & \text{for any slope } 0<c\leq 1,\\
M & \defeq \displaystyle\frac{-b^2(\theta-cb)+b}{cb-sb} & & \text{the slope of the vector field \eqref{e:tw} on } \mathcal L.
\end{array}
\]
For given parameter values $(\theta,s)$, we choose $c$ such that $\mathcal W^u$ is above (or below, as appropriate) $\mathcal L$ in a neighborhood of $(b,v) = (0,0)$. Such choices of $c$ may be determined by examining the expansion of $\mathcal W^u$ as computed in Proposition \ref{local}. Then, we may compare $c$ and $M$ to ensure that $\mathcal W^u$ stays above (or below) $\mathcal L$.

For {\it (i)}, let $\theta>0$ and $s>\theta^2/4$. Choose $c = s$, so $\mathcal L = \gamma$, and note that $\mathcal W^u$ is above $\mathcal L$ in a neighborhood of $(0,0)$. Now $M$ is undefined because near $\mathcal L$ the slope of the vector field increases to $\infty$. Thus $\mathcal W^u$ is trapped above $\mathcal L$. Since $(\theta,s)$ lies above the saddle-node curve, $\alpha$ lies strictly below $\gamma = \mathcal L$ and so $(\theta,s)\in \Su$.

For {\it (ii)}, let $s = 0$ and $\theta = 1$. Choose $c = \tfrac{1}{2}$ and note that $\mathcal W^u$ is above $\mathcal L$ a neighborhood of $(0,0)$. For all $b>0$, we have $M > c$, so $\mathcal W^u$ cannot cross $\mathcal L$ from above. Note that $\alpha$ is below $\mathcal L$ and so $(1,0)\in \Su$. Corollary \ref{rect} finishes the proof.

For {\it (iii)}, let $s=0$ and $\theta=2$. Choose $c = 1$ and note that $\mathcal W^u$ is below $\mathcal L$ in a neighborhood of $(0,0)$. Now, for $b<1$ we have $M<1=c$, so $\mathcal W^u$ cannot cross $\mathcal L$ from below. Note that $\alpha$ is below $\mathcal L$, except for a single intersection $(b,v) = (1,1)$. Thus $\mathcal W^u$ must intersect $\alpha$ and so $(2,0)\in \Sd$. Corollary \ref{rect} finishes the proof.

For {\it (iv)}, let $\theta > 2^{5/4}$ and $s = \theta^2/4$. Choose $c = s = \theta^2/4$. As in the proof of {\it (i)}, $\mathcal W^u$ is trapped above $\mathcal L$. Now let
$\mathcal L_2 \defeq \{ v = \tfrac{\theta^2}{8}b+\tfrac{\theta}{4} \}$ and $M$ be the slope of the vector field on $\mathcal L_2$. Clearly, $\mathcal W^u$ is below $\mathcal L_2$ in a neighborhood of $(0,0)$ and a brief computation shows that $M<\tfrac{\theta^2}{8}$, the slope of $\mathcal L_2$, for $b<\tfrac{2}{\theta}=b_+$. Thus, we've shown that $\mathcal W^u$ is trapped between $\mathcal L$ and $\mathcal L_2$ and so it must converge to the only equilibrium in the region, which is $(b_+,v_+)$.
\end{proof}

To summarize, we have shown that the curve starts on the line $\{s=0\}$ at some $1<\theta_0\leq 2$. We've shown that the curve is defined for all $\theta\geq 2$, and it is well-defined by construction. Finally, we showed that the curve lands on and joins the saddle-node starting at some $\theta < 2^{5/4}$; see Figure \ref{f:simple_bif}.

\subsubsection{Outline of Proof of Theorem \ref{t:u}}

Our proof of the existence of the curve $s_u(\theta)$ of upper edge traveling waves uses the same ideas as the proof above. These traveling waves correspond to heteroclinic orbits $(b_+,v_+) \to (0,0)$ of the ODE \eqref{e:tw}. The first difference is that we choose new coordinates in which to do the phase plane analysis. We apply the coordinate transformation $(b,v) \mapsto (b, u) = (b, v-sb)$ to arrive at the ODE
\begin{align}
 b'&=u,\notag\\
 u'&=-b^2(\theta-u-sb)+b-su.\label{e:twu}
\end{align}
Note that this transformation shifts the important nullcline $\gamma$ to the line $\{u=0\}$, so the rest of the analysis occurs in the lower half plane $\{u<0\}$. By examining the local stable manifold $\mathcal W^s$ of $(b,u) = (0,0)$, we obtain local results on the positioning of $\mathcal W^s$ with respect to the parameters $(\theta,s)$, which are precisely reversed from the lower edge proof. We extend this local monotonicity to a statement about the relative positions of the global unstable manifolds $\mathcal W^s_{(\theta,s)}$, for various $(\theta, s)$, within a large semi-invariant region 
\[ 
\Sigma' \defeq \left \{ (b,u) \mid 0\leq b\leq b_+, \tfrac{-1}{\eps}b< u < 0\right \},  \quad \text{for small } \eps>0 \text{ depending on } s.
\]
The region $\Sigma'$ has only two entrance sets that, when considered in backwards time, correspond to the exit sets $\alpha$ and $\beta$ in the proof above. We use these entrance sets to characterize a dichotomy in the $(\theta,s)-$plane,
\begin{align*}
\mathcal S_R	&\defeq	\left\{(\theta, s) \mid \mathcal W^s\cap \{b_-\leq b\leq b_+, u=0\} \neq \emptyset\right \} \\
\mathcal S_U    &\defeq	\left\{(\theta, s) \mid \mathcal W^s\cap \{b=b_+,\tfrac{-1}{\eps}b_+< u< 0\} \neq \emptyset \right\}.
\end{align*}
Here, the set $\mathcal S_R$ plays the role of $\Su$ above while the set $\mathcal S_U$ plays the role of $\Sd$ above. Just as above, these sets have an analogous ``rectangular subsets" property due to the relative positioning of unstable manifolds within $\Sigma'$. Again, one has $\{s>\theta^2/4\}\subseteq \mathcal S_R$ but this time, for any $\theta$, one shows that $(\theta,s)\in \mathcal S_U$ for arbitrarily small $s$. Thus both sets are nonempty and $\mathcal S_U$ is bounded above for each fixed $\theta_*$. Setting 
\[
s_* = \sup_{s\geq 0} \left (\Su\cap \{\theta = \theta_*\} \right),
\]
we have $(b_+,0)\in \mathcal W^s_{(\theta_*,s_*)}$. The asymptotic results appear in the next section.

\subsection{Large $\theta$}\label{s:thinf}
We study the existence of upper and lower edge heteroclinics, and of homoclinics, in the limit $\theta\to\infty$. In each case, we introduce a suitable scaling which allow us to construct heteroclinic and homoclinic orbits based on simple transversality arguments. 

\paragraph{Vegetation bands.} 
We scale $\theta=1/\eps$, $v=\eps\tilde{v}$, $b=\beta$, which gives 
\begin{align}
 \beta'&=\eta - s\beta,\nonumber\\
 \eta'&=-(1-\eps^2\eta)\beta^2+\beta.\label{e:scal1}
\end{align}
At $s=\eps=0$, we find a homoclinic $\beta_*(x)=\frac{3}{2}\mathrm{sech}\,^2(x)$ to the origin from explicitly solving $\beta''-\beta+\beta^2=0$. We now follow standard Melnikov theory along homoclinics, as laid out for instance in \cite{chh}. One writes \eqref{e:scal1} as an equation $F(\beta,\eta;s,\eps)=0$, defined as a smooth map from $H^1(\R,\R^2)\times\R^2$ into $L^2(\R,\R^2)$. The linearization at the homoclinic, $s=\eps=0$, is Fredholm of index 0 and possesses a one-dimensional cokernel, given through the unique (up to scalar multiples) solution to the adjoint equation, $\psi=(-\beta_*'',\beta_*')^T$. One also computes the derivative of $F$ with respect to $s$ and $\eps^2$ at the homoclinic, which gives,
\[
 \partial_s F=(-\beta_*,0)^T,\qquad \partial_{\eps^2}F=(0,\beta_*'\beta_*^2)^T.
\]
After Lyapunov-Schmidt reduction, we find the leading-order equation
\[
 \langle \partial_s F,\psi\rangle s+\langle\partial_{\eps^2}F,\psi\rangle \eps^2+\rmO\left(|s|^2+\eps^4))\right)=0,
\]
where $\langle \cdot,\cdot\rangle $ denotes the $L^2$-inner product. Evaluating the relevant integrals, one readily finds at leading order
$-\frac{6}{5}s+\frac{36}{35}\eps^2=0,$ hence 
\[
  s=\frac{6}{7}\theta^{-2}+\rmO(\theta^{-4}).                                              
\]

\paragraph{Upper edge.}
We set $\theta=\eps^{-3}$, $s=\sigma \eps^{-2}$, $b=\beta \eps^{-1}$,  $v=\eta\eps^{-3}$, and $x=\eps^2 y$, and  obtain 
\begin{align}
 \beta_y&=\eta - \sigma \beta\\
 \eta_y&=-(1-\eta)\beta^2 + \eps^4\beta.
\end{align}
This planar system possesses equilibria $\eta=\beta=0$ and $\eta=1,\beta=\sigma^{-1}$. Linearizing at the origin reveals a one-dimensional stable subspace spanned by $(1,0)^T$ and a one-dimensional center subspace spanned by $(1,\sigma)^T$. In the one-dimensional corresponding center manifold, the flow is given to leading order by $\eta_y=-\sigma^{-2}\eta^2$, such that the equilibrium is stable in the positive quadrant. The other equilibrium is a saddle. Elementary phase plane analysis similar to the analysis for finite speed, presented above, reveals the existence of a unique value $\sigma_*>0$ for which the system possesses a heteroclinic orbit connecting the unstable manifold of the non-trivial equilibrium and the strong stable manifold of the origin. Moreover, this intersection is transverse in the parameter $\sigma$. Perturbations in $\eps$ unfold the saddle-node in a transcritical bifurcation with an equilibrium bifurcating into the positive quadrant. There hence exists $\sigma=\sigma_*+\rmO(\eps^4)$ for which a connecting orbit between the equilibrium $\beta=\sigma^{-1}+\rmO(\eps^4)$ and the origin exists. Scaling back gives 
\[
 s=s_\infty\theta^{2/3}+\rmO(\theta^{-2/3}).
\]
Numerically, we find $s_\infty\sim 0.9055$. 

\paragraph{Lower edge.}
Here, we scale $\theta=\eps^{-1}$, $s=\sigma \eps^{-2}$, $v=\eta \eps^{-1}$, $b=\beta\eps$, $x=\eps^2 y$, to find 
\begin{align}
 \beta_y&=\eta-\sigma\beta\nonumber\\
 \eta_y&=\eps^4\left(-(1-\eta)\beta^2+\beta\right).
\end{align}
This slow-fast system possesses a one-dimensional slow manifold \cite{fen} $\beta=\sigma \eta+\rmO(\eps^4)$, with reduced slow flow, projected on the $\eta$-axis, 
\[
 \eta_y=\eps^4\left(-\sigma^{-2}(1-\eta)\eta^2+\sigma^{-1}\eta\right).
\]
In this cubic nonlinearity, a heteroclinic connecting the left-most equilibrium $\eta=0$ and the right-most equilibrium exists precisely when the right-most equilibrium is double, for $\sigma=1/4$, that is, on the saddle-node curve where $s=\theta^2/4$.

\subsection{Heteroclinic limits --- small $s$}\label{s:th0}
We study small-speed limits of upper and lower edge homoclinics.
\paragraph{Upper edge.} We scale $\theta=\eps$ and set $s=\sigma \eps^2$, which gives 
\begin{align}
 b'&=v-\sigma\eps^2 b,\nonumber\\
 v'&=-(\eps-v)b^2+b.
\end{align}
At $\eps=0$, we find the simple system 
\begin{align}
 b'&=v,\nonumber\\
 v'&=vb^2+b.\label{e:bch}
\end{align}
Setting $b=1/\beta$, thus compactifying the plane in the $b$-direction, we find
\begin{align}
 \beta_y&=-\beta^4v,\nonumber\\
 v_y&=v+\beta,\label{e:betch}
\end{align}
where we used a nonlinear rescaling of time $\beta^2 \partial_x=\partial_y$. Equations \eqref{e:bch} and \eqref{e:betch} together define a flow on $v\in\R$, $b\leq1$, $\beta\geq 1$, patching continuously at $b=\beta=1$. There are precisely two equilibria $b=v=0$, a saddle, and $\beta=v=0$, with a one-dimensional strong unstable manifold and a one-dimensional center manifold, tangent to $v=-\beta$, with leading-order flow $\beta_y=\beta^3$. One readily establishes the existence of a connecting orbit by continuing the stable manifold of the origin into $b>0$ by flowing backward, and exploiting that $\beta=v=0$ is asymptotically stable in backward time within $\beta\geq 0$. Perturbing in $\eps$, two equilibria bifurcate within the center manifold. By continuity, the stable manifold always connects to the left-most non-trivial equilibrium, such that we can have connecting orbits to the right-most equilibrium only when it coincides with the middle equilibrium, for $s=4\theta^2$. 

\paragraph{Lower edge.} We complement the system 
\begin{align}
 b'&=v-sb\nonumber\\
 v'&=-(\theta-v)b^2+b,\label{e:1}
\end{align}
with the compactification $\beta=1/b$, 
\begin{align}
 \beta_y&=-\beta^4v+s\beta^2\nonumber\\
 v_y&=-(\theta -v)+\beta,\label{e:2}
\end{align}
after a nonlinear rescaling of time $\beta^2\partial_x=\partial_y$. Similar to the previous limit, \eqref{e:1} and \eqref{e:2} together define a flow in $b\leq 1$, $\beta\geq 1$, with trivial gluing at $b=\beta=1$. At $s=0$, the system possesses three equilibria. The origin is a saddle, $\beta=\theta,v=0$ is totally unstable, and $\beta=0$, $v=\theta$ possesses a one-dimensional strong unstable manifold within $\beta=0$, and a center manifold tangent to $v=\theta-\beta$, with local flow 
\[
 \beta_y=-\beta^4\theta+s\beta^2+\rmO(\beta^5)+\rmO(s^2).
\]
For $\theta>0$, the equilibrium $\beta=0$ therefore possesses a one-dimensional stable manifold in $\beta\leq 0$, given by the center manifold. Using the comparison techniques used above, one establishes the existence of a heteroclinic orbit connecting the origin and the equilibrium in $\beta=0$ for some $\theta_0>0$. One also shows that the heteroclinic is transversely unfolded in $\theta$. For $s>0$, small, an equilibrium $\beta=\sqrt{s}$ bifurcates within the center manifold, with one-dimensional stable manifold given by the stable manifold, hence smoothly depending on $s$ and $\theta$. As a consequence, the heteroclinic orbit persists for finite $s$ as a heteroclinic to the finite right-most equilibrium.

\section{Discussion}\label{s:d}

We presented a simplistic model for the conversion of nutrients to biomass in the presence of advection. We analyzed traveling-wave solutions and explained an analogy with viscous scalar conservation laws. We now briefly discuss generalizations and possible extensions. 

\paragraph{Threshold conversion.} One can easily envision other applications, where an ingredient $w$ is converted into a product $b$, with rate function $f(b,w)$. The ingredient  $w$ is ``supplied'' through a constant-speed mean flow, and the product $b$ simply diffuses. Our rate function illustrates a threshold behavior in $f$, where curves $\Gamma=\{b(s),w(s)|s\in]\R\}$ of equilibrium concentrations, $f(b(s),w(s))=0$, are not monotone in $b$. In other words, increasing the concentration of the ingredient $w$ at equilibrium may not result in a continuous change of $b$. This lack of monotonicity is at the origin of both sideband instabilities and the existence of undercompressive shocks. We believe that the methods here generalize to a much larger class of rate functions $f$, while details of the bifurcation diagram will of course vary. It is worth noticing that much of the information on group velocities and instabilities is contained in the geometry of the equilibrium curve, as made explicit in Remark \ref{r:geom}. 

\paragraph{Wavenumber selection.} Our model does not contain Turing instabilities in the sense that at onset of an instability, the fastest growing Fourier mode of the linearization would be nonzero. The only instabilities present are sideband instabilities, leading to notoriously complex dynamics. We notice however that, similar to the case of diffusive transport of water discussed below, invasion processes do select distinct wavenumbers in their wake; see Figure \ref{f:d}. For most parameter values, those wavenumbers can be predicted from a linear analysis \cite{vansaarloos,holzerscheel}. We did not perform a systematic study here of this selection mechanism, but refer to \cite{sherratt1} for a scenario where wavenumber selection in the presence of invasion can possibly yield information on the origin of banded patterns.

\paragraph{Advective transport versus diffusion.} The case when the ingredient $w$ is diffusing rather than advected is in many ways much simpler,
\begin{equation}\label{e:rdd}
 b_t=b_{xx}+f(b,w),\qquad w_t=dw_{xx}-f(b,w).
\end{equation}
Stationary solutions can be obtained by noticing that $b+dw\equiv \theta$, and then solving the scalar equation $b_{xx}+f(b,\theta-dw)=0$. Equations of this type have been studied in many contexts 
\cite{gohmesuroscheel2,gohmesuroscheel,morita,morikeshet,poganspikes}, showing that in many cases stationary layers are the key ingredient. In particular, for threshold-type kinetics as described above, with for instance $f(b,w)=b(1-b)(b-a)-\gamma w$, the system is equivalent to the phase-field system and possesses a Lyapunov function, provided $d>1$ \cite{gohmesuroscheel2}. When $d<1$, traveling fronts bifurcate from the stationary layer solutions \cite{poganbif}. For $d>1$, one observes slow coarsening of layers and only stable solutions (energy minimizers) are single layer solutions or constants \cite{poganscheelzumbrun}. Quite analogous to our system, \eqref{e:rdd} does not possess an inherent wavenumber selection mechanism, that is, a fastest-growing mode analysis shows selected wavenumbers close to zero near onset \cite{kotzagiannidis}. Invasion fronts do however exhibit predictable wavenumber selection mechanisms \cite{gohmesuroscheel2,kotzagiannidis,scheelstevens}, yielding phenomena similar to Figure \ref{f:d}. It is worth noticing that linear stability information is qualitatively contained in information on the geometry of the curves of equilibria, in analogy to Remark \ref{r:geom} on the sign of group velocities, here; see \cite{gohmesuroscheel2}.

\paragraph{Sideband instabilities, breakup, and scale-free patterns.} 
One of the robust predictions here is the occurrence of sideband instabilities prior to a saddle-node bifurcation in which the vegetation state disappears; see Section \ref{s:ode}. Near sideband instabilities, one expects a description of spatio-temporal dynamics in terms of Kuramoto-Sivashinsky-like equations, with additional third-order dispersion, which in turn tend to exhibit spatio-temporally chaotic, sustained dynamics. This correlates well with the observation of disorganized, ``scalefree'' patterns near the edges of banded zones; see for instance \cite{scalefree} for a discussion of observations and mechanisms for scalefree vegetation patterns.

\paragraph{Stability.}
The next most natural question would appear to be for stability of the traveling waves found here. While the elementary phase-plane analysis insinuates that stability questions might be accessible, we did not attempt such a study. Numerically, we did not see instabilities of upper and lower edges (undercompressive shocks), except for instabilities in the Lax case, when one of the asymptotic states undergoes a sideband instability. Beyond analytical results, it would be interesting to add, even numerically, stability information to the rather comprehensive numerical diagram established here. We suspect that stability of periodic traveling waves will reveal a plethora of instability mechanisms that might guide through qualitative transitions between patterns. 

\paragraph{Beyond mass conservation.} As we emphasized early on, the present model is to be understood as a building block for more realistic and complex models. Adding source terms, such as evaporation and rain fall promises yet more complexity. In the case of simple diffusion, the effect of source terms on models with conservation laws was studied in \cite{morita2,champneys,champ2}, revealing in particular the presence of localized patches of periodic structures. 

\paragraph{Two space-dimensions and topography.} Comparing with natural patterns, an important next step will be an analysis in two dimensions. For instance, including the stability of patterns found here with respect to two-dimensional perturbations and the existence and properties of banded patterns not aligned with level sets, functions of $n_x x + n_y y$, with $|(n_x,n_y)|=1$. In fact, much of the analysis here can be adapted to this situation in a straightforward fashion. One would then wish to explain phenomena such as the alignment of bands perpendicular to the slope, or the deformation of bands in non-uniform slopes. 

\small


%
%
%
%

\end{document}